\begin{document}
\newtheorem{corollary}{Corollary}[section]
\newtheorem{remark}{Remark}[section]
\newtheorem{definition}{Definition}[section]
\newtheorem{theorem}{Theorem}[section]
\newtheorem{proposition}{Proposition}[section]
\newtheorem{lemma}{Lemma}[section]
\newtheorem{help1}{Example}[section]
\newenvironment{proof}[1][Proof]{\textit{#1:} }{ $\square$}
\renewcommand{\theequation}{\arabic{section}.\arabic{equation}}

\newcommand{\tl}{\textlatin}
\newcommand{\no}{\noindent}

%%%%%%%%%%%%%%%%%%%%%%%%%%%%
\newcommand{\bbR}{\mathbb{R}}

\newcommand{\bbN}{\mathbb{N}}
\newcommand{\bbC}{\mathbb{C}}
\newcommand{\bbZ}{\mathbb{Z}}

%%%%%%%%%%%%%%%%%%%%%%%%%%%%%%%%%%%%%%%%%%%%%%%%%%%%%%%%%
\newcommand{\bfR}{\mathbf{R}}
\newcommand{\bfE}{\mathbf{E}}
\newcommand{\bfC}{\mathbf{C}}
%%%%%%%%%%%%%%%%%%%%%%%%%%%%%%%%%%%%%%%%%%%%%%%%%%%%%%%%%%%%%%%%%%%%%%%%%
\newcommand{\mcL}{\mathcal{L}}
\newcommand{\mcU}{\mathcal{U}}
\newcommand{\mcA}{\mathcal{A}}
\newcommand{\mcR}{\mathcal{R}}
\newcommand{\ran}{\mathrm{ran}}
\newcommand{\mcV}{\mathcal{V}}
\newcommand{\mcB}{\mathcal{AB}}   
\newcommand{\dom}{\mathrm{dom}}
\newcommand\fh{{\mathfrak{H}}}
\newcommand\ch{{\mathcal{H}}}
\newcommand\ci{{\mathcal{I}}}
\newcommand\cb{{\mathcal{B}}}
\newcommand\cj{{\mathcal{J}}}
\newcommand\ck{{\mathcal{K}}}
\newcommand\cd{{\mathcal{D}}}
\newcommand\cw{{\mathcal{W}}}
\newcommand\ct{{\mathcal{T}}}
\newcommand\cx{{\mathcal{X}}}
\newcommand\cf{{\mathcal{F}}}
\newcommand\cy{{\mathcal{Y}}}

\newcommand\cm{{\mathcal {M}}}
\newcommand\bbc{{\mathbb{C}}}
\newcommand\bfh{{\mathbf{B}(\mathfrak {H}})}

\newcommand\hilbert{{\tl {Hilbert}\;}}
\newcommand\inpr{{\langle\cdot, \cdot\rangle }}

\title{Dynamical transitions between equilibria in a dissipative Klein-Gordon lattice
%The energy landscape defined by convergence to equilibrium for the dissipative discrete Klein-Gordon equation
}
\author{D. J. Frantzeskakis}
\affiliation{Department of Physics, University of Athens, Panepistimiopolis, Zografos, Athens 15784, Greece}
\author{N. I. Karachalios}
\affiliation{Department of Mathematics, University of the Aegean, Karlovassi, 83200 Samos, Greece}
\author{P. G. Kevrekidis}
\affiliation{Department of Mathematics and Statistics, University of Massachusetts, Amherst MA 01003-4515, USA}
\author{V. Koukouloyannis}
\affiliation{Department of Mathematics, Statistics and Physics, College of Arts and
Sciences, Qatar University, P. O. Box 2713, Doha, Qatar}
\author{K. Vetas}
\affiliation{Department of Mathematics, University of the Aegean, Karlovassi, 83200 Samos, Greece}
\begin{abstract}
  We consider the energy landscape of a dissipative Klein-Gordon lattice with
  a $\phi^4$ on-site potential. Our analysis is
  based on suitable energy arguments, combined with a discrete version of the  \L{}ojasiewicz inequality, in order to justify the convergence
to a single, nontrivial equilibrium for all initial configurations of the lattice. Then, global bifurcation theory is explored, to illustrate that
in the discrete regime  all linear states lead to nonlinear
generalizations of equilibrium states. Direct numerical simulations reveal the rich structure of the equilibrium set, consisting of non-trivial topological (kink-shaped) interpolations between the adjacent minima of the on-site potential, and the wealth of dynamical convergence possibilities. 
These dynamical evolution results also
%Importantly, the 
%Direct numerical simulations 
%of the evolution of the system 
provide insight on the potential stability of the equilibrium branches, 
and glimpses of the emerging global bifurcation structure, elucidating the role of the interplay between discreteness, nonlinearity and dissipation. 

\end{abstract}

\maketitle

\section{Introduction}
\setcounter{equation}{0}

In the present paper, we investigate the dynamics of 
the following discrete Klein-Gordon equation (DKG) model:
%The discrete Klein-Gordon equation (DKG)
\begin{eqnarray}
	\label{eq0}
	\ddot{U}_n-k(U_{n+1}-2U_n+U_{n-1})+ \delta \dot{U}_n = \omega_d^2 (U_{n}-\beta U_{n}^3) ,\;\;\beta,\;\delta>0.
\end{eqnarray}
In Eq.~(\ref{eq0}), the real valued function $U_n(t)$ is the unknown displacement of the oscillator occupying the lattice site $n$, and $k=h^{-2}$ denotes the discretization parameter,
with $h$ playing the role of the lattice spacing (controlling
the distance from the continuum limit of $h \rightarrow 0$). The chain incorporates linear dissipation of strength $\delta>0$, while the parameter $\beta>0$, stands for the strength of the cubic nonlinear term.  Equation~(\ref{eq0}) implies that each individual oscillator of unit mass evolves within the quartic on-site potential of strength $\omega_d^2$, namely: 
\begin{eqnarray}
	\label{eq01}
	W(U)=-\frac{\omega_d^2}{2}U^2+\frac{\beta\omega_d^2}{4}U^4.
\end{eqnarray}
When $\delta=0$, Eq.~(\ref{eq0}) is known as the discrete $\phi^4$ model,  and is one of the fundamental nonlinear lattices. The subject of
discretization of the Hamiltonian variant of the
$\phi^4$ model is one that has been of intense interest
over the past two decades;
see, e.g.,~\cite{speight,physdpgk,pelinovsky,barash,dmitrievpre,pgk_dmitriev,pgk_dmitriev2}
to name only a few examples in the context of devising translational 
invariant discretizations and topological coherent structures
bearing the ability to travel in the lattice setting. Needless to say
that the discretizations critically affect~\cite{dmitrievpre}
also key properties of
the continuum non-integrable version of the model such as
its fractal soliton collisions~\cite{dkc0,anninos}.

In turn, both the continuum and --where relevant-- the discrete
form of the model have
been used in numerous distinct physical contexts; these include 
%the study of
crystals and metamaterials, ferroelectric and ferromagnetic domain
walls, Josephson junctions, nonlinear optics, topological excitations in hydrogen-bonded chains or chains of base pairs in DNA, complex
electromechanical devices,
% and
and so on --cf. the monographs \cite{DPbook, Panosbook, Chrisbook} and the review articles \cite{reviewsA,reviewsB,reviewsC,reviewsD}. The DKG model, as well as 
its complex analogue, namely the Discrete Nonlinear Schr\"{o}dinger equation (DNLS)~\cite{Panosbook}, 
which is also a cubically-nonlinear dynamical lattice,   
have attracted extensive
interest. This is not only due to their physical relevance alluded to
above, but also
in part due to constituting prototypical playgrounds
where the interplay of discreteness and nonlinearity can be assessed
and compared to the corresponding continuum limit --the NLS and
KG partial differential equations (PDEs).
%revealing that discreteness is a factor responsible for dramatic differences between the dynamical behavior of nonlinear lattices and their continuous counterparts (the NLS and  KG partial differential equations).
In general, the crucial differences between the PDEs and the discrete models
are triggered by the breaking of the translational invariance
in the latter. This, in turn, results in the non-equivalence of the configurations of the lattice, the emergence of the so-called Peierls-Nabarro barrier and related features~\cite{DPbook,Panosbook}.

Due to the underlying double-well potential energy (\ref{eq01}), the discrete $\phi^4$ model has been one of the prototypical models for the study of transition state phenomenology for chains of coupled objects. Such a phenomenology refers to the case where the considered objects, initiating from the domain of attraction of a locally stable state, escape to a neighboring stable state
crossing a separating energy barrier \cite{Dirk1, Dirk2, PhysD2013}. The barrier corresponds to the potential's local maximum, associated with 
the saddle point, which separates the two local minima of the potential. Then, ``kink shaped'' topological excitations,  interpolating between the adjacent minima may exist and
play a critical role in the study of transition probabilities. While it is well known that the discrete models admit stationary kink solutions \cite{combs,sega}, the breaking of the translational invariance, 
suggests an important question, concerning the existence of traveling discrete kinks (and of traveling waves solutions in general). The above references, as well as numerous other related works, including (but not limited to)~\cite{chris1,panos0,ioss1,ioss2}, 
provide a sense of the interest that these questions have triggered. 

Turning now to the dissipative variant of the model, the above 
transition-state phenomenology has been
widely explored in numerous physical settings where dissipation effects have a prominent role, including, among others, phase transitions, chemical kinetics, pattern formation and the
kinematics of biological waves --see, e.g., Refs.~\cite{Kramers, PF, KMI,YN,AS} and references therein. 
In particular, when $\delta>0$, the linearly damped counterpart of the discrete $\phi^4$ model (\ref{eq0}),
%\begin{eqnarray}
%\label{eq000}
%\ddot{U}_n-k(U_{n+1}-2U_n+U_{n-1})+ \delta \dot{U}_n = W'(U_n),\;\;\delta>0,
%\end{eqnarray}
%%%%%%%%%%%%%%%%%%%%%%%%%%%%%%%%%%%%%%%%%%%%%
is of obvious significance to the discrete physical set-ups mentioned above, when friction effects (and possibly driving forces) cannot be neglected, \cite{comte, panos0, Boris1,Boris2}. 

The starting point and aim of the present work are rather
different from the above studies. Here, our scope is  to reveal and discuss the structure of the set of the possible equilibria of  (\ref{eq0}), which serve as potential attractors \cite{temam} for its  dynamics when $\delta>0$.
Supplementing Eq.~(\ref{eq0}) with Dirichlet boundary conditions, the system belongs in the class of a  finite dimensional second-order gradient system, for which the Hamiltonian energy serves as a Lyapunov function. In the above discrete set-up, we use {\em a discrete variant of the \L{}ojasiewicz inequality} \cite{Jen1998,Jen2011}, to prove that all bounded solutions converge to a single equilibrium, i.e., a solution of the stationary problem:
\begin{eqnarray}
	\label{eqSP}
	-k(U_{n+1}-2U_n+U_{n-1}) = \omega_d^2 (U_{n}-\beta U_{n}^3).
\end{eqnarray}
Let us comment at this point on the strength of the \L{}ojasiewicz inequality approach, which seems to be applied for the first time in nonlinear lattices: the case of the DKG chain with a $\phi^4$ on-site potential is only a prototypical example of a second-order lattice dynamical system, on which the method is applicable. The approach may cover a wide class of  gradient dissipative lattices, involving analytical nonlinearities, hence we expect this technique
to be of considerably wider applicability than the specific application
selected for demonstration purposes herein.

While in terms of the topology of the phase space, the structure of the global attractor is trivial (since it consists of the single equilibrium), we show that the structure of the set of  equilibrium (steady-state) solutions, is
quite non-trivial. Our analytical considerations implement global bifurcation results \cite{RB71,Smo94} on  (\ref{eqSP}), considering $\omega_d^2$ as a bifurcation parameter, to rigorously prove that  each linear eigenstate of the system
(i.e., solution in the absence of nonlinearity) can be continued to a nonlinear counterpart. Such an analytical approach, offers the advantage of
characterizing the bifurcating equilibrium branches by the number of sign-changes of the associated equilibrium solutions.
This is a rather natural partition, given the self-adjoint nature
of the underlying linear operator and the considerations of Sturm-Liouville
theory in the one-dimensional setting of interest herein.
It is important to remark that due to these sign-changes,  the nonlinear equilibria, as elements of the bifurcating branches, may define nontrivial topological interconnections between the steady-states associated with the symmetric minima of the quartic potential. This provides, in turn, the connection of
the equilibrium states considered herein with multi-kink variants
of the (one- or two-kink) states that have been extensively studied in earlier works.

An important question, in the framework of the DKG system~(\ref{eq0}),  concerns the underlying mechanism
that leads to the selection of the eventual state of convergence. This question is investigated by analytical arguments corroborated by
direct numerical simulations. In  light of the global bifurcation analysis, such a combined approach wishes to examine the potential dynamical stability of the equilibrium branches, by considering two distinct scenarios for the initial conditions $U_n(0)$, and the bifurcation parameter $\omega_d^2$.  These scenarios, which both consider spatially extended initial conditions $U_n(0)$ (resembling linear eigenstates) and zero velocities $\dot{U}_n(0)=\bf{0}$, are as follows.

The first scenario (I), considers pairs $(||U_n(0)||, \omega_d^2)$, consisting of small values of $\omega_d^2$, and initial conditions whose norm defines a point of the local bifurcation diagram as follows: the initial condition has the same number of sign-changes with the one identifying a particular branch. We call such an initial condition, \emph{similar} to a branch of equilibria. Then, the point $(||U_n(0)||, \omega_d^2)$ is selected to be in a sufficiently small neighborhood of an equilibrium solution of the similar branch. 
Naturally, scenario~(I), is intended to take advantage of the local structure of the bifurcating equilibrium branches, together with the associated  linear stability analysis.  %since the pairs $(||U_n(0)||^2, \omega_d^2)$ belong to the realm of such an analysis.% 
%\textcolor{blue}{
We verify numerically --with an excellent agreement-- the analytical predictions on the geometric structure of the branches (according to the global bifurcation theorem \cite{RB71,Smo94}),  as well as their instability guiding the convergence dynamics. The latter is  manifested by metastability, with orbits connecting distinct equilibria,
transitioning progressively from more to less unstable states.
In that regard, the evolution of the Hamiltonian energy is an effective
diagnostic for the potential metastable dynamics.
%}. 

The second scenario (II), considers an arbitrary value for the bifurcation parameter $\omega_d^2$, while the initial condition is still similar to a specific branch. However, its norm is
considerably larger, so that the relevant point $(||U_n(0)||, \omega_d^2)$, is far from the 
%local bifurcation point of the relevant bifurcation diagram.
relevant local bifurcation diagram.  
In fact, with scenario~(II), we wish to initiate herein our studies
towards revealing the structure of the global bifurcation diagram and
the associated stability properties. 
%An important step within this investigation consists of following  the dynamics traced from points $(||U_n(0)||, \omega_d^2)$, far from the local structure of the bifurcation diagram.  
Our numerical investigations consider fixed values of the  parameter $\omega_d^2$, just above the linear spectrum, while the amplitude of a similar to a branch, initial condition, is progressively decreased.  
%\textcolor{blue}{
Remarkably, decreasing the amplitude of the initial conditions, our findings showcase, at first, that they  may converge to geometrically distinct
%(with respect to symmetries)
equilibrium states,
% of the same, non-similar equilibrium branch,
of a non-similar branch without the metastable transition observed in the first scenario.
Second, decreasing the amplitude further, we reveal the existence of an intermediate amplitude-values interval of the initial condition, for which the latter converges interchangeably to an equilibrium of either a similar or a non-similar branch. 
Third, we illustrate the existence of (upper) thresholds for the amplitude of the initial condition,
below which, the traced orbit of the initial condition converges to an equilibrium of its similar branch. This is an indication that above the threshold values,
the similar branch may not be dynamically accessible (via a similar initial condition), and hence, the system
follows a different type of dynamics.

All the above findings, suggest that the global bifurcation diagram may be
rather complex. The summary of the above observations, is that the dynamics of Eq.~(\ref{eq0})--the prototypical dissipative DKG chain-- serves as a case example, for the demonstration of an 
energy landscape bearing an elaborate equilibrium set and  associated stability properties affecting the resulting dynamics.

The paper is structured as follows. In Section~II, we present
the analytical considerations on  Eq.~(\ref{eq0}), concerning the convergence to a single equilibrium.  
Section~III, is devoted to the analytical results, concerning the global bifurcation of nonlinear equilibria. In Section~IV, we report the results of our numerical simulations. Finally, in Section~V, we summarize and discuss the implications of our results with an eye towards future work.

%%%%%%%%%%%%%%%%%%%%%%%%%%%%%%
%%%%%%%%%%%%%%%%%%%%%%%%%%%%%555
%\section{The  dissipative Discrete Nonlinear Klein-Gordon Equation with a $\phi^4$ attractive potential}
% As a particular case of (\ref{eq000}), we shall consider herein the dissipative  Klein-Gordon (DKG) chain: 
%\begin{eqnarray}
%\label{eq0}
%\ddot{U}_n-k(U_{n+1}-2U_n+U_{n-1})+ \delta \dot{U}_n = \omega_d^2 (U_{n}-\beta U_{n}^3) ,\;\;\beta,\;\delta>0.
%\end{eqnarray}
%%%%%%%%%%%%%%%%%%%%%%%%%%%%%%%%%%%%%%%%%%%%%
%In Eq.~(\ref{eq0}),  $U_n(t)$ is the unknown amplitude of the oscillator occupying the lattice site $n$, and $k=h^{-2}$ denotes the discretization parameter
%with $h$ being the lattice spacing. The chain incorporates linear dissipation of strength $\delta>0$. Eq. (\ref{eq0}) implies that each individual oscillator of unit mass evolves within the quartic on-site potential of strength $\omega_d^2$:
%\begin{eqnarray}
%\label{eq01}
%W(x)=-\frac{\omega_d^2}{2}x^2+\frac{\beta\omega_d^2}{4}x^4.
%\end{eqnarray}
\section{Convergence to nontrivial equilibria.}  
\setcounter{equation}{0}
\paragraph{Preliminaries.}
For the analysis of Eq.~(\ref{eq0}), we will consider an arbitrary number of $K+2$ oscillators equidistantly occupying an interval $[-\frac{L}{2}, \frac{L}{2}]$ of length $L$, with spacing $h=\frac{L}{K+1}$. Thus, the oscillators are occupying the points $x_n=-L/2+nh$, $n=0,1,2,\ldots,K+1$ of the interval $[-\frac{L}{2}, \frac{L}{2}]$, discretized as $-\frac{L}{2}=x_0<x_1<\ldots<x_{K+1}=\frac{L}{2}$. Then,  Eq. (\ref{eq0}), is written in the standard shorthand notation $U(x_n,t):=U_n(t)$. In some cases, we shall also use the shorthand notation $U$ for the vectors of $\mathbb{R}^{K+2}$, i.e., $U:=\left\{U_n\right\}_{n=0}^{K+1}$. For the DKG chain (\ref{eq0}), we will consider the initial-boundary value problem, with initial conditions
\begin{eqnarray}
	\label{eq02}
	U_n(0)=U_{n,0}\;\;\mbox{and}\;\; \dot{U}_n(0)=U_{n,1}\in\mathbb{R}^{K+2},
\end{eqnarray}
and Dirichlet boundary conditions at the endpoints $x_0=-L/2$ and $x_{K+1}=L/2$, namely:
\begin{eqnarray}
	\label{eq03}
	U_0=U_{K+1}=0,\;\;t\geq 0.
\end{eqnarray}
We shall also use, when convenient, the short-hand notation $\Delta_d$ for the
%$1-\mathrm{D}$
one-dimensional discrete Laplacian
\begin{eqnarray}
	\label{eqD}
	\left\{\Delta_dU\right\}_{n\in\mathbb{Z}}=U_{n+1}-2U_n+U_{n-1},
\end{eqnarray}
defined in our case in $\mathbb{R}^{K+2}$, which can be considered as the finite dimensional subspace of the sequence space $\ell^2$ of square summable sequences.  In other words, we shall work in  the finite dimensional subspaces of the sequence spaces: 
$\ell^p$, $1\leq p\leq\infty$,
\begin{eqnarray}
	\label{eq04}
	\ell^p_{K+2}=\left\{X\in\ell^p\;:\;X_0=X_{K+1}=0\right\}.
\end{eqnarray}
Clearly, $\ell^p_{K+2}\equiv \mathbb{R}^{K+2}$, which may be endowed with the norm
\begin{eqnarray*}
	||X||_{\ell^p}=\left(\sum_{n=0}^{K+1}|X_n|^p\right)^{\frac{1}{p}}.
\end{eqnarray*}
We recall the well known equivalence of norms,
\begin{eqnarray}
	\label{eq07}
	||X||_{\ell^q}\leq ||X||_{\ell^p}\leq(K+2)^{\frac{(q-p)}{qp}}||X||_{\ell^q},\;\;1\leq p\leq q<\infty.
\end{eqnarray}
We shall also denote by
\begin{eqnarray*}
	(X,Y)_{\ell^2}=\sum_{n=0}^{K+1}X_nY_n,\;\;||X||^2_{\ell^2}=\sum_{n=0}^{K+1}X_n^2,
\end{eqnarray*}
the squared-$\ell^2$ inner product and norm respectively, which will serve as the finite dimensional phase-space for the dynamical system defined by Eq.~(\ref{eq0}).
Note that in the case of the infinite lattice $\mathbb{Z}$, the following inequalities:
\begin{eqnarray}
	\label{eq05}
	||X||_{\ell^q}&\leq& ||X||_{\ell^p}, \;\;\;\;\;1\leq p\leq q\leq\infty\\
	\label{eq06}
	0&\leq& (-\Delta_d X, X)_{\ell^2}\leq 
	4 \sum_{n\in\mathbb{Z}}|X_n|^2,
\end{eqnarray}
hold, which are however valid in the above finite-dimensional setup. 
%
%\begin{eqnarray}
%\label{eq1}
%-\frac{L}{2}=x_0<x_1<\ldots<x_{K+1}=\frac{L}{2}.
%\end{eqnarray}
%The lattice
%is assumed to be infinite ($n \in \mathbb{Z}$) and
%we consider initial conditions:
%\begin{eqnarray}
%\label{eq2}
%U_n(0) := U_{n,0}\;\;\mbox{and}\;\; \dot{U}_n(0):=U_{n,1}\in\ell^2 \;\;\mbox{and}\;\;k=\frac{1}{h^2}
%\end{eqnarray}
%We are working with chain of infinite length. Therefore the unknown 
%function $U_{n}(t),$ is a function that implies the initial conditions and $\lim_{n\to\infty}U_{n}=0$. 
%Namely the solution called configuration of chain, $U\in\ell^2 $ we consider this as a sequense 
%$$ U = (\dots U_{-k}, U_{-k+1},U_{-k+2}\dots,U_{0},U_{1},\dots,U_{k},\dots)\in\ell^2 : \forall 
%t\in[0,T] \ \ U(t)\in\ell^2.$$
%$$ U:[0,T] \to \ell^2 $$
%We also define the discrete Laplacian, as 
%\begin{eqnarray}
% (\Delta_d U)_n = k(U_{n+1}-2U_n+U_{n-1})
%\end{eqnarray}
%%(\textbf{Here: a) Discussion on the properties of the potential energy-Graphs potential and orbits for the individual oscillator-b) Comments on the dynamics of individual oscillators}-c) local existence an a comment on the global existence of solutions)
%The equation has a conserved quantity that does not change with time. 
%We find this as follows: 
\paragraph{Proof of convergence to the equilibria.}
Proceeding to our proof, we start with some useful observations on the energy quantities possessed  by the problem (\ref{eq0})-(\ref{eq02})-(\ref{eq03}). First, we recall that when the dissipation parameter is $\delta=0$, the DKG system (\ref{eq0}) describes the equations of motion derived by the Hamiltonian:
%\cite{Peyrard}:
%[eqn. (5), pg. 4144]{Peyrard}. :
%
\begin{eqnarray}
	\label{eq08}
	\mathcal{H}(t)=\frac{1}{2}\sum_{n=0}^{K+1}\dot{U}_n^2
	+\frac{k}{2}\sum_{n=0}^{K+1}(U_{n+1}-U_n)^2
	-\frac{\omega_d^2}{2}\sum_{n=0}^{K+1}U_n^2+\frac{\beta\omega_d^2}{4}\sum_{n=0}^{K+1} U_n^4, 
\end{eqnarray}
which is conserved, i.e., $\frac{d}{dt}\mathcal{H}(t)=0$. 
On the other hand, in the linearly damped case $\delta>0$, the Hamiltonian energy (\ref{eq08})  is dissipated according to the energy-balance law:
%$$
%\sum_{n=0}^{K+1} \ddot{U}_n \dot{U}_n - %\sum_{n=0}^{K+1} k(U_{n+1}-2U_n+U_{n-1})
%\dot{U}_n + \sum_{n=0}^{K+1} \delta \dot{U}_n %\dot{U}_n = \sum_{n=0}^{K+1} 
%\omega_d^2 (U_{n}-\beta U_{n}^3) $$
%$$\Rightarrow 
%\frac{1}{2} \frac{d}{dt} \sum_{n=0}^{K+1} \dot{U}_n^2 + k \sum_{n=0}^{K+1} 
%(U_{n+1} - U_{n} ) (\dot{U}_{n+1} - \dot{U}_n ) + \delta %\sum_{n=0}^{K+1} \dot{U}_n^2 = 
%\omega_d^2\sum_{n=0}^{K+1} U_n \dot{U}_n -\beta %\omega_d^2 \sum_{n=0}^{K+1} U_{n}^3
%\dot{U}_n $$
%\frac{1}{2} \frac{d}{dt} \|\dot{U}\|_{\ell^2}^2 + %\frac{1}{2} \frac{d}{dt} \sum_{n=0}^{K+1}
%(U_{n+1} - U_n)^2 + \delta \sum_{n=0}^{K+1} %\dot{U}_n^2 = 
%\frac{d}{dt} \left[ \frac{1}{2} \omega_d^2 %\sum_{n=0}^{K+1} U_n^2 - \frac{\beta %\omega_d^2}{4} 
%\sum_{n=0}^{K+1}U_n^4\right]. 
%\begin{eqnarray}
%\label{eq09} 
%\end{eqnarray}
\begin{eqnarray}\label{eq09}
	\frac{d}{dt}\mathcal{H}(t)=\frac{d}{dt} \left[\frac{1}{2}\sum_{n=0}^{K+1}\dot{U}_n^2+\frac{k}{2} \sum_{n=0}^{K+1}(U_{n+1} - U_n)^2 -
	\frac{\omega_d^2}{2}\sum_{n=0}^{K+1} U_n^2 + \frac{\beta \omega_d^2}{4}
	\sum_{n=0}^{K+1}U_n^4\right] =- \delta \|\dot{U}\|_{\ell^2}^2.
\end{eqnarray}
It will be convenient for the computations, to consider from Eq. (\ref{eq09}), the functional
\begin{eqnarray}
	\label{eq010} 
	F(U_n) = -\frac{k}{2} \sum_{n=0}^{K+1}(U_{n+1} - U_n)^2 +
	\frac{\omega_d^2}{2}\sum_{n=0}^{K+1} U_n^2 - \frac{\beta \omega_d^2}{4}
	\sum_{n=0}^{K+1}U_n^4,
\end{eqnarray}
so that Eq.~(\ref{eq09}) can be rewritten as: 
%\begin{eqnarray}\label{eq2}
%\frac{1}{2} \frac{d}{dt} \|\dot{U}\|_{\ell^2}^2 + \delta %\|\dot{U}\|_{\ell^2} = 
%\frac{d}{dt}F(U_n)
%\end{eqnarray}
\begin{eqnarray}
	\label{eq011}
	\frac{1}{2} \frac{d}{dt} \|\dot{U}\|_{\ell^2}^2 + \delta \langle\dot{U},\dot{U}\rangle _{\ell^2} = 
	\frac{d}{dt}F(U_n).
\end{eqnarray}
Since solutions of (\ref{eq0})-(\ref{eq02})-(\ref{eq03}) exist globally in time \cite{PhysD2013},  we may integrate Eq.~(\ref{eq011}) in the interval $[0,t]$, for arbitrary $t\in [0,\infty)$, to  get its integral form  
%\frac{1}{2}\|\dot{U}(t)\|_{\ell^2}^2 - %\frac{1}{2}\|\dot{U}(0)\|_{\ell^2}^2 + 
%\delta \int_0^t \langle\dot{U(s)},\dot{U(s)}\rangle ds = %F(U_n(t)) - F(U_n(0))$$
\begin{eqnarray}
	\label{eq012}
	\frac{1}{2}\|\dot{U}(t)\|_{\ell^2}^2 - \frac{1}{2}\|\dot{U}(0)\|_{\ell^2}^2 + 
	\delta \int_0^t \langle\dot{U}(s),\dot{U}(s)\rangle ds = F(U_n(t)) - F(U_{n,0}).
\end{eqnarray}
%%%%%%%%%%%%%%%%%%%%%%%%%%%%%%%%%%%%%%%%%%%
\begin{figure}[tbp]
	\centering
	\includegraphics[scale=0.09]{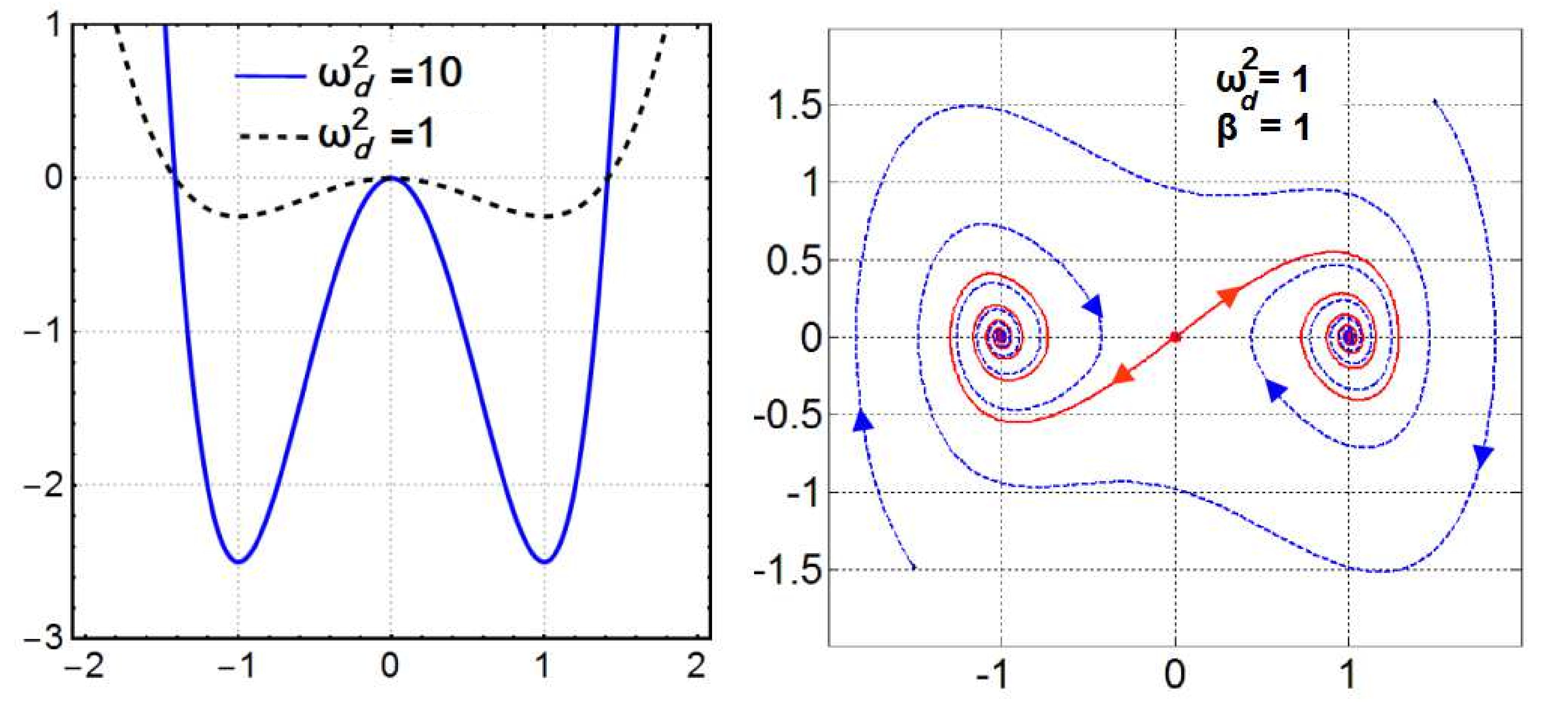} \vspace{0cm}
	\caption{(Color Online) Left panel: Graphs of the potential energy (\ref{eq01}) for $\omega_d^2=10$ [continuous (blue) curve] and $\omega_d^2=1$ [dashed (black) curve)], when $\beta=1$. Right Panel: Dynamics in the case of the anti-continuum limit $k=0$, when $\omega_d^2=1$, $\beta=1$, $\delta=0.01$: each individual oscillator $U_n$ is governed by a linearly damped Duffing equation. The continuous (red) orbit defines the global attractor (an heteroclinic connection, between the symmetric asymptotically stable fixed points, located at the two minima of the quartic potential.)}
	\label{Figu1}
\end{figure}
%%%%%%%%%%%%%%%%%%%%%%%%%%%%%%%%%%%%%%%%%%%%%%%%%%%%%%%%%%%%
The equilibria for the problem (\ref{eq0})-(\ref{eq02})-(\ref{eq03}), have the form
$\overline{\Phi}=(\Phi,\mathbf{0})\in \ell^2_{K+2}\times \ell^2_{K+2}$ where $\Phi=\left(0,\Phi_1, \Phi_2,\ldots, \Phi_K, 0\right)$, in order to satisfy also the Dirichlet boundary conditions (\ref{eq03}). In addition, $\Phi_n$ must satisfy the stationary problem
\begin{eqnarray}
	\label{eq063}
	-k\Delta_d \Phi_n &=& \omega_d^2 \Phi_{n}-\beta\omega_d^2 \Phi_{n}^3,\\[2ex]
	\label{eq063A}
	\Phi_{0}&=&\Phi_{K+1}=0.
\end{eqnarray}
In particular, in the uncoupled (alias anti-continuum) limit $k=0$ the equilibrium states of the system are defined through the equilibria of the individual oscillators of the lattice. In this limit, each oscillator is governed by the linearly damped Duffing equation $$\ddot{U}_n+ \delta \dot{U}_n=\omega_d^2U_n-\omega_d^2\beta U_n^3.$$ We recall their dynamics in the right panel of Fig.~\ref{Figu1} (the left panel of this figure depicts the potential function for two different values of $\omega_d^2$). As it is evident, every oscillator possesses three distinct
%{\bf VK: what does homogeneous equilibrium means?} (homogeneous) 
equilibrium configurations induced  by the potential (\ref{eq01})  --cf. the left panel of Fig.~\ref{Figu1}: the  unstable maximum
at $U_{\mathrm{max}}^0=0$, corresponding to the rest energy $E_{\mathrm{max}}=W(0)=0$, and two stable minima located at $U^{\mp}_{\mathrm{min}}=\mp\frac{1}{\sqrt{\beta}}$, corresponding to $E_{\mathrm{min}}=W\left(\mp\frac{1}{\sqrt{\beta}}\right)=\frac{\omega_d^2}{4\beta}$. Let the set of these three equilibrium configurations be $\Phi^*=\{U_\mathrm{min}^-, U_\mathrm{max}^0, U_\mathrm{min}^+\}$. Then, in the uncoupled limit of $k=0$, every configuration 
\begin{eqnarray}
	\label{eq013}
	\Phi=\left(0,\Phi_1, \Phi_2,\ldots, \Phi_K, 0\right),\ \text{with}\ \ \Phi_n \in \Phi^*\ \ \text{and}\ \ n=1\ldots K,
\end{eqnarray}   
corresponds to an equilibrium of (\ref{eq0})-(\ref{eq02})-(\ref{eq03}).
It is expected that these configurations can typically
be continued for $k\neq0$ to form mono-parametric families of
equilibria with $k$ as a parameter, but this bifurcation problem
will not be considered in the present work. 

On the other hand, in the linear limit $\omega_d=0$, the trivial configuration  
$$\Phi=\left(0, 0,\ldots, 0, 0 \right)\in\ell^2_{K+2},$$
corresponds to a solution of (\ref{eq063})-(\ref{eq063A}), and thus, is an equilibrium  of (\ref{eq0})-(\ref{eq02})-(\ref{eq03}) for every value of $k$. The families of equilibria which bifurcate from the zero solution will be a subject of an extended study later in this work. Here, we have to note that there may be also solutions of the stationary problem, which do not belong to the configurations that bifurcate from the limiting cases mentioned above. 
%This fact will also affect the dynamics of the system studied later on.

%\textcolor{blue}{
At this point, we should remark that upon supplementing the chain with Dirichlet boundary conditions, 
%warrants
%a nontrivial comment, regarding the potential dynamics of the lattice  dynamical system (\ref{eq0}): by means of the boundary conditions, 
the system is forced to maintain two oscillators at the unstable state of the local maximum of the on-site potential. Therefore, 
%by means of the boundary conditions, 
the lattice is driven out of equilibrium 
by this ``unconventional'' choice, with respect to the stationary states induced by the double-well potential (\ref{eq01}). 
%}
%  potentially inducing much of the dynamics we are observing. If
%  we allowed the chain to go to $\pm 1$ at the boundaries, the dynamics
%  would likely be quite different... This, in my mind,
%(the out-of-equilibrium drive of the system) is worth highlighting...}}

Due to Eq.~(\ref{eq09}), the Hamiltonian energy $\mathcal{H}(t)$ defines a Lyapunov function for the problem (\ref{eq0})-(\ref{eq02})-(\ref{eq03}). Then by \cite[Theorem 4.1, pg. 491]{temam}, all the above equilibria are included in the $\omega$-limit set of the flow $S(t):\ell^2\times\ell^2\rightarrow \ell^2\times\ell^2$.  We recall that if $\mathcal{B}\subseteq \ell^2\times\ell^2$ is a bounded set, then
\begin{eqnarray}
	\label{limset}
	\omega(\mathcal{B})=\left\{\overline{\Phi}_*\;\;:\;\;\exists\;\;\mbox{sequence $t_n$ such that $t_n\rightarrow\infty$ as $n\rightarrow\infty$, and}\;\; \lim_{n\rightarrow\infty}S(t_n)\overline{\Phi}_0=\overline{\Phi}_*,\;\;\forall \overline{\Phi}_0\in\mathcal{B} \right\}.
\end{eqnarray}	

Therefore, potential convergence of solutions of the problem (\ref{eq0})-(\ref{eq02})-(\ref{eq03}) to an equilibrium, should be associated with the convergence 
$\dot{U}(t)\rightarrow \mathbf{0}$ as $t\rightarrow\infty$, for the velocities of the oscillators.  The following Proposition, assures that the above convergence holds for any solution of (\ref{eq0})-(\ref{eq02})-(\ref{eq03}). 
\begin{proposition}
	\label{L01}. Let $\overline{U}(t)=(U(t), \dot{U}(t))\in \mathrm{C}([0,\infty),\ell^2_{K+2}\times\ell^2_{K+2})$, be the unique solution of the initial-boundary value problem  
	(\ref{eq0})-(\ref{eq02})-(\ref{eq03}), for any initial data $\overline{U}(0)=(U_{n,0}, U_{n,1})\in \ell^2_{K+2}\times\ell^2_{K+2}$. Then,
	\begin{eqnarray}
		\label{eq014} 
		\lim_{t\to +\infty} \|\dot{U}(t)\|_{\ell^2} = 0.
	\end{eqnarray} 
\end{proposition}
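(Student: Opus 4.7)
The plan is to run a standard energy-dissipation argument combined with a Barbalat-type uniform-continuity bootstrap. The starting point is the balance law~(\ref{eq09}), which says that $\mathcal{H}(t)$ is non-increasing and that
$$\delta\int_0^t\|\dot U(s)\|_{\ell^2}^2\,ds=\mathcal{H}(0)-\mathcal{H}(t).$$
To convert this into an integrability statement, I first need a lower bound on $\mathcal{H}$. Completing the square in the on-site potential, $W(x)=\frac{\beta\omega_d^2}{4}(x^2-1/\beta)^2-\frac{\omega_d^2}{4\beta}$, which is nonnegative up to the additive constant $-\omega_d^2/(4\beta)$. The coupling term is nonnegative and the kinetic term is nonnegative, so
$$\mathcal{H}(t)\ \geq\ -\frac{(K+2)\,\omega_d^2}{4\beta}.$$

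Combining this with monotonicity, $\mathcal{H}(t)$ admits a finite limit as $t\to\infty$, and the energy identity yields $\int_0^\infty\|\dot U(s)\|_{\ell^2}^2\,ds<\infty$. The same decomposition also gives uniform-in-time bounds on $\|\dot U(t)\|_{\ell^2}$ and, via the bound on $\sum_n(U_n^2-1/\beta)^2$, on $\|U(t)\|_{\ell^2}$; since we are in finite dimension these bounds transfer to any $\ell^p$-norm through~(\ref{eq07}).

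The second step is to upgrade $L^2$-integrability of $\|\dot U\|_{\ell^2}^2$ to pointwise decay. Reading Eq.~(\ref{eq0}) as
$$\ddot U_n\ =\ k\,\Delta_d U_n\ -\ \delta\,\dot U_n\ +\ \omega_d^2 U_n\ -\ \beta\omega_d^2 U_n^3,$$
each term on the right-hand side is bounded in $\ell^2_{K+2}$ by the a priori bounds just established (the cubic is controlled by $\|U\|_{\ell^\infty}^2\|U\|_{\ell^2}$), so $\|\ddot U(t)\|_{\ell^2}\leq C$ uniformly in $t$. Hence
$$\Bigl|\tfrac{d}{dt}\|\dot U(t)\|_{\ell^2}^2\Bigr|=2\bigl|\langle \ddot U(t),\dot U(t)\rangle_{\ell^2}\bigr|\ \leq\ 2C\sup_{s\geq 0}\|\dot U(s)\|_{\ell^2},$$
so the map $t\mapsto \|\dot U(t)\|_{\ell^2}^2$ is Lipschitz, in particular uniformly continuous on $[0,\infty)$.

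The conclusion then follows from Barbalat's lemma: a uniformly continuous, nonnegative function whose integral over $[0,\infty)$ is finite must tend to zero. Applying this to $t\mapsto\|\dot U(t)\|_{\ell^2}^2$ gives~(\ref{eq014}). The main (very mild) obstacle I would anticipate is the lower bound on $\mathcal{H}$; once the completed-square form of $W$ is used, everything else is essentially automatic in the finite-dimensional setting, since norm equivalence removes any delicacy in passing between $\ell^2$, $\ell^4$, and $\ell^\infty$ estimates used to control the cubic term in $\ddot U$.
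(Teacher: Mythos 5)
Your proposal is correct and follows essentially the same route as the paper: uniform a priori bounds on $(U,\dot U)$, a uniform bound on $\ddot U$ from the equation, Lipschitz continuity of $t\mapsto\|\dot U(t)\|_{\ell^2}^2$, integrability of $\|\dot U\|_{\ell^2}^2$ from the energy identity, and then the Barbalat-type conclusion. The only cosmetic difference is that you derive the boundedness and the integrability self-containedly from the completed-square lower bound on $\mathcal{H}$ and its monotonicity, whereas the paper takes the uniform bounds~(\ref{eq014A}) as given from the global well-posedness result of~\cite{PhysD2013} and extracts the integrability from the integrated balance law~(\ref{eq012}); both are readings of the same energy identity.
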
 
\begin{proof}{}
	Since the unique solution $\overline{U}(t)$ is uniformly bounded in $\mathrm{C}([0,\infty),\ell^2_{K+2}\times\ell^2_{K+2})$, there exist constants $c_1, c_2>0$, such that
	\begin{eqnarray}
		\label{eq014A}
		\|U\|_{\ell^2}\leq c_1,\;\mbox{and}\; \|\dot{U}\|_{\ell^2}\leq c_2, 
	\end{eqnarray} 
	respectively.  
	%Now if we consider that $$U\in\mathbb{C}_0 %\left[[0,\infty), \ell^2\right]$$
	%and $$\dot{U}\in\mathbb{C}_0 \left[[0,\infty), %\ell^2\right]$$
	%then its true that 
	We will verify first, that the above uniform bounds, imply further, that  $\ddot{U}\in \mathrm{C} \left([0,\infty), \ell^2\right)$ is also uniformly bounded in time, i.e., 
	$\|\ddot{U}\|_{\ell^2}\leq c$, for some constant $c>0$. 
	%Namely, if $\|U\|_{\ell^2}\leq \mathcal{K}_1 $ and $\|\dot{U}\|_{\ell^2}\leq \mathcal{K}_2$ 
	%$\forall t\geq0$ then $\|\ddot{U}\|_{\ell^2}\leq \mathcal{K}.$\\
	Indeed, let us rewrite the Eq.~(\ref{eq0}) as
	\begin{eqnarray}
		\label{eq015}
		\ddot{U}_n = -\delta \dot{U}_n + k (U_{n+1} - 2U_n + U_{n-1}) + \omega_d^2 U_{n}-\beta
		\omega_d^2 U_{n}^3.
	\end{eqnarray} 
	Taking norms in Eq.~(\ref{eq015}), we get that
	\begin{eqnarray}\label{eq016}
		\|\ddot{U}\|_{\ell^2} = \|-\delta\dot{U}+k \Delta_d U + f(U)\|_{\ell^2},
	\end{eqnarray}
	where
	\begin{eqnarray}
		\label{eq017}
		f(U_n)=\omega_d^2 U_{n}-\beta \omega_d^2 U_{n}^3.
	\end{eqnarray} 
	Now, from (\ref{eq016}), and the boundedness of $-\Delta_d:\ell^2_{K+2}\to\ell^2_{K+2}$, asserting that
	\begin{eqnarray}
		\label{eq018}
		||-\Delta_d X||_{\ell^2}^2\leq 4||X||_{\ell^2}^2,\;\;\mbox{for all}\;\;X\in \ell^2_{K+2}, 
	\end{eqnarray}
	we have the inequality
	\begin{eqnarray}
		\label{eq019}
		\|\ddot{U}\|_{\ell^2} &=& \|-\delta\dot{U}+k \Delta_d U + f(U)\|_{\ell^2}\nonumber\\[2ex]
		&\leq& \delta\|\dot{U}\|_{\ell^2} + k \|\Delta_d U\|_{\ell^2} + \|f(U)\|_{\ell^2}\nonumber\\[2ex]
		&\leq& \delta \|\dot{U}\|_{\ell^2} + 2k\|U\|_{\ell^2} + \|f(U)\|_{\ell^2}\nonumber\\[2ex]
		&\leq&\delta c_2+2kc_1+\|f(U)\|_{\ell^2}.
	\end{eqnarray}
	Furthermore, by taking the $\ell^2_{K+2}$-norm of the on-site forces (\ref{eq017}), we get the inequality
	\begin{eqnarray}
		\label{eq020}
		\|f(U)\|_{\ell^2}^2 &=& \omega_d^4 \sum_{n=0}^{K+1} (U_{n}-\beta U_{n}^3)^2 
		= \omega_d^4 \|U-\beta U^3 \|_{\ell^2}^2 \nonumber\\[2ex]
		&\leq& \omega_d^4\left(\|U\|_{\ell^2}^2 + \beta \|U\|_{\ell^2}^3\right)^2\nonumber\\[2ex]
		&\leq& \omega_d^4(c_1^2 + \beta c_1^3)^2:=c_3.
	\end{eqnarray}
	Note, that for the derivation of the estimate (\ref{eq020}),
	we have used (\ref{eq07}) for $q=6$ and $p=2$: it allows to estimate the norm of the  cubic term in (\ref{eq020}) by the inequality
	%\begin{eqnarray*}
	%\left(\sum_{-\infty}^{+\infty} %U_n^6\right)^{\frac{1}{6}}\leq
	%\left(\sum_{-\infty}^{+\infty} %U_n^2\right)^{\frac{1}{2}}\Rightarrow 
	$\left(\sum_{n=0}^{K+1} U_n^6\right)\leq
	\left(\sum_{n=0}^{K+1} U_n^2\right)^{3}$.
	%\end{eqnarray*}
	Inserting (\ref{eq020}) into (\ref{eq019}), we get that 
	$\|\ddot{U}\|_{\ell^2}\leq c:=\delta c_2+2kc_1+c_3$. 
	%is uniformly bounded implies that 
	%$$ \lim_{t\to t_0} \|\dot{U}(t)\| = 0.$$
	This uniform bound on $\|\ddot{U}\|_{\ell^2}$, implies that $\dot{U}(t)$ is uniformly continuous in $[0,\infty)$ as follows: 
	by using the mean value theorem for the function $V(t) = \|\dot{U}(t)\|^{2}$, we have 
	that for arbitrary $t_1\geq 0$ , $t_2>0 \in\bbR ^{+} $,  there exists $t^*\in(t_1, t_2)$, such that 
	\begin{eqnarray}\label{eq021}
		|V(t_1) - V(t_2)|\leq |\dot{V}(t^*)|\, |t_1 - t_2|.
	\end{eqnarray}
	Besides,  for the time-derivative of $V(t)$, given by $\dot{V}(t)  
	%=\frac{d}{dt}\langle \dot{U}(t),\dot{U}(t)\rangle 
	= 2 \langle \ddot{U}(t),\dot{U}(t)\rangle $, we get, by applying the Cauchy-Schwartz inequality, that
	\begin{eqnarray}\label{eq022}
		|\dot{V}(t)| = 2 |\langle \ddot{U}(t^*),\dot{U}(t^*)\rangle | \leq 2\|\ddot{U}(t^*)\|_{\ell^2}\;\|\dot{U}(t^*)\|_{\ell^2}\leq m:=2c\;c_2
	\end{eqnarray} 
	Since the interval $(t_1, t_2)$  is arbitrary, it readily follows from (\ref{eq022}), that the function $\dot{V}(t)$ is bounded and that 
	\begin{eqnarray}\label{eq023}
		|V(t_1)-V(t_2)| = \big| \|\dot{U}(t_1)\|_{\ell^2}^2 - \|\dot{U}(t_2)\|_{\ell^2}^2 \big|\leq m|t_1 - t_2|.
	\end{eqnarray}
	Consequently, $V(t)$ is globally \tl{Lipschitz}  continuous and uniformly bounded in $[0,\infty)$, thus locally integrable in $[0,\infty)$.  Furthermore, we may derive, by using (\ref{eq012}), the boundedness of $U$ and $\dot{U}$ in $\ell^2_{K+2}$ (\ref{eq014A}), and the bound (\ref{eq020}), the  following inequality: 
	\begin{eqnarray}
		\label{eq024}
		\delta \int_0^t \langle\dot{U}(s),\dot{U}(s)\rangle ds \leq \frac{1}{2}\big|\|\dot{U}(t)\|_{\ell^2}^2 - \|\dot{U}(0)\|_{\ell^2}^2\big| + |F(U_n(t)) - F(U_{n,0})|<C,
	\end{eqnarray}
	where $C$ is a constant independent of $t$. Letting $t\rightarrow\infty$ in (\ref{eq024}), implies
	the integrability of $V(t)$ in $[0,\infty)$ and, as a result, the claim (\ref{eq014}).
\end{proof}
\begin{comment}
At this point we will see again our equation 

\begin{eqnarray}
\label{eq4}
\ddot{U}_n -k(U_{n+1}-2U_n+U_{n-1})+ \delta \dot{U}_n = \omega_d^2 (U_{n}-\beta U_{n}^3) ,\;\;\beta>0.
\end{eqnarray}

witch is an infinite Lattice but in this case scenario we study the system that has Dirichlet and Periodic conditions. 
Then stationary solutions or else, equilibrium solutions, solve the nonlinear algebraic system
\begin{eqnarray}\label{eq5}
-k(U_{n+1}-2U_n+U_{n-1})+ \delta \dot{U}_n = \omega_d^2 (U_{n}-\beta U_{n}^3).
\end{eqnarray}
System (\ref{eq4}), has the Hamiltonian energy, for $\delta = 0$ 

\begin{eqnarray}\label{eq6}
\mathcal{H}(t) = \frac{1}{2} \sum_{-\infty}^{+\infty} \dot{U}_n^2 +\frac{k}{2} \sum_{-\infty}^{+\infty}(U_{n+1} - U_n)^2 -
\frac{\omega_d^2}{2}\sum_{-\infty}^{+\infty} U_n^2 + \frac{\beta \omega_d^2}{4}
\sum_{-\infty}^{+\infty}U_n^4
\end{eqnarray}
that implies $\frac{d}{dt}\mathcal{H}(t) = 0.$
Now if $$ F(U_n)= -\frac{k}{2} \sum_{-\infty}^{+\infty}(U_{n+1} - U_n)^2 +\frac{\omega_d^2}{2}\sum_{-\infty}^{+\infty} U_n^2
-\frac{\beta \omega_d^2}{4}\sum_{-\infty}^{+\infty}U_n^4,$$
then (\ref{eq6}) can be written in the form 
\end{comment}
%%%%%%%%%%%%%%%%%%%%%%%%%%%%%%%%%%%%%%%%%%%%%%%%%%55

The rest of the section, is devoted in showing that $\lim_{t\rightarrow\infty}||U(t)-\Phi||_{\ell^2}=0$, which will complete (on the account of Proposition \ref{L01}),  the proof of the convergence of the solutions of (\ref{eq0})-(\ref{eq02})-(\ref{eq03}), to equilibrium. We should exploit the information on the limiting behavior of $\dot{U}$, e.g. (\ref{eq014}), in considering the difference $U-\Phi$ for large times, in terms of the obvious equation
\begin{eqnarray}
	\label{obvi1}
	U(t)-\Phi = U(t_N)-\Phi + \int_{t_N}^{t}\frac{d}{ds}(U(s)-\Phi) ds,
\end{eqnarray}
for some finite $t_N\in [0,\infty)$. Consequently, the  $\ell^2$-energy of the difference $U(t)-\Phi$, satisfies the inequality
\begin{eqnarray}
	\label{obvi2}
	\|U(t)-\Phi\|_{\ell^2} 
	\leq \|U(t_N)-\Phi\|_{\ell^2} + \int_{t_N}^{t}\|\frac{d}{ds}(U(s)-\Phi)\|_{\ell^2} ds.
\end{eqnarray}
Then, the claim is that passing to the limit as $t\rightarrow\infty$, both terms of the right hand side of (\ref{obvi2}), become arbitrarily small, establishing the convergence of $U$ to $\Phi$. For the first term $\|U(t_N)-\Phi\|_{\ell^2}$, the definition of the $\omega$-limit set, will allow for the selection of a sufficiently large $t_N$, so that $U(t_N)$ will be arbitrarily close to $\Phi$. It should be clear that the existence of such $U(t_N)$-orbit points, does not guarantee itself, the convergence of the whole orbit to $\Phi$, since the orbit may escape from the vicinity of $\Phi$, for $t> t_N$. However, the latter scenario will be excluded, by establishing that the second term becomes arbitrarily small, for sufficiently large times.   To this end, we will control its growth by suitable energy estimates. Such estimates, should naturally involve the Hamiltonian of the conservative limit $\delta=0$ of the DKG lattice (which is the Lyapunov-like functional for the full system), as follows.

For convenience, we rewrite the Hamiltonian energy (\ref{eq08}) in the shorthand notation
% in various steps. The first step, concerns the proof of two useful differential  inequalities. Prior to their proof, it will be also convenient to write several energy quantities in a shorthand notation. 
\begin{eqnarray}\label{eq025}
	\mathcal{H}(t)=\mathcal{H}(U(t),\dot{U(t)}) =\frac{1}{2} \|\dot{U}\|_{\ell^2}^2 - F(U),
\end{eqnarray}
where $F(U)$, denotes the functional (\ref{eq010}). Notice that for an orbit $(U(t), \dot{U}(t))$ converging to a single equilibrium configuration $(\Phi,\mathbf{0})$, its Hamiltonian energy (being a continuous functional itself), should converge as $\mathcal{H}(U(t)),\dot{U}(t)\rightarrow\mathcal{H}(\Phi,\mathbf{0})=-F(\Phi)$, for $t\rightarrow\infty$. Thus, as a first step, it is natural to consider the difference of Hamiltonian energies 
\begin{eqnarray}
	\label{eq270a}
	\mathcal{H}(U(t),\dot{U}(t))-\mathcal{H}(\Phi,\mathbf{0})= \frac{1}{2} \|\dot{U}\|_{\ell^2}^2 - (F(U)-F(\Phi)).
\end{eqnarray}
Since $\lim_{t\to +\infty} \|\dot{U}(t)\|_{\ell^2} = 0$, 
the second term in the difference (\ref{eq270a}) should converge as  $F(U)\rightarrow F(\Phi)$, for $t\rightarrow\infty$, suggested also by the continuity of the functional $F$.  Therefore, we are guided to handle the second term of (\ref{obvi2}), by the difference (\ref{eq270a}). However, the fact that $F$ is locally Lipschitz will, potentially, provide solely a linear differential inequality on the considered difference $||U(t)-\Phi||_{\ell^2}$ (of the unknown limiting behavior); this inequality, however, is insufficient to establish convergence of $U$ to $\Phi$. 
%, while still the only information is on the vanishing limit of $||\dot{U}||_{\ell^2}$.  
At this point, the \L{}ojasiewicz-type inequality \cite{Jen1998,Jen2011} comes into play, in order to suggest an appropriate perturbation of the difference of the Hamiltonian energies (\ref{eq270a}). This inequality  involves the functional $F$, and the nonlinear operator 
\begin{eqnarray}
	\label{eq026}
	J(U_n) := k(\Delta_d U)_n +\omega_d^2 U_{n}-\beta\omega_d^2  U_{n}^3,
\end{eqnarray}
with shorthand notation $J(U)$. According to the results discussed in the Appendix \ref{appen1}, the \L{}ojasiewicz inequality in our discrete setting, is stated in
\begin{lemma}
	\label{LINEQ}  
	There exists $\tilde{\epsilon}>0$ and $0<\theta<1/2$, such that
	\begin{eqnarray}
		\label{LSin}
		||J(U)||_{\ell^2}\geq \nu_0|F(U)-F(\Phi)|^{1-\theta},
	\end{eqnarray}
	for all $U\in\ell^2_{K+2}$ such that $\|U-\Phi\|_{\ell^2}<\tilde{\epsilon}$, and some constant $\nu_0>0$.
\end{lemma}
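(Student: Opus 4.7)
The plan is to reduce the statement to the classical finite-dimensional \L{}ojasiewicz gradient inequality for real analytic functions, whose statement and proof are presumably what the appendix collects. The first step I would carry out is the identification $J(U)=\nabla F(U)$, where the gradient is taken componentwise with respect to $U_n$. A direct differentiation of the functional $F$ in (\ref{eq010}) gives
\begin{eqnarray*}
\frac{\partial F}{\partial U_n}(U) = k(U_{n+1}-2U_n+U_{n-1})+\omega_d^2 U_n-\beta\omega_d^2 U_n^3 = J(U_n),
\end{eqnarray*}
so that $\|J(U)\|_{\ell^2}=\|\nabla F(U)\|_{\ell^2}$ in our Dirichlet phase space $\ell^2_{K+2}\equiv\mathbb{R}^{K+2}$. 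The equilibrium $\Phi$ is a critical point of $F$ by the stationary equation (\ref{eq063}), i.e.\ $\nabla F(\Phi)=J(\Phi)=0$.

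The second step is to observe that the ambient space is finite dimensional and $F$ is a polynomial, in particular real analytic on $\mathbb{R}^{K+2}$. This is the key structural property enabling the application of the classical \L{}ojasiewicz gradient inequality: for a real analytic $F\colon\mathbb{R}^N\to\mathbb{R}$ and any critical point $\Phi$, there exist a neighborhood $\{U:\|U-\Phi\|<\tilde\epsilon\}$, a constant $\nu_0>0$, and an exponent $\theta\in(0,1/2]$ such that
\begin{eqnarray*}
\|\nabla F(U)\|\geq \nu_0\,|F(U)-F(\Phi)|^{1-\theta}
\end{eqnarray*}
on that neighborhood. Combined with Step~1, this is exactly the claimed inequality (\ref{LSin}). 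Thus the third step is simply to invoke the analytic \L{}ojasiewicz inequality and set the neighborhood radius $\tilde\epsilon$ and the constant $\nu_0$ accordingly.

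The only slightly delicate point is the strict upper bound $\theta<1/2$ (rather than the generic $\theta\leq 1/2$ one finds in the literature). My proposal is to handle this by observing that the inequality with any $\theta'\in(0,1/2]$ immediately implies the weaker version with any $\theta\in(0,\theta')$, after possibly shrinking $\tilde\epsilon$; indeed, for $\|U-\Phi\|<\tilde\epsilon$ with $\tilde\epsilon$ small, the continuity of $F$ gives $|F(U)-F(\Phi)|\le 1$, so $|F(U)-F(\Phi)|^{1-\theta}\le |F(U)-F(\Phi)|^{1-\theta'}$ and the constant $\nu_0$ may be kept. Hence without loss of generality $\theta$ may be chosen strictly less than $1/2$, as stated.

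The main conceptual obstacle is therefore not in the lemma itself but in having the analytic \L{}ojasiewicz inequality in the exactly right form for a finite-dimensional polynomial gradient system; this is a well established result (e.g.\ \cite{Jen1998,Jen2011}) which the authors record in Appendix~\ref{appen1}, and my proof would simply cite it once the gradient identification $J=\nabla F$, the analyticity of the polynomial $F$, and the criticality $\nabla F(\Phi)=0$ have been verified.
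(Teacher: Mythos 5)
Your proof is correct, and it reaches the lemma by a genuinely more elementary route than the paper's. The paper does not invoke the classical finite-dimensional \L{}ojasiewicz inequality directly; instead it applies the abstract Hilbert-space version of Haraux--Jendoubi (Theorem~\ref{LIG}, quoted from \cite{Jen2011}), which requires verifying two operator-theoretic hypotheses on the linearization $\mathcal{L}$ of $J=DF$ at $\Phi$: that $\mathrm{ker}\,\mathcal{L}$ is finite-dimensional, and that $\mathcal{L}+\mathcal{K}$ is invertible for some compact perturbation $\mathcal{K}$. The authors take $\mathcal{K}=\lambda I$ and prove invertibility via a coercivity estimate $(\Lambda(U),U)_{\ell^2}\geq (E_1+\lambda-\omega_d^2)\|U\|_{\ell^2}^2$ using the variational characterization of the principal eigenvalue $E_1$ of $-k\Delta_d$. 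You bypass all of this by observing that $F$ is a polynomial on $\mathbb{R}^{K+2}$, hence real analytic, so the classical gradient inequality at the critical point $\Phi$ applies with no hypotheses on the linearization whatsoever; your gradient computation $J=\nabla F$ and your handling of the exponent range ($\theta<1/2$ versus $\theta\leq 1/2$, fixed by shrinking the neighborhood so that $|F(U)-F(\Phi)|\leq 1$) are both sound. What your shortcut buys is brevity and transparency in the finite-dimensional setting at hand; what the paper's heavier machinery buys is a template that survives the passage to infinite lattices, where the classical $\mathbb{R}^N$ statement is unavailable and the compact-perturbation/coercivity structure becomes essential --- consistent with the authors' stated ambition that the method extend to a wide class of gradient dissipative lattices. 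The paper itself acknowledges in the closing remarks of the appendix that the discrete inequality is an extension of the prototypical $\mathbb{R}^N$ version, so your reading is faithful to their intent even though your derivation is not the one they wrote down.
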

In the light of Lemma \ref{LINEQ}, we shall consider a perturbation of the difference of the Hamiltonian energies (\ref{eq270a}), defined as
\begin{eqnarray}\label{eq027}
	E(t) = \frac{1}{2} \|\dot{U}\|_{\ell^2}^2 - (F(U)-F(\Phi)) - \varepsilon \langle J(U),\dot{U}\rangle_{\ell^2},\;\;\varepsilon>0,
\end{eqnarray}

Let us now elucidate further the role of the perturbation term $-\varepsilon \langle J(U),\dot{U}\rangle_{\ell^2}$ in (\ref{eq027}), namely the inner-product of $J(U)$ and $\dot{U}$. The norm $||\dot{U}|||_{\ell^2}$ becomes arbitrary small for large times, while $J(U)$ is uniformly bounded
%, on the globally bounded in time solution $U$ 
in $\ell^2$.  Furthermore, observe from the inequality (\ref{LSin}), that $||J(U)||_{\ell^2}$ is bounded from below, from the $1-\theta$-power of the difference $|F(U)-F(\Phi)|$. As will be proved in the sequel by the energy-estimates algebra, this lower bound allows to derive a key-estimate  of $E(t)$ [see (\ref{ALI})] only in terms of the bounded quantity $||J(U)||_{\ell^2}$, and  the asymptotically vanishing $||\dot{U}||_{\ell^2}$; this way, the problematic difference $|F(U)-F(\Phi)|$ is eliminated. Elaborating this key-estimate further, we may bound the second term  of (\ref{obvi2}) only in terms of $E(t)$, which is also proved to have an additional property: If $E(t)>0$, for all $t>0$, then $\lim_{t\rightarrow\infty}E(t)=0$ and, hence, the second term of (\ref{obvi2}) becomes arbitrarily small for large times. Notice that if %$E(t)>0$, for all $t>0$ is the only case of implication, since if there exists $t_0$, such that 
$E(t)\leq 0$ for $t\geq t_0$, convergence easily follows in a straightforward manner.
%the aforementioned energy arguments based on differential inequalities, concern a suitable energy functional shall be constructed as guided by the physical intuition and on the account of the \L{}ojasiewicz-type inequality (\ref{LSin}): 
%%%With $J(U)$ in hand,  the DKG equation (\ref{eq0}), is briefly written as
%%%\begin{eqnarray}
%%%\label{eq026A}
%%%\ddot{U}+\delta\dot{U}=J(U).
%%%\end{eqnarray}
%%%%%%%%%%%%%%%%%%%%%%%%%%%%%%%%%%%%%%%%%%%%%%%%%
%{\bf PGK: can a motivation be given for this choice ? The last
%      term is something like the d/dt of F -- why is that being used ?}
%which is an $\varepsilon$-perturbation of the Hamiltonian, for some $\varepsilon>0$. The first one is stated in

The above procedure will be completed in various steps. The first step concerns the proof of two useful differential inequalities on $\dot{E}(t)$ stated in the following Lemmas. 
%%%%%%%%%%%%%%%%%%%%%%%%%%%%%%%%%%%%%%%%%%%%%%%%
\begin{lemma}
	\label{L02}. 
	The derivative $\dot{E}(t)$ of the functional (\ref{eq027}) satisfies the inequality
	%solution of the problem (\ref{eq0})-(\ref{eq02})-(\ref{eq03})  satisfies the differential inequality
	\begin{eqnarray}\label{eq028}
		\dot{E}(t)\leq-\delta \|\dot{U}\|^2_{\ell^2}+\frac{\rho^2}{2}\|\dot{U}\|^2_{\ell^2}
		-\varepsilon\|J(U)\|_{\ell^2}^2 + \varepsilon
		\delta\|J(U)\|_{\ell^2} \ \|\dot{U}\|_{\ell^2},
	\end{eqnarray}
	for any solution of the problem (\ref{eq0})-(\ref{eq02})-(\ref{eq03}).
\end{lemma}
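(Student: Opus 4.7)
}
The plan is to differentiate $E(t)$ in (\ref{eq027}) termwise, and then to exploit the variational structure $\nabla F(U)=J(U)$ together with the equation of motion (\ref{eq015}) to collapse most terms, leaving only the four contributions on the right-hand side of (\ref{eq028}).

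First, I would note that summation-by-parts under the Dirichlet boundary conditions (\ref{eq03}) yields the componentwise identity $\nabla F(U)=J(U)$ for the functional defined in (\ref{eq010}), where $J(U)$ is the map (\ref{eq026}). Consequently, the chain rule gives $\frac{d}{dt}F(U)=\langle J(U),\dot{U}\rangle_{\ell^2}$. Combining this with $\langle\dot{U},\ddot{U}\rangle_{\ell^2}$ and substituting $\ddot{U}=-\delta\dot{U}+J(U)$ from (\ref{eq015}) reproduces the dissipation identity (\ref{eq09}) in the form
\[
\frac{d}{dt}\left[\frac{1}{2}\|\dot{U}\|_{\ell^2}^2 - (F(U)-F(\Phi))\right] = -\delta\|\dot{U}\|_{\ell^2}^2,
\]
which accounts for the first summand $-\delta\|\dot{U}\|_{\ell^2}^2$ in (\ref{eq028}).

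Next, I would differentiate the perturbation $-\varepsilon\langle J(U),\dot{U}\rangle_{\ell^2}$ via the product and chain rules to obtain
\[
-\varepsilon\langle J'(U)\dot{U},\dot{U}\rangle_{\ell^2} \;-\; \varepsilon\langle J(U),\ddot{U}\rangle_{\ell^2},
\]
where $J'(U)$ is the Jacobian of $U\mapsto J(U)$. Substituting $\ddot{U}=-\delta\dot{U}+J(U)$ into the second summand converts it into $-\varepsilon\|J(U)\|_{\ell^2}^2+\varepsilon\delta\langle J(U),\dot{U}\rangle_{\ell^2}$, and a single application of Cauchy--Schwarz to the cross-term produces exactly the last two summands $-\varepsilon\|J(U)\|_{\ell^2}^2+\varepsilon\delta\|J(U)\|_{\ell^2}\|\dot{U}\|_{\ell^2}$ of (\ref{eq028}).

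It remains to handle the Jacobian term. By direct inspection, $J'(U)=k\Delta_d+\omega_d^2 I-3\beta\omega_d^2\,\mathrm{diag}(U_n^2)$ is a symmetric operator on $\ell^2_{K+2}$. The bound $\|-\Delta_d X\|_{\ell^2}\leq 2\|X\|_{\ell^2}$ from (\ref{eq018}), the uniform a priori estimate $\|U(t)\|_{\ell^2}\leq c_1$ from (\ref{eq014A}), and the trivial inequality $\|U\|_{\ell^\infty}\leq\|U\|_{\ell^2}$ on the finite-dimensional space $\ell^2_{K+2}$, together yield a time-independent operator-norm bound $\|J'(U(t))\|\leq L$ for some $L=L(k,\omega_d,\beta,c_1)>0$. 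Setting $\rho^2:=2\varepsilon L$ then gives $|\varepsilon\langle J'(U)\dot{U},\dot{U}\rangle_{\ell^2}|\leq\frac{\rho^2}{2}\|\dot{U}\|_{\ell^2}^2$, which supplies the second summand of (\ref{eq028}) and completes the inequality. The main obstacle is modest but instructive: one must verify that the Jacobian bound is genuinely uniform in $t$, and this is exactly the role played by the a priori boundedness guaranteed by Proposition \ref{L01}; beyond this, the derivation is a bookkeeping exercise based on the chain rule, Cauchy--Schwarz, and the identity $\nabla F=J$.
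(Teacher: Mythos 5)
Your proposal is correct and follows essentially the same route as the paper: differentiate $E(t)$, use $\nabla F=J$ and the energy balance to isolate $-\delta\|\dot U\|_{\ell^2}^2$, substitute $\ddot U=-\delta\dot U+J(U)$ into $\frac{d}{dt}\langle J(U),\dot U\rangle_{\ell^2}$, apply Cauchy--Schwarz to the cross term, and bound the Jacobian contribution using (\ref{eq018}), the a priori bound $\|U\|_{\ell^2}\le c_1$ and $\|U\|_{\ell^\infty}\le\|U\|_{\ell^2}$. The only cosmetic difference is that you state a uniform operator-norm bound on $J'(U)$, whereas the paper estimates $\|\dot J(U)\|_{\ell^2}\le c_4\|\dot U\|_{\ell^2}$ directly in (\ref{eq036}); these are the same computation with $\rho^2/2=\varepsilon c_4$.
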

%%%%%%%%%%%%%%%%%%%%%%%%%%%%%%%%%%%%%%%%%%%%%%%%%%%%%%
\begin{proof}
	We start by differentiating Eq.~(\ref{eq027}), with respect to time. Using (\ref{eq09}) and (\ref{eq012}), we see that $\dot{E}(t)$ satisfies
	\begin{eqnarray}\label{eq029}
		\dot{E}(t) = \frac{d}{dt}\left[\frac{1}{2} \|\dot{U}\|_{\ell^2}^2 - (F(U)-F(\Phi))\right]- \varepsilon 
		\frac{d}{dt}\langle J(U),\dot{U}\rangle_{\ell^2} = -\delta \langle \dot{U},\dot{U}\rangle_{\ell^2} - \varepsilon 
		\frac{d}{dt}\langle J(U),\dot{U}\rangle_{\ell^2}.
	\end{eqnarray}
	We will work on the second term of the right-hand side of (\ref{eq029}), which is 
	\begin{eqnarray}\label{eq030}
		\frac{d}{dt}\langle J(U),\dot{U}\rangle_{\ell^2} = \langle \dot{J}(U),\dot{U}\rangle_{\ell^2} + 
		\langle J(U),\ddot{U}\rangle_{\ell^2}.
	\end{eqnarray}
	Substitution of $\ddot{U}$ from $\ddot{U}=-\delta\dot{U}+J(U)$ [cf. Eq.~(\ref{eq0}) in short-hand notation] into (\ref{eq030}), gives 
	\begin{eqnarray}
		\label{eq031}
		\frac{d}{dt}\langle J(U),\dot{U}\rangle_{\ell^2} &=& \langle \dot{J}(U),\dot{U}\rangle_{\ell^2} + \langle J(U), 
		-\delta\dot{U} + J(U)\rangle_{\ell^2}\nonumber\\[2ex] &=&  \langle \dot{J}(U),\dot{U}\rangle_{\ell^2} + 
		\langle J(U),J(U)\rangle_{\ell^2} - \delta\langle J(U), \dot{U} \rangle_{\ell^2}\nonumber\\[2ex]
		&=& \langle \dot{J}(U),\dot{U}\rangle_{\ell^2}+ \|J(U)\|_{\ell^2}^2 - \delta\langle J(U), \dot{U} \rangle_{\ell^2}.
	\end{eqnarray}
	\begin{comment}
	So 
	\begin{eqnarray}\label{eq11}
	\frac{d}{dt}\langle J(U),\dot{U}\rangle_{\ell^2} = \langle \dot{J}(U),\dot{U}\rangle_{\ell^2}+ \|J(U)\|_{\ell^2}^2 - 
	\delta\langle J(U), \dot{U} \rangle_{\ell^2}.
	\end{eqnarray}
	\end{comment}
	Inserting (\ref{eq031}) to (\ref{eq029}), we arrive to the equation
	\begin{eqnarray}\label{eq032}
		\dot{E}(t) = -\delta \langle \dot{U},\dot{U}\rangle_{\ell^2}- \varepsilon
		\langle \dot{J}(U),\dot{U}\rangle_{\ell^2}- \varepsilon\|J(U)\|_{\ell^2}^2 + \varepsilon
		\delta\langle J(U), \dot{U} \rangle_{\ell^2}.
	\end{eqnarray}
	We proceed, by estimating the last term of the right-hand side of 
	(\ref{eq032}) upon applying the 
	Cauchy - Schwartz inequality: %, to get the estimate
	$$ |\langle J(U), \dot{U} \rangle_{\ell^2}| \leq \|J(U)\|_{\ell^2} \ \|\dot{U}\|_{\ell^2}. $$
	With the latter estimate, Eq.~(\ref{eq032}) becomes the inequality
	\begin{eqnarray}\label{eq033}
		\dot{E}(t)\leq -\delta \langle \dot{U},\dot{U}\rangle_{\ell^2}- \varepsilon
		\langle \dot{J}(U),\dot{U}\rangle_{\ell^2}- \varepsilon\|J(U)\|_{\ell^2}^2 + \varepsilon
		\delta\|J(U)\|_{\ell^2} \ \|\dot{U}\|_{\ell^2}.
	\end{eqnarray}
	Similarly, we estimate the quantity $\varepsilon
	\langle \dot{J}(U),\dot{U}\rangle_{\ell^2}$ appearing in the right-hand side of (\ref{eq033}), as
	\begin{eqnarray}\label{eq034}
		\varepsilon|\langle \dot{J}(U),\dot{U}\rangle_{\ell^2}| \leq \varepsilon \|\dot{J}(U)\|_{\ell^2} \ 
		\|\dot{U}\|_{\ell^2}.
	\end{eqnarray}
	Since $||\dot{U}||_{\ell^2}$ is uniformly bounded, to estimate further the quantity $\varepsilon
	\langle \dot{J}(U),\dot{U}\rangle_{\ell^2}$, we need an estimate for $\|\dot{J}(U)\|_{\ell^2}$. First, from the definition of the functional $J(U)$ in (\ref{eq026}), its time derivative is found to be
	\begin{eqnarray}
		\label{eq035}
		\dot{J}(U_n) = k \Delta_d \dot{U}_n + \omega_d^2 \dot{U}_n  - 
		3\beta \omega_d^2 U_n^2\dot{U}_n.
	\end{eqnarray}
	Note, that by implementing the embedding inequality (\ref{eq05}) with $p=\infty$ and $q=2$,
	\begin{eqnarray}
		\label{eq035A}
		||U^2\dot{U}||_{\ell^2}^2=\sum_{n=0}^{K+1}U_n^4\dot{U}_n^2\leq ||U||_{\ell^{\infty}}^4\sum_{n=0}^{K+1}\dot{U}_n^2\leq ||U||_{\ell^2}^4\sum_{n=0}^{K+1}\dot{U}_n^2=||U||_{\ell^2}^4||\dot{U}||_{\ell^2}^2,
	\end{eqnarray}
	we may observe that (\ref{eq035}) can be estimated as
	\begin{eqnarray}
		\label{eq036}
		\|\dot{J}(U)\|_{\ell^2}&\leq& k \|\Delta_d \dot{U}\|_{\ell^2} + \omega_d^2 \|\dot{U}\|_{\ell^2} + 
		3\beta \omega_d^2 \|U^2\dot{U}\|_{\ell^2}\nonumber\\[2ex]
		&\leq& 2k\|\dot{U}\|_{\ell^2} + \omega_d^2 \|\dot{U}\|_{\ell^2} + 3\beta \omega_d^2 
		\|U\|_{\ell^2}^2 \ 
		\|\dot{U}\|_{\ell^2}\nonumber\\[2ex] &\leq& c_4 \|\dot{U}\|_{\ell^2}, \qquad c_4=2k+\omega_d^2(3\beta c_1^2+1).
	\end{eqnarray}
	\begin{comment}
	which implies the inequality
	\begin{eqnarray}\label{eq15}
	\|\dot{J}(U)\|_{\ell^2}\leq\tilde{\mathcal{K}} \|\dot{U}\|_{\ell^2}.
	\end{eqnarray}
	\end{comment} 
	Multiplying (\ref{eq036}) by $\varepsilon$, we get that 
	\begin{eqnarray}\label{eq037}
		\varepsilon\|\dot{J}(U)\|_{\ell^2}\leq \frac{\rho^2}{2}||\dot{U}||_{\ell^2},\;\;\frac{\rho^2}{2}:=\varepsilon c_4.
	\end{eqnarray}
	Inserting (\ref{eq037}) into (\ref{eq034}), we derive the estimate for $\varepsilon|\langle \dot{J}(U),\dot{U}\rangle_{\ell^2}|$, 
	\begin{eqnarray}
		\label{eq038}
		\varepsilon|\langle \dot{J}(U),\dot{U}\rangle_{\ell^2}|\leq\frac{\rho^2}{2}\|\dot{U}\|^2_{\ell^2}.
	\end{eqnarray}
	Thus, by using (\ref{eq038}) in (\ref{eq033}), we conclude with
	$$\dot{E}(t)\leq-\delta \langle \dot{U},\dot{U}\rangle_{\ell^2}+\frac{\rho^2}{2}\|\dot{U}\|^2_{\ell^2}
	-\varepsilon\|J(U)\|_{\ell^2}^2 + \varepsilon
	\delta\|J(U)\|_{\ell^2} \ \|\dot{U}\|_{\ell^2},$$
	which is the claimed (\ref{eq028}). \end{proof}

%%%%%%%%%%%%%%%%%%%%%%%%%%%%%%%%%%%%%%%%%%%%%%%%%%%%%%%%%%%
Lemma \ref{L02}  will be used as an auxiliary tool for the proof of the second inequality on $\dot{E}(t)$, which plays a key-role on the proof of convergence $\lim_{t\rightarrow\infty}||U(t)-\overline{\Phi}||_{\ell^2}=0$. This second inequality is stated in
%%%%%%%%%%%%%%%%%%%%%%%%%%%%%%%%%%%%%%%%%%%%%%%%5
\begin{lemma}
	\label{L03}. 
	Let $\overline{U}(t)=(U(t), \dot{U}(t))\in \mathrm{C}([0,\infty),\ell^2_{K+2}\times\ell^2_{K+2})$, be the unique solution of the initial-boundary value problem  
	(\ref{eq0})-(\ref{eq02})-(\ref{eq03}), for any initial data $\overline{U}(0)=(U_{n,0}, U_{n,1})\in \ell^2_{K+2}\times\ell^2_{K+2}$. Then,  this solution satisfies the differential inequality
	\begin{eqnarray}\label{eq039}
		\dot{E}(t)\leq-\frac{\varepsilon}{4}\left(\|\dot{U}\|_{\ell^2}+\|J(U)\|_{\ell^2}\right)^2,\ \ \ 
		\forall t\geq0.
	\end{eqnarray}
\end{lemma}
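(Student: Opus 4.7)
The plan is to start from the inequality proved in Lemma \ref{L02},
$$\dot{E}(t)\leq\left(\frac{\rho^{2}}{2}-\delta\right)\|\dot{U}\|_{\ell^2}^{2}-\varepsilon\|J(U)\|_{\ell^2}^{2}+\varepsilon\delta\|J(U)\|_{\ell^2}\|\dot{U}\|_{\ell^2},$$
and to massage it into the perfect-square form asserted in \eqref{eq039}. Expanding the target bound gives
$$-\frac{\varepsilon}{4}\bigl(\|\dot{U}\|_{\ell^2}+\|J(U)\|_{\ell^2}\bigr)^{2}=-\frac{\varepsilon}{4}\|\dot{U}\|_{\ell^2}^{2}-\frac{\varepsilon}{2}\|\dot{U}\|_{\ell^2}\|J(U)\|_{\ell^2}-\frac{\varepsilon}{4}\|J(U)\|_{\ell^2}^{2},$$
so, writing $a=\|\dot{U}\|_{\ell^2}$ and $b=\|J(U)\|_{\ell^2}$ for brevity, it is enough to verify
$$\left(\frac{\rho^{2}}{2}-\delta+\frac{\varepsilon}{4}\right)a^{2}+\varepsilon\!\left(\delta+\tfrac{1}{2}\right)\!ab-\frac{3\varepsilon}{4}b^{2}\leq 0.$$

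The second step is to dispose of the indefinite cross-term by Young's inequality, $ab\leq\tfrac{1}{2}(\mu a^{2}+\mu^{-1}b^{2})$ for $\mu>0$, calibrating $\mu$ so that the $b^{2}$ contribution exactly cancels the only negative-sign term available, i.e. choosing $\mu=\tfrac{2\delta+1}{3}$. This eliminates $b^{2}$ from the inequality and leaves a single requirement of the form
$$\left(\frac{\rho^{2}}{2}-\delta+\frac{\varepsilon}{4}+\frac{\varepsilon(\delta+\frac{1}{2})(2\delta+1)}{6}\right)a^{2}\leq 0.$$

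The third step is to enforce this by smallness of $\varepsilon$. Recalling from \eqref{eq037} that $\rho^{2}/2=\varepsilon c_{4}$ with $c_{4}=2k+\omega_d^{2}(3\beta c_{1}^{2}+1)$ depending only on the a priori bounds of Proposition \ref{L01}, the bracket becomes $\varepsilon\bigl[c_{4}+\tfrac{1}{4}+\tfrac{(2\delta+1)^{2}}{12}\bigr]-\delta$, so any choice
$$0<\varepsilon\leq\frac{\delta}{c_{4}+\tfrac{1}{4}+\tfrac{(2\delta+1)^{2}}{12}}$$
makes the coefficient of $a^{2}$ non-positive, and the desired inequality \eqref{eq039} follows for all $t\geq 0$.

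The main obstacle is of a bookkeeping nature rather than a conceptual one: the parameter $\varepsilon$ appears both explicitly and implicitly through $\rho^{2}=2\varepsilon c_{4}$, so one must make sure the smallness condition on $\varepsilon$ is self-consistent and can be stated purely in terms of the fixed data $(\delta,k,\omega_d^{2},\beta,c_{1})$. Once Young's weight $\mu=\tfrac{2\delta+1}{3}$ is pinned down, the self-consistency is transparent because the implicit $\varepsilon$-dependence is linear, and the threshold depends only on constants that have already been fixed in the a priori estimates \eqref{eq014A}. Note that in the actual application of Lemma \ref{L03} further below, one will also want $\varepsilon<1$ (or some similar additional smallness) so the perturbation $-\varepsilon\langle J(U),\dot{U}\rangle_{\ell^{2}}$ in \eqref{eq027} remains dominated by the leading Hamiltonian contribution; this is compatible with the bound above and can simply be incorporated into the final choice of $\varepsilon$.
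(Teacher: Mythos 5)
Your argument is correct and follows essentially the same route as the paper: both start from the bound of Lemma \ref{L02}, absorb the cross term $\varepsilon\delta\|J(U)\|_{\ell^2}\|\dot{U}\|_{\ell^2}$ via Young's inequality, and then impose a smallness condition on $\varepsilon$ depending only on $\delta$ and $c_4$ (the paper requires $0<\varepsilon<2\delta/(\delta^2+2c_4+1)$ and reaches the perfect square through $a^2+b^2\geq\tfrac{1}{2}(a+b)^2$, whereas you expand the target square directly and calibrate the Young weight $\mu=(2\delta+1)/3$ to cancel the $b^2$ term, arriving at a slightly different but equally valid threshold). The two differ only in bookkeeping, not in substance.
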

%%%%%%%%%%%%%%%%%%%%%%%%%%%%%%%%%%%%%%%%%%%%%%%%%%%%%%
\begin{proof}
	%At this point we will use two inequalities 
	%\begin{enumerate}
	%\item Inequality of Young 
	%\item Inequality wicht says that: $ (a+b)^r \leq K_r (a^r + b^r) \ \ \ \forall a,b,r\in\mathbb{R}^+$
	%\end{enumerate}
	Inequality (\ref{eq039}) will be the outcome of a further estimation process, on the inequality (\ref{eq028}). Indeed, starting the process by handling the last term of (\ref{eq028}), we have (applying Young's inequality), that
	$$ \varepsilon \delta\|J(U)\|_{\ell^2} \ \|\dot{U}\|_{\ell^2}\leq 
	\frac{\varepsilon}{2}\|J(U)\|_{\ell^2}^2 + \frac{\delta^2 \varepsilon}{2}\|\dot{U}\|_{\ell^2}^2.$$
	Then, the inequality (\ref{eq028}) becomes 
	%$$\dot{E}(t)\leq-\delta \|\dot{U}\|^2_{\ell^2}+\frac{\rho^2}{2}\|\dot{U}\|^2_{\ell^2}
	%-\varepsilon\|J(U)\|_{\ell^2}^2+\frac{\varepsilon}{2}\|J(U)\|_{\ell^2}^2 + 
	%\frac{\delta^2 \varepsilon}{2}\|\dot{U}\|_{\ell^2}^2
	%\Rightarrow
	%$$
	\begin{eqnarray}\label{eq040}
		\dot{E}(t)&\leq&-\delta \|\dot{U}\|^2_{\ell^2}+\frac{\rho^2}{2}\|\dot{U}\|^2_{\ell^2}
		-\varepsilon\|J(U)\|_{\ell^2}^2+\frac{\varepsilon}{2}\|J(U)\|_{\ell^2}^2 + 
		\frac{\delta^2 \varepsilon}{2}\|\dot{U}\|_{\ell^2}^2\nonumber\\[2ex]
		&=&\left(\frac{-2\delta+\rho^2+\delta^2\varepsilon}{2}\right) \|\dot{U}\|_{\ell^2}^2
		-\frac{\varepsilon}{2}\|J(U)\|_{\ell^2}^2\nonumber\\[2ex]
		&=&\left(\frac{-2\delta+2\varepsilon c_4+\delta^2\varepsilon}{2}\right) \|\dot{U}\|_{\ell^2}^2
		-\frac{\varepsilon}{2}\|J(U)\|_{\ell^2}^2,
	\end{eqnarray}
	%or 
	%\begin{eqnarray}\label{eq041}
	% \dot{E}(t)\leq \left(\frac{\delta^2 \varepsilon-2\delta + %\rho^2}{2}\right)\|\dot{U}\|_{\ell^2}^2 
	% -\frac{\varepsilon}{2}\|J(U)\|_{\ell^2}^2.
	%\end{eqnarray}
	by the definition of the constant $\rho^2=2\varepsilon c_4$ in (\ref{eq037}). 
	Requiring the coefficient of $||\dot{U}||_{\ell^2}^2$ in the right-hand side of (\ref{eq040}) to be negative, we select $\varepsilon$ so that 
	%\begin{eqnarray*}
	%\label{eq042}
	%\varepsilon< \frac{2\delta - \rho^2}{\delta^2}=\frac{2\delta - 2\varepsilon c_4}{\delta^2},
	%\end{eqnarray*}
	\begin{eqnarray}
		\label{eq041}
		0<\varepsilon<\frac{2\delta}{\delta^2+2c_4},
	\end{eqnarray}
	%where $(2\delta - \rho^2)/\delta^2>0$ due to (\ref{eq041}). Then, combining (\ref{eq041}) and (\ref{eq042}), it is allowed to choose 
	%\begin{eqnarray}\label{eq043}
	%0<\varepsilon<\min\left\{\frac{\rho^2}{\|\dot{J}(U)\|_{\ell^2}}, \frac{2\delta -\rho^2}{\delta^2}
	%\right\}:= \varepsilon_1,
	%\end{eqnarray}
	and thus  
	\begin{eqnarray}
		\label{eq043}
		\left(\frac{-2\delta+2\varepsilon c_4+\delta^2\varepsilon}{2}\right):= - \varepsilon_{1}<0.
	\end{eqnarray}
	Then, inequality (\ref{eq040}) becomes
	\begin{eqnarray}
		\label{eq044}
		\dot{E}(t)\leq -\varepsilon_{1}\|\dot{U}\|_{\ell^2}^2-\frac{\varepsilon}{2}\|J(U)\|_{\ell^2}^2.
	\end{eqnarray}
	%or equivalently,
	%\begin{eqnarray}
	%\label{eq044}
	%-\dot{E}(t)\geq \varepsilon_{1}\|\dot{U}\|_{\ell^2}^2+\frac{\varepsilon}{2}\|J(U)\|_{\ell^2}^2>0.
	%\end{eqnarray}
	In (\ref{eq044}), we further require  
	%$\varepsilon>0$, to satisfy 
	$\varepsilon_1>\varepsilon/2$, which results in the final assumption for $\varepsilon$:  
	\begin{comment}
	\frac{\varepsilon}{2}<\varepsilon\delta$$
	so i can get out common factor $\frac{\varepsilon}{2}$.
	Therefore we demand $$\frac{\varepsilon}{2}< \frac{2\delta-\delta^2 \varepsilon -\rho^2 }{2}$$
	$$\Leftrightarrow \varepsilon< 2\delta-\delta^2 \varepsilon -\rho^2$$
	$$\Leftrightarrow \varepsilon+\delta^2 \varepsilon<2\delta -\rho^2$$
	$$\Leftrightarrow \varepsilon<\frac{2\delta -\rho^2}{\delta^2+1}.$$
	On the hand must be true that 
	$$\frac{2\delta -\rho^2}{\delta^2+1}<\frac{2\delta-\rho^2}{\delta^2}.$$
	Consequently
	\begin{eqnarray}\label{eq044}
	0<\varepsilon<\min\left\{\frac{\rho^2}{\|\dot{J}(U)\|_{\ell^2}},\frac{2\delta -\rho^2}{\delta^2+1}
	\right\}:= \varepsilon_T
	\end{eqnarray}
	and $\exists \varepsilon_T$ such that $\forall \varepsilon\in[0,\varepsilon_T)$
	therefore true that 
	\end{comment}
	\begin{eqnarray}
		\label{eq045}
		0<\varepsilon<\frac{2\delta}{\delta^2+2c_4+1}.
	\end{eqnarray}
	Let us note, that if (\ref{eq045}) holds, then (\ref{eq041}) is also readily satisfied. With the restriction (\ref{eq045}) at hand, (\ref{eq044}) implies the claimed (\ref{eq039})
	\begin{eqnarray}
		\label{eq046}
		\dot{E}(t)\leq -\frac{\varepsilon}{2}\left(\|\dot{U}\|_{\ell^2}^2+\|J(U)\|_{\ell^2}^2\right)\leq -\frac{\varepsilon}{4}\left(\|\dot{U}\|_{\ell^2}+\|J(U)\|_{\ell^2}\right)^2,
	\end{eqnarray}
	for which the inequality 
	%\begin{eqnarray}\label{eq045}
	$\left(\|\dot{U}\|_{\ell^2}^2+\|J(U)\|_{\ell^2}^2\right)\geq
	\frac{1}{2}\left(\|\dot{U}\|_{\ell^2}+\|J(U)\|_{\ell^2}\right)^2$ has been also used. 
	%\end{eqnarray}
	\begin{comment}
	so 
	\begin{eqnarray}\label{eq046}
	-\dot{E}(t)\geq\frac{\varepsilon}{4}\left(\|\dot{U}\|_{\ell^2}+\|J(U)\|_{\ell^2}\right)^2\ \ \ 
	\forall t\geq0.
	\end{eqnarray}
	\end{comment}
\end{proof}
%%%%%%%%%%%%%%%%%%%%%%%%%%%%%%%%%%%%%%%%%%%%%%%%%%%%%%%%%%%
%%%%%%%%%%%%%%%%%%%%%%%%%%%%%%%%%%%%%%%%%%%%%%%%%%%%%%%%%%%
%Apart of the two auxiliary Lemmas \ref{L02} and \ref{L03}. 
%%%%%%%%%
%    {\bf PGK: is the inequality already given at the discrete level
%      in the papers cited~\cite{Jen1998,Jen2011} ? If so, and no
%      other statement is needed for the Lemma above, just delete this
%      boldfaced comment.}
%
%\setcounter{theorem}{1}
After the above preparations, we proceed to the proof of
\begin{theorem}
	\label{L04}	
	Let $\overline{U}(t)=(U(t), \dot{U}(t))\in \mathrm{C}([0,\infty),\ell^2_{K+2}\times\ell^2_{K+2})$, be the unique solution of the initial-boundary value problem  
	(\ref{eq0})-(\ref{eq02})-(\ref{eq03}), for any initial data $\overline{U}(0)=(U_{n,0}, U_{n,1})\in \ell^2_{K+2}\times\ell^2_{K+2}$. Then,
	\begin{eqnarray}
		\label{eq047} 
		\lim_{t\to +\infty} \|U(t)-\Phi\|_{\ell^2} = 0,
	\end{eqnarray} 
	where $\Phi$ is a solution of Eq. (\ref{eq063})-(\ref{eq063A}).
\end{theorem}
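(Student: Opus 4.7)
The overall strategy is to implement a discrete version of the classical \L{}ojasiewicz--Simon convergence argument, combining Proposition~\ref{L01} with the differential inequality for $E(t)$ from Lemma~\ref{L03} and the \L{}ojasiewicz inequality from Lemma~\ref{LINEQ}. The central goal is to prove that $\int_{t_N}^{\infty}\|\dot{U}(s)\|_{\ell^2}\,ds<\infty$ for some sufficiently large $t_N$, since then (\ref{obvi2}) combined with the definition of the $\omega$-limit set immediately yields $\|U(t)-\Phi\|_{\ell^2}\to 0$.

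First, I would invoke the Lyapunov structure of $\mathcal{H}$ together with Proposition~\ref{L01} to pick an equilibrium $\Phi$ inside the non-empty compact $\omega$-limit set $\omega(\overline{U}_0)\subset \ell^2_{K+2}\times\ell^2_{K+2}$: every such limit point has the form $(\Phi,\mathbf{0})$ with $\Phi$ solving (\ref{eq063})--(\ref{eq063A}). By definition of $\omega(\overline{U}_0)$ there exists a sequence $t_n\to\infty$ with $(U(t_n),\dot{U}(t_n))\to(\Phi,\mathbf{0})$, so that continuity of $F$ and of $J$ yields $F(U(t_n))\to F(\Phi)$ and $\langle J(U(t_n)),\dot{U}(t_n)\rangle_{\ell^2}\to 0$; in particular $E(t_n)\to 0$.

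The core of the argument is the dichotomy anticipated in the text. If $E(t_{0})\leq 0$ at some $t_{0}$, the monotonicity $\dot{E}\leq 0$ from Lemma~\ref{L03} forces $E(t)\leq 0$ for $t\geq t_0$, whence $E(t)\uparrow 0$ monotonically and the remaining convergence falls out of Lemma~\ref{LINEQ} combined with $\|\dot U\|_{\ell^2}\to 0$ in a direct fashion. The substantive case is $E(t)>0$ for all $t$: then $E$ is positive and non-increasing, and I would run the \L{}ojasiewicz--Simon trick on $H(t)=E(t)^{\theta}$. Combining (a) inequality (\ref{eq039}) of Lemma~\ref{L03} with (b) an auxiliary bound
\[
E(t)^{1-\theta}\leq C\bigl(\|\dot{U}(t)\|_{\ell^2}+\|J(U(t))\|_{\ell^2}\bigr),
\]
valid in a sufficiently small neighborhood of $\Phi$, one obtains
\[
\frac{d}{dt}E(t)^{\theta}=\theta\,E(t)^{\theta-1}\dot{E}(t)\leq -c\,\|\dot{U}(t)\|_{\ell^2}
\]
for some $c>0$. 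Integrating from $t_N$ to $t$ gives $\int_{t_N}^{\infty}\|\dot{U}(s)\|_{\ell^2}\,ds\leq c^{-1}E(t_N)^{\theta}<\infty$, and inserting this into (\ref{obvi2}) finishes the proof of (\ref{eq047}).

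The auxiliary bound~(b) is, I expect, the main technical obstacle. The plan is to start from the elementary estimate
\[
E(t)\leq \frac{1}{2}\|\dot{U}\|_{\ell^2}^{2}+|F(U)-F(\Phi)|+\varepsilon\|J(U)\|_{\ell^2}\|\dot{U}\|_{\ell^2},
\]
raise both sides to the power $1-\theta$, use subadditivity of $x\mapsto x^{1-\theta}$ on $[0,\infty)$ (since $0<1-\theta<1$), and then invoke Lemma~\ref{LINEQ} to replace $|F(U)-F(\Phi)|^{1-\theta}$ by $\nu_{0}^{-1}\|J(U)\|_{\ell^2}$. Because $\theta\in(0,1/2)$ one has $2(1-\theta)>1$, so the leftover terms $\|\dot{U}\|_{\ell^2}^{2(1-\theta)}$ and $(\|J(U)\|_{\ell^2}\|\dot{U}\|_{\ell^2})^{1-\theta}$ are dominated by $\|\dot{U}\|_{\ell^2}+\|J(U)\|_{\ell^2}$ once $t_N$ is taken large enough that both norms lie below $1$. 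A parallel bootstrapping step is needed to keep the trajectory in the \L{}ojasiewicz ball $B(\Phi,\tilde{\epsilon})$ where Lemma~\ref{LINEQ} applies: defining $t^{*}=\sup\{t>t_N:\|U(s)-\Phi\|_{\ell^2}<\tilde{\epsilon}\ \text{for all}\ s\in[t_N,t]\}$, the $L^{1}$-bound on $\|\dot{U}\|_{\ell^2}$ combined with (\ref{obvi2}) and the initial closeness $\|U(t_N)-\Phi\|_{\ell^2}\ll\tilde{\epsilon}$ contradicts $t^{*}<\infty$.
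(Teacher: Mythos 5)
Your proposal is correct and follows essentially the same route as the paper: the perturbed energy $E(t)$, the bound $E(t)^{1-\theta}\leq C(\|\dot{U}\|_{\ell^2}+\|J(U)\|_{\ell^2})$ obtained from subadditivity, Young's inequality and Lemma~\ref{LINEQ}, the resulting differential inequality for $E(t)^{\theta}$, and the continuation argument via $t^{*}=\sup\{\cdots\}$ are exactly the steps of the paper's proof. The only cosmetic difference is that you phrase the escape-time step as a contradiction while the paper phrases it as showing $\bar{t}$ can be extended to infinity; the substance is identical.
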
 
\begin{proof}	
	The proof of (\ref{eq047}) is using the assumption $E(t)\geq 0$, $\forall t\geq 0$ (we only consider $E(t)>0$ since the case $E(t)=0$ is straightforward).  
	%We must investigate with two cases the results sign of $E(t).$
	%\subsubsection{Case A}
	%If $E(t)=0$
	%\subsubsection{Case B}
	%If $E(t)>0 \ \ \ \forall t$, then 
	%We start by considering the function $[E(t)]^{\theta}$, for some
	%$0<\theta<\frac{1}{2}$. Its derivative satisfies 
	%\begin{eqnarray}\label{eq049}
	%-\frac{d}{dt}[E(t)]^{\theta}= -\theta \dot{E}(t)[E(t)]^{\theta-1}.
	%\end{eqnarray}
	As we will see below, Lemma \ref{LINEQ} suggests to use the quantity 
	%$$\frac{1}{[E(t)]^{\theta-1}}= [E(t)]^{1-\theta}.$$
	$[E(t)]^{1-\theta}$ [recall the definition of $E(t)$ in (\ref{eq027})]. We observe that it satisfies
	\begin{eqnarray}\label{eq048}
		[E(t)]^{1-\theta}=\left[\frac{1}{2}\|\dot{U}\|_{\ell^2}^2 - (F(U)-F(\Phi)) - \varepsilon 
		\langle J(U),\dot{U}\rangle \right]^{1-\theta}\leq
		\left[\frac{1}{2}\|\dot{U}\|_{\ell^2}^2 +|F(U)-F(\Phi)| + \varepsilon|\langle J(U),\dot{U}\rangle|
		\right]^{1-\theta}.
	\end{eqnarray}
	We  shall use the inequality $(a+b)^r\leq K_r(a^r+b^r)$, for some $K_r>0$ and all $a,b,r>0$ in the case $r=1-\theta$, to estimate the right-hand side of (\ref{eq048}). We get that
	\begin{eqnarray}\label{eq049}
		[E(t)]^{1-\theta}
		%K_{\theta}\left(\|\dot{U}\|_{\ell^2}^{2(1-\theta)}+|F(U)|^{(1-\theta)} +
		%\varepsilon|\langle
		%J(U),\dot{U}\rangle|^{1-\theta}\right)\\[2ex]
		\leq K_{\theta}\left(\|\dot{U}\|_{\ell^2}^{2(1-\theta)}+|F(U)-F(\Phi)|^{(1-\theta)} +
		|\langle J(U),\dot{U}\rangle|^{1-\theta}\right).
	\end{eqnarray}
	%Because $0<\delta<+\infty$ then $\varepsilon<1$
	Estimating the term $|\langle J(U),\dot{U}\rangle|^{1-\theta}$ of (\ref{eq049}), by the Cauchy - Schwartz inequality, we arrive at
	\begin{eqnarray}\label{eq050}
		[E(t)]^{1-\theta}\leq K_{\theta}\left(\|\dot{U}\|_{\ell^2}^{2(1-\theta)}+|F(U)-F(\Phi)|^{(1-\theta)} +
		\| J(U)\|_{\ell^2}^{1-\theta} \ \|\dot{U}\|_{\ell^2}^{1-\theta}\right).
	\end{eqnarray}
	The term $\| J(U)\|_{\ell^2}^{1-\theta} \ \|\dot{U}\|_{\ell^2}^{1-\theta}$ in (\ref{eq050}) is estimated by using Young's inequality $ab\leq a^p/p+b^q/q$, $1/p +1/q=1$, which holds for all $a,b>0$, as follows: set  $p=1/(1-\theta)>1$ and $q=1/\theta$, and obtain
	$$\begin{array}{llllllllllllllllllll}
	\| J(U)\|_{\ell^2}^{1-\theta} \ \|\dot{U}\|_{\ell^2}^{1-\theta} &\leq& 
	(1-\theta) \| J(U)\|_{\ell^2} + \theta \|\dot{U}\|_{\ell^2}^{(\frac{1-\theta}{\theta})}\\[2ex]
	&\leq&\| J(U)\|_{\ell^2}+\|\dot{U}\|_{\ell^2}^{(\frac{1-\theta}{\theta})}.
	\end{array}$$
	Consequently, inserting the above estimate into  (\ref{eq050}), it is found that $[E(t)]^{1-\theta}$ satisfies:
	\begin{eqnarray}\label{eq051}
		[E(t)]^{1-\theta}\leq K_{\theta}\left(\|\dot{U}\|_{\ell^2}^{2(1-\theta)}+|F(U)-F(\Phi)|^{(1-\theta)} +
		\| J(U)\|_{\ell^2}+\|\dot{U}\|_{\ell^2}^{(\frac{1-\theta}{\theta})}\right).
	\end{eqnarray}
	
	Now, by the definition of the $\omega$-limit set (\ref{limset}), and the fact that it contains all equilibria, there exists a sequence $t_n\rightarrow \infty$ as $n\rightarrow\infty$ such that $\lim_{n\rightarrow\infty}||U(t_n)-\Phi||_{\ell^2}=0$. In other words, there exists  $N\in\mathbb{N}^*$, such that, 
	\begin{eqnarray}
		\label{eq053}
		\|U(t_n)-\Phi\|_{\ell^2}<\frac{\tilde{\epsilon}}{2},\;\;\forall n\geq N,
	\end{eqnarray}
	for a sufficiently small $\tilde{\epsilon}>0$.  On the other hand, since $E(t)$ is decreasing --a consequence of Lemma \ref{L03}-- and $E(t)>0$, we have that
	$\lim_{t\to+\infty}E(t) = 0$.  This fact, together with Proposition \ref{L01}, allows us to consider among all the sequences $t_n\rightarrow\infty$,  the same sequence $t_n$ and $\tilde{\epsilon}$ as above, and derive that 
	\begin{eqnarray}\label{eq052}
		\|\dot{U}(t_n)\|_{\ell^2}\leq 1 \ \ \mbox{and} \ \ \ \frac{4\tilde{K}_\theta}{\theta\varepsilon}[E(t_n)]^{\theta}
		<\frac{\tilde{\epsilon}}{2},\;\;\forall n\geq N,
	\end{eqnarray}
	%$$ \ \ \ \forall t\geq t_N.$$
	%\|U(t_n)\|<\frac{\tilde{\epsilon}}{2} 
	for some constant $\tilde{K}_\theta$ that will be defined below. 
	
	Now, we proceed by analyzing a contradiction argument, assuming that $U$ is not converging to $\Phi$, and that although (\ref{eq053}) and (\ref{eq052}) are satisfied, the solution $U$ may escape from the $\tilde{\epsilon}$-vicinity of $\Phi$. This escape could occur for $t>\bar{t}$, where  $\bar{t}$, and can be defined from  (\ref{eq053}), as
	\begin{eqnarray}
		\label{eq054}
		\bar{t}:= \sup\left\{t\geq t_N, \ \ \ \|U(s)-\Phi\|_{\ell^2}<\tilde{\epsilon} , \ \ \ s\in[t_N,t]\right\}.
	\end{eqnarray}
	Thus, if escape occurs then $\bar{t}$ is finite. With this assumption,
	%to wit $\forall t\in(t_N , \bar{t}) , \ \ \ %\|U(s)\|<\tilde{\epsilon}$
	%also we consider $F(0)=0$ and $J(0)=0$ and from Theorem %...... 
	%$$ \|J(U(t))\|_{\ell^2}\geq \mathcal{C}_J |F(U(t)) - F(\phi(t))|^{1-\theta}$$
	%but from our conditions $\phi_0 = \phi_{k+1} = 0$, so this %theorem its true. 
	%Therefore, 
	%%for any $t\in[t_N, \bar{t})$, 
	%by using (\ref{eq054}), 
	we apply Lemma \ref{LINEQ} to deduce that  
	\begin{eqnarray}\label{eq055}
		\|J(U(t))\|_{\ell^2}\geq \nu_0|F(U(t))-F(\Phi)|^{1-\theta}, \ \ \forall t\in[t_N,\bar{t}).
	\end{eqnarray}
	Furthermore, the following inequalities are valid:
	%$0<\theta<\frac{1}{2},$ 
	\begin{eqnarray}\label{eq056}
		\|\dot{U}\|_{\ell^2}^{2(1-\theta)}\leq \|\dot{U}\|_{\ell^2} \ \ \mbox{and} \ \ \ 
		\|\dot{U}\|_{\ell^2}^{(\frac{1-\theta}{\theta})}\leq\|\dot{U}\|_{\ell^2}, \ \ \ \forall
		t\in[t_N,\bar{t}).
	\end{eqnarray}
	Consequently, using (\ref{eq055}) and (\ref{eq056}), the inequality (\ref{eq051}) becomes:
	\begin{eqnarray}
		\label{ALI}
		[E(t)]^{1-\theta}\leq \tilde{K}_\theta\left(\|\dot{U}\|_{\ell^2} + \| J(U)\|_{\ell^2}\right),
		\ \ \ \ \forall t\in[t_N,\bar{t}),
	\end{eqnarray}
	where $\tilde{K}_\theta=\max\{\nu_0,2\}K_{\theta}$.
	\begin{comment}
	We start by considering the function $[E(t)]^{\theta}$, for some
	$0<\theta<\frac{1}{2}$. Its derivative satisfies \begin{eqnarray}\label{eq049}
	-\frac{d}{dt}[E(t)]^{\theta}= -\theta \dot{E}(t)[E(t)]^{\theta-1}.
	\end{eqnarray}
	\end{comment}
	The estimate (\ref{ALI}) implies that
	\begin{eqnarray}\label{eq057}
		[E(t)]^{\theta-1} = \frac{1}{[E(t)]^{1-\theta}}\geq \frac{1}{\tilde{K}_\theta}
		\frac{1}{\left(\|\dot{U}\|_{\ell^2} + \| J(U)\|_{\ell^2}\right)},\ \ \ \ 
		\forall t\in[t_N,\bar{t}).
	\end{eqnarray}
	Differentiating the quantity $[E(t)]^{\theta}$ with respect to time, and using  (\ref{eq057}) and Lemma \ref{L03}, we observe that it satisfies the inequality 
	\begin{eqnarray}
		\label{eq058}
		-\frac{d}{dt}[E(t)]^{\theta}&=& -\theta \dot{E}(t)[E(t)]^{\theta-1}\nonumber\\[2ex]
		&\geq& \frac{\theta\varepsilon}{4}\left(\|\dot{U}\|_{\ell^2} + \| J(U)\|_{\ell^2}\right)^2 
		[E(t)]^{\theta-1}\nonumber\\[2ex]
		&\geq& \frac{\theta\varepsilon}{4}\left(\|\dot{U}\|_{\ell^2} + \| J(U)\|_{\ell^2}\right)^2 
		\frac{1}{\tilde{K}_\theta\left(\|\dot{U}\|_{\ell^2} + \| J(U)\|_{\ell^2}\right)},\;\;\forall t\in[t_N,\bar{t}).
	\end{eqnarray}
	Therefore we arrive to the differential inequality for $[E(t)]^{\theta}$:
	\begin{eqnarray}\label{eq059}
		-\frac{d}{dt}[E(t)]^{\theta}\geq\frac{\theta\varepsilon}{4 \tilde{K}_\theta}
		\left(\|\dot{U}\|_{\ell^2} + \| J(U)\|_{\ell^2}\right),\;\;\forall t\in[t_N,\bar{t}).
	\end{eqnarray}
	Integrating (\ref{eq059}) with respect to $t$ in $(t_N,\bar{t})$,  we get
	%\begin{comment} 
	%$$ [E(t_N)]^{\theta} - [E(\bar{t})]^{\theta}\geq %\frac{\theta\varepsilon}{8 K_\theta}
	%\int_{t_N}^{\bar{t}}\|\dot{U}(t)\|_{\ell^2} dt + \frac{\theta\varepsilon}{8 K_\theta}
	%\int_{t_N}^{\bar{t}} \| J(U(t))\|_{\ell^2} dt $$
	%$$\Rightarrow
	%\end{comment}
	\begin{eqnarray} 
		\label{eq060}
		[E(t_N)]^{\theta}\geq\frac{\theta\varepsilon}{4 \tilde{K}_\theta}
		\int_{t_N}^{\bar{t}}\|\dot{U}(t)\|_{\ell^2} dt + \frac{\theta\varepsilon}{4 \tilde{K}_\theta}
		\int_{t_N}^{\bar{t}} \| J(U(t))\|_{\ell^2} dt +[E(\bar{t})]^{\theta}.
	\end{eqnarray}
	%Yet 
	%$$\int_{t_N}^{\bar{t}} \| J(U(t))\|_{\ell^2} dt>0$$
	%and 
	%$$[E(\bar{t})]^{\theta}>0.$$
	From the positivity of the  last two integral terms of  (\ref{eq060}), the inequality
	%$$ [E(t_N)]^{\theta}\geq\frac{\theta\varepsilon}{8 K_\theta}
	%\int_{t_N}^{\bar{t}}\|\dot{U}(t)\|_{\ell^2} dt$$  
	\begin{eqnarray}\label{eq061}
		\int_{t_N}^{\bar{t}}\|\dot{U}(t)\|_{\ell^2} dt\leq \frac{4 \tilde{K}_\theta}{\theta\varepsilon}
		[E(t_N)]^{\theta},
	\end{eqnarray}
	readily follows. Since 
	$$ U(\bar{t})-\Phi = U(t_N)-\Phi + \int_{t_N}^{\bar{t}}\frac{d}{dt}(U(t)-\Phi) dt,$$ we obtain the inequality
	\begin{eqnarray}
		\label{eq062A}
		\|U(\bar{t})-\Phi\|_{\ell^2} 
		%\leq\|U(t_N)\| + \int_{t_N}^{\bar{t}}\dot{U}(t) %dt\|_{\ell^2}
		\leq \|U(t_N)-\Phi\|_{\ell^2} + \int_{t_N}^{\bar{t}}\|\frac{d}{dt}(U(t)-\Phi)\|_{\ell^2} dt.
	\end{eqnarray}
	Furthermore, from (\ref{eq052}) we have that 
	$\frac{4 \tilde{K}_\theta}{\theta\varepsilon}
	[E(t)]^{\theta}<\frac{\tilde{\epsilon}}{2}$,
	not only for $t\in[t_N, \bar{t}]$, but for all $t\geq t_N$. Hence, (\ref{eq061}) implies also that   
	\begin{eqnarray}\label{eq062}
		\int_{t_N}^{\bar{t}}\|\dot{U}(t)\|_{\ell^2} dt<\frac{\tilde{\epsilon}}{2}.
	\end{eqnarray}
	Besides, $\|U(t_N)-\Phi\|_{\ell^2}<\frac{\tilde{\epsilon}}{2}.$ 
	Inserting the latter, as well as (\ref{eq062}), into (\ref{eq062A}) and summing, we obtain  
	\begin{eqnarray}
		\label{contra}
		\|U(\bar{t})-\Phi\|_{\ell^2}<\tilde{\epsilon},
	\end{eqnarray}
	The fact that the supremum $\bar{t}$ still satisfies (\ref{contra}), establishes that $\bar{t}$ can be continued arbitrarily, extending $\bar{t}\rightarrow\infty$.  Thus, (\ref{contra}) is valid for all $t\in [t_N,\infty)$, concluding the proof of (\ref{eq047}).
	%$$\|U(\bar{t})\|_{\ell^2}<\frac{\tilde{\epsilon}}{2}+\frac{%\tilde{\epsilon}}{2},$$
	%so 
	%\begin{eqnarray}\label{eq063}
	% \|U(\bar{t})\|_{\ell^2}<\tilde{\epsilon}.
	%\end{eqnarray}
	%The esitmate (\ref{eq063}), allows for letting $\bar{t}\rightarrow\infty$.
	\begin{comment}
	On the other hand, $\bar{t}\in\mathbb{R}^{+}$, namely $\bar{t}= +\infty$,
	therefore it can't be $\bar{t}<\infty$, as we assumed so we have 
	
	\begin{eqnarray}\label{eq061}
	\int_{t_N}^{\infty}\|\dot{U}(t)\|_{\ell^2} dt \leq \frac{8 K_\theta}{\theta\varepsilon}
	[E(t_N)]^{\theta}<\infty.
	\end{eqnarray}
	For $\bar{t}=\infty$ we have 
	$$\lim_{t\to\infty}\|U(t)\|_{\ell^2} = \|U(\infty)\|_{\ell^2} \leq 
	\|U(t_N)\|_{\ell^2}+\int_{t_N}^{\infty}\|\dot{U}(t)\|_{\ell^2} dt
	\leq \|U(t_N)\|_{\ell^2}+\frac{8 K_\theta}{\theta\varepsilon}
	[E(t_N)]^{\theta}.$$
	\end{comment}  
	%Therefore from (\ref{eq29}) 
	%$\lim_{t\to\infty}\|U(t)-\Phi \|_{\ell^2} = 0$.
	%and 
	%$$\lim_{t\to\infty}\left\{ \|U(t_n)\| + \frac{8 %K_\theta}{\theta\varepsilon}
	% [E(t_n)]^{\theta}\right\}=0.$$
\end{proof}
%
%{\bf PGK: can you add one more line or two to explain how the
%  $\tilde{\epsilon}$'s add up and why this allows to extend
%  the sup to infinity i.e., to explain the last step of the
%  proof more transprently ? A perhaps naive question: from a physical
%  perspective one knows that the energy is always decreasing
%  so the expectation is that at the asymptotic state it is impossible
%  to have speed still, as the energy would continue to decrease past
%  that state given the $\delta$ term. There is no way to use an
%  argument like this to prove the asympoting to $\Phi$ by some
%  sort of contradiction ? (implicitly what bothers me is that
%  while the proof is extremely elegant, I don't have any real intuition
%about it and about the choice of the $E(t)$...)}

\section{Bifurcations of nonlinear equilibria}
\setcounter{equation}{0}
In this section, we will study the existence of the second class of equilibria for the problem 
(\ref{eq0})-(\ref{eq02})-(\ref{eq03}) which, as noted previously, have the form 
$\bar{\Phi}=( \Phi, \mathbf{0})$, where $\Phi$ are solutions of the nonlinear algebraic system 
(\ref{eq063})-(\ref{eq063A}). 
%In particular, using the shorthand-notation for the one-dimensional discrete Laplacian (\ref{eqD}), we rewrite Eq.~(\ref{eq063})-(\ref{eq063A}) supplementing it with its Dirichlet boundary conditions
%%\begin{eqnarray}
%% \label{eq40}
%%-k(U_{n+1}-2U_n+U_{n-1}) = \omega_d^2 (U_{n}-\beta U_{n}^3) %,
%%\end{eqnarray}
%%or in the form 
%\begin{eqnarray}
% \label{eq063}
%-k\Delta_d \Phi_n &=& \omega_d^2 \Phi_{n}-\beta\omega_d^2 \Phi_{n}^3,\\[2ex]
%\label{eq063A}
%\Phi_{0}&=&\Phi_{K+1}=0.
%\end{eqnarray}
%\begin{comment}
%
%From Theory of Rabinovitz for interval $[-L,L]$, $k=\frac{1}{h^2}$ and $h = \frac{2L}{K+1},$ 
%we consider the equation 
%
%
%\begin{eqnarray}
% \label{eq42}
%-k\Delta_d U_n + \beta\omega_d^2 U_{n}^3= \omega_d^2 \Phi_{n}.
%\end{eqnarray}
%
%and the Hamiltonian energy 
%\end{comment}
%with the same Dirichlet boundary (\ref{eq04}).
%Our aim is to use the solutions of the linear problem %in order to find pertinent solutions in the nonlinear %regime. Our analysis starts by providing that the %energy spectrum of the nonlinear equation arices from %the relevant spectrum found in the linear case. Before %proceeding further it is relevant to note that the %problem (\ref{eq063}), possesses two quantities: the %Hamiltonian energy 
%
%\begin{eqnarray}\label{eq064}
% \mathcal{H}(t) = \frac{k}{2} \sum_{n=0}^{K+1}(\Phi_{n+1} - \Phi_n)^2 +\frac{\beta \omega_d^2}{4}
%\sum_{n=0}^{K+1}\Phi_n^4 -\frac{\omega_d^2}{2}\sum_{n=0}^{K+1} \Phi_n^2 
%\end{eqnarray}
%
%and the number of particles of the chain, given by 
%\begin{eqnarray}\label{eq065}
%N = h\sum_{n=0}^{K+1}|\Phi_n|^2.
%\end{eqnarray}
The result that we will prove here
% is twofold; it will 
not only shows the existence of solutions of (\ref{eq063})-(\ref{eq063A}), but 
%it will 
also justifies that these equilibrium solutions bifurcate from the 
eigenvalues of the linear discrete eigenvalue-problem: 
\begin{eqnarray}
\label{eq063L}
-k\Delta_d \varphi_{n}^j &=& E\varphi_n^{j},\\[2ex]
\label{eq063AL}
\varphi_{0}^j&=&\varphi_{K+1}^j=0.
\end{eqnarray}
We recall that the eigenvalues of (\ref{eq063L})-(\ref{eq063AL}) are \cite{JNLS2013},
\begin{eqnarray}
\label{EIGS}
	E_j (h)=\frac{4}{h^2}\sin^2\left(\frac{j\pi h}{2L}\right)=\frac{4(K+1)^2}{L^2}\sin^2\left(\frac{j\pi}{2(K+1)}\right),\;\;j=1,\ldots,K.
\end{eqnarray}
Thus, the principal eigenvalue $E_1$ is:
\begin{eqnarray}
\label{case1.1}
E_1(h)=\frac{4}{h^2}\sin^2\left(\frac{\pi h}{2L}\right)=\frac{4(K+1)^2}{L^2}\sin^2\left(\frac{\pi}{2(K+1)}\right).
\end{eqnarray}
It will be also useful 
%for the numerical simulations, 
to discuss the behavior of the eigenvalues in the various discreteness regimes. The {\em discrete} regime  corresponds to the case $h=O(1)$. In the {\em continuum} limit 
of $h\rightarrow 0$, and the {\em anti-continuum} limit of
$h\rightarrow\infty$ respectively,  we observe that
\begin{eqnarray}
\label{case1.2A}
&&\lim_{h\rightarrow 0}E_1(h) =\lambda_1=\frac{\pi^2}{L^2},\;\;\mbox{(continuum limit)},\\
\label{case1.3A}
&&\lim_{h\rightarrow \infty}E_1(h)=0,\;\; \mbox{(anti-continuum limit)}.
\end{eqnarray}
%In the particular case of $L=1$, we have
%\begin{eqnarray}
%\label{case1.2}
%&&\lim_{h\rightarrow 0}E_1(h) =\lambda_1=\pi^2,\\
%\label{case1.3}
%&&4\leq E_1 (h)\leq \pi^2,\;\;\mbox{for}\;\;0<h\leq 1.
%\end{eqnarray}
We  also recall 
the variational characterization of
$E_1>0$, 
\begin{eqnarray}
\label{eigchar}
E_1=\inf_{
	\begin{array}{c}
	X \in \ell^2_{K+2} \\
	X\neq 0
	\end{array}}\frac{(-k\Delta_d X,X)_{\ell^2}}{\sum_{n=0}^{K+1}|X_n|^2},
\end{eqnarray}
which implies the inequality 
\begin{eqnarray}
\label{crucequiv}
E_1\sum_{n=0}^{K+1}|X_n|^2\leq
k(-\Delta_dX,X)_{\ell^2}\leq 4k \sum_{n=0}^{K+1}|X_n|^2.
\end{eqnarray}
The bifurcation of nonlinear equilibria, will be a consequence of the Rabinowitz bifurcation Theorem \cite{RB71,Smo94,ZeiV1}, stated below.
\begin{theorem}
\label{th1}
Assume that $\cx$ is a Banach space with norm\ $||\cdot||_\cx$.  Consider
the map $\mathcal{F}(\mu,\cdot): \cx\rightarrow \cx$, $\mu\in\mathbb{R}$,
\begin{eqnarray}
\label{eq063B}
\mathcal{F}(\mu,\cdot)=\mu \mathcal{R}(\cdot) + \mathcal{W}(\mu,\cdot),
\end{eqnarray}
%$\mathcal{F}(\mu,\cdot)=\mu \mathcal{R}(\cdot + \mathcal{W}(\mu,\cdot)$,
where $\mathcal{R}: \cx\rightarrow \cx$ is a compact linear map and
$\mathcal{W}(\mu,\cdot): \cx\rightarrow \cx$ is compact and satisfies
\begin{eqnarray}
\label{Order}
\lim_{||u||_\cx \to 0} \frac{||\mathcal{W}(\mu,u)||_{\cx}}{||u||_\cx}=0.
\end{eqnarray}
If $\frac{1}{\lambda^*}$ is a simple eigenvalue of $\mathcal{R}$, then the closure of the set
\begin{eqnarray*}
	C=\{ (\mu ,u) \in \mathbb{R} \times \cx : (\mu,u)\;\;
	\mbox{solves}\;\; u-\mathcal{F}(\mu,u)=0,\; u \not\equiv 0 \},
\end{eqnarray*}
possesses a maximal continuum (i.e. connected branch) of
solutions $C$ which branches out of $(\lambda^*,0)$ and $C$ either:

(i)\ meets infinity in\ $\mathbb{R} \times \cx$\ or,

(ii)\ meets\ $u=0$ in a point $(\hat{\mu},0)$ where $\hat{\mu}\neq\lambda^*$ and $\frac{1}{\hat{\mu}}$
is an eigenvalue of $\mathcal{R}$. 
\end{theorem}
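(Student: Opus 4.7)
The result is the classical global bifurcation theorem of Rabinowitz, and the canonical proof runs through Leray--Schauder degree theory. My plan would proceed in three coordinated stages, after which I would comment on where the real difficulty lies.

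First, I would set up the degree framework. Since $\mathcal{R}$ is compact linear and $\mathcal{W}(\mu,\cdot)$ is compact with $\|\mathcal{W}(\mu,u)\|_{\cx}=o(\|u\|_{\cx})$ as $u\to 0$, the map $\Psi_\mu(u):=u-\mu\mathcal{R}(u)-\mathcal{W}(\mu,u)$ is a compact perturbation of the identity on $\cx$, so the Leray--Schauder degree $\deg(\Psi_\mu,B_r(0),0)$ is well defined whenever $0$ is isolated in the zero set of $\Psi_\mu$. For $\mu$ in a small punctured neighbourhood of $\lambda^*$, the operator $I-\mu\mathcal{R}$ is an isomorphism, and the sublinear bound on $\mathcal{W}$ guarantees that on small spheres $\partial B_r(0)$ the field $\Psi_\mu$ is linearly homotopic to $u\mapsto u-\mu\mathcal{R}(u)$. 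Hence the Leray--Schauder index of $0$ satisfies $i(\Psi_\mu,0)=(-1)^{m(\mu)}$, where $m(\mu)$ is the sum of the algebraic multiplicities of the eigenvalues of $\mu\mathcal{R}$ that are strictly greater than one.

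Second, I would exploit the simplicity of $1/\lambda^*$ to obtain a parity flip. As $\mu$ crosses $\lambda^*$ the single simple eigenvalue $1/\lambda^*$ of $\mathcal{R}$ passes through $1/\mu$, so $m(\lambda^*+\eta)-m(\lambda^*-\eta)=\pm 1$ for all small $\eta>0$, giving $i(\Psi_{\lambda^*-\eta},0)=-i(\Psi_{\lambda^*+\eta},0)$. This is the Krasnoselskii mechanism for local bifurcation: if no nontrivial zero of $\Psi_\mu$ accumulated at $(\lambda^*,0)$, then $\Psi$ would be a zero-free homotopy on the lateral boundary of a small cylinder $[\lambda^*-\eta,\lambda^*+\eta]\times B_r(0)$, forcing the slice degrees to agree and contradicting the jump of $i$. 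Third, and most delicately, I would upgrade local bifurcation to the global alternative. Let $C$ denote the connected component of the closure of $\{(\mu,u):\Psi_\mu(u)=0,\,u\neq 0\}\cup\{(\lambda^*,0)\}$ containing $(\lambda^*,0)$. Suppose for contradiction that $C$ is bounded in $\mathbb{R}\times\cx$ and avoids every other point $(\hat\mu,0)$ with $1/\hat\mu\in\sigma(\mathcal{R})$. Compactness of $\mathcal{R}$ together with the sublinearity of $\mathcal{W}$ make bounded solution sets precompact, so a Whyburn-type separation lemma produces a bounded open $\Omega\subset\mathbb{R}\times\cx$ enclosing $C$, disjoint from the remaining eigenvalue-points, and whose topological boundary carries no nontrivial solutions. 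Homotopy invariance of $\deg(\Psi_\mu,\Omega_\mu,0)$ along the $\mu$-slices of $\Omega$, combined with excision of a small ball around $u=0$, then leaves exactly the index of the trivial solution as a $\mu$-continuous quantity across $\lambda^*$, in direct contradiction with the parity flip from the second stage.

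The principal obstacle is this third stage: turning the purely local parity computation into a global topological statement hinges on the Whyburn separation lemma and on arranging $\Omega$ so that, slice by slice, the excision genuinely isolates the contribution of $u=0$ from that of the continuum $C$. The first two stages are standard consequences of degree theory once the sublinearity $\mathcal{W}(\mu,u)=o(\|u\|_{\cx})$ has been used to linearise the index calculation near zero; the global alternative, by contrast, is the content that distinguishes Rabinowitz's theorem from the earlier local bifurcation results of Krasnoselskii and Leray--Schauder.
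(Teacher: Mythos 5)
The paper does not prove Theorem \ref{th1} at all: it is quoted as the classical Rabinowitz global bifurcation theorem, with the proof deferred to the cited references \cite{RB71,Smo94,ZeiV1}, and the paper's own work consists only in \emph{applying} it (Propositions \ref{RT1}--\ref{RT3}). So there is no in-paper argument to compare against; what you have written is the canonical Leray--Schauder degree proof from those sources. As a sketch it is sound and correctly organized: the index formula $i(\Psi_\mu,0)=(-1)^{m(\mu)}$ via linearization (using the $o(\|u\|_{\cx})$ bound on $\mathcal{W}$ to homotope to $I-\mu\mathcal{R}$ on small spheres), the parity jump across $\lambda^*$ coming from the simplicity of $1/\lambda^*$, and the global step by contradiction using a Whyburn-type separation of a bounded $C$ from the rest of the solution set. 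Two things remain at the level of assertion and would need to be executed in a full proof: (a) the construction of the isolating open set $\Omega$, which requires first proving that bounded closed subsets of the solution set are compact (this is where compactness of $\mathcal{R}$ and $\mathcal{W}$ enters, via $u=\mathcal{F}(\mu,u)$) and then invoking the point-set separation lemma; and (b) the degree bookkeeping on the variable domain, i.e., the generalized homotopy invariance giving $\deg(\Psi_\mu,\Omega_\mu,0)\equiv 0$ (since $\Omega_\mu=\emptyset$ for $|\mu|$ large), followed by the excision identity $0=i(\Psi_\mu,0)+\deg(\Psi_\mu,\Omega_\mu\setminus\overline{B_r},0)$ on both sides of $\lambda^*$ together with the constancy of the second summand across $\lambda^*$, which is what actually collides with the parity flip. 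You flag both points yourself, so the proposal is an accurate and honest outline rather than a complete proof; for the purposes of this paper, where the theorem is imported rather than proved, that is entirely adequate.
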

 
To apply Theorem \ref{th1}, we need some preparations so as to rewrite Eq.~(\ref{eq063}) in the form:
 \begin{eqnarray}\label{eq066}
\Phi-\mathcal{F}(\mu, \Phi)=0, 
\end{eqnarray} 
requested by the theorem. The first step, is to define and discuss the properties of the linear operator $\mcR$: The linear operator $\mathcal{R}$, will be the inverse of the operator 
 \begin{eqnarray}\label{eq067}
 \ct(\Phi_n) = -k\Delta_d \Phi_n.
\end{eqnarray}
For instance, the above discussion on the eigenvalues $E_n$ of the linear eigenvalue problem (\ref{eq063A})-(\ref{eq063AL}), implies the following:  $\ct$ is self-adjoint on the Hilbert space $\cx:=\ell^2_{K+2}$, and positive; the latter follows from inequality (\ref{crucequiv}). Its eigenvalues are all simple, and are given by (\ref{EIGS}), which can be ordered as
%$$\cx = \left\{ U = \{\Phi_n\}_{n=0}^{n=K+1}\in \mathbb{R}^{K+2} : \Phi_0 = \Phi_{K+1}=0\right\},$$
%having the role of Banach space $\cx$ which is referred in our theorem. The Hilbert space 
%$\cx$ is endowed with the norm 
%$$\|U\|_{\cx} = h \sum_{n=0}^{K+1} |\Phi_n|^2 = N.$$
%$\ct$ has $K$ eigenvalues 
$$ 0<E_1<E_2< \cdots <E_{K}.$$
Clearly, the operator $\ct$ is invertible, and we define $\mcR:=\ct^{-1}:\ell^2_{K+2}\rightarrow \ell^2_{K+2}$, its inverse. Now, we rewrite Eq.~(\ref{eq063}) in the operator-form:
 \begin{eqnarray}\label{eq070}
\ct(\Phi) - \omega_d^2 \Phi -\mathcal{G}(\omega_d^2, \Phi)=0, \ \ \ \Phi\in\ell^2_{K+2},
\end{eqnarray}
where $\mathcal{G}:\ell^2_{K+2} \to \ell^2_{K+2}$ is the non-linear operator 
$$ \mathcal{G}(\omega_d^2,\Phi_n) = -\omega_d^2\beta \Phi_n^3.$$
%{\bf PGK: shouldn't it be $\Phi_n^3$ ??}
Next, we apply to Eq.~(\ref{eq070}), the operator $\mathcal{R}$,
%The linear operator 
%$\ct:\cx \to \cx$ invertible. It's inverse 
%$\ct^{-1}:\cx \to \cx$ is also symmetric and it readily follows that 
%$\nu_n:=\frac{1}{E_n}, \ \ \ n=0,1,\dots, K-1$ are also eigenvalues of 
%$\ct^{-1}.$
and we get the equation 
 \begin{eqnarray}\label{eq071}
\Phi-\omega_d^2\mathcal{R}(\Phi) - \mathcal{R}\mathcal{G}(\omega_d^2,\Phi)=0.
\end{eqnarray}
Observe that Eq.~(\ref{eq071}) is in conformity with (\ref{eq063B})-(\ref{eq066}), with $ \mcR(\Phi)=\ct^{-1}(\Phi)$, and 
$\cw(\omega_d^2, \Phi)=\mathcal{R}\mathcal{G}(\omega_d^2,\Phi)$. We are ready to implement Theorem \ref{th1} to Eq. (\ref{eq071}) and prove 
%$$ \mcR(U)=\ct^{-1}(U)$$ 
%and 
%$$ \cw(U)= \ct^{-1}\cf(U)$$
%where $ \mcR:\cx \to \cx$ and the non-linear operator 
%$\cw:\cx \to \cx.$
\begin{proposition}
\label{RT1}	
%For any $h>0,$ 
There exists a maximal continuum of solutions $C_{E_j}$ of equation (\ref{eq063}), 
	$j=1,\dots,K+1$, bifurcating from $(E_j,0)$ and $C_{E_j}$ either (1) meets infinity in $\bbR\times\ell^2_{K+2}$
	or (2) meets $\Phi=0$ in a point $(\hat{\omega_d^2},0),$ where $\hat{\omega_d^2}\neq E_j$ and $\frac{1}{\hat{\omega_d^2}}$ 
	is an eigenvalue of $\mcR.$
\end{proposition}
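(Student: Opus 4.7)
The plan is to verify that Eq.~(\ref{eq071}) fits into the framework of the Rabinowitz Bifurcation Theorem (Theorem \ref{th1}) with $\mathcal{X}=\ell^2_{K+2}$, $\mu=\omega_d^2$, $\mathcal{R}=\mathcal{T}^{-1}$, and $\mathcal{W}(\omega_d^2,\Phi)=\mathcal{R}\mathcal{G}(\omega_d^2,\Phi)$. Since the relevant phase space is finite-dimensional, the compactness conditions of Theorem \ref{th1} are automatic: every linear map on $\ell^2_{K+2}$ is continuous, and every continuous map on a finite-dimensional normed space sends bounded sets to relatively compact sets. Thus both $\mathcal{R}$ and $\mathcal{W}(\omega_d^2,\cdot)$ are compact once they are seen to be well-defined continuous maps on $\ell^2_{K+2}$, a point I would state briefly rather than laboring over.

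The first substantive step is to confirm that $\mathcal{R}=\mathcal{T}^{-1}$ is well-defined, which follows immediately from the discussion preceding the proposition: $\mathcal{T}=-k\Delta_d$ is self-adjoint and positive on $\ell^2_{K+2}$ (as a consequence of~(\ref{crucequiv})), and hence invertible, with the eigenvalues of $\mathcal{R}$ being exactly $\{1/E_j\}_{j=1}^{K}$. The second step is to verify the simplicity condition: by~(\ref{EIGS}), the eigenvalues $E_j$ are strictly increasing in $j$, so each $E_j$ is simple as an eigenvalue of $\mathcal{T}$; equivalently, each $1/E_j$ is a simple eigenvalue of $\mathcal{R}$, which is precisely the hypothesis of Theorem \ref{th1} with $\lambda^{*}=E_j$.

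The main technical step is verifying the order condition~(\ref{Order}) for $\mathcal{W}(\omega_d^2,\Phi)=\mathcal{R}\mathcal{G}(\omega_d^2,\Phi)$ with $\mathcal{G}(\omega_d^2,\Phi_n)=-\omega_d^2\beta\Phi_n^3$. The idea is to use the norm-equivalence inequality~(\ref{eq07}) with $q=6$, $p=2$, exactly as was done in (\ref{eq020})--(\ref{eq035A}): this yields
\begin{eqnarray*}
\|\mathcal{G}(\omega_d^2,\Phi)\|_{\ell^2}^2=\omega_d^4\beta^2\sum_{n=0}^{K+1}\Phi_n^6\leq \omega_d^4\beta^2\|\Phi\|_{\ell^2}^{6}.
\end{eqnarray*}
Since $\mathcal{R}$ is a bounded linear operator on $\ell^2_{K+2}$, with $\|\mathcal{R}\|_{\mathrm{op}}\leq 1/E_1$, we then obtain
\begin{eqnarray*}
\frac{\|\mathcal{W}(\omega_d^2,\Phi)\|_{\ell^2}}{\|\Phi\|_{\ell^2}}\leq \frac{\omega_d^2\beta}{E_1}\|\Phi\|_{\ell^2}^2\longrightarrow 0\quad\text{as}\quad\|\Phi\|_{\ell^2}\to 0,
\end{eqnarray*}
uniformly for $\omega_d^2$ in bounded subsets of $\mathbb{R}$. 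This is the only genuinely computational piece of the proof; everything else reduces to invoking structure already developed in the paper.

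With all hypotheses verified, the proposition follows directly by applying Theorem \ref{th1} at each of the simple eigenvalues $\lambda^{*}=E_j$. The closure of the nontrivial solution set of~(\ref{eq071}) contains a maximal connected component $C_{E_j}$ emanating from $(E_j,0)$, and the alternative dichotomy of Theorem \ref{th1} gives exactly the two possibilities stated: either $C_{E_j}$ is unbounded in $\mathbb{R}\times\ell^2_{K+2}$, or it returns to the trivial branch at some $(\hat{\omega}_d^2,0)$ with $1/\hat{\omega}_d^2$ another eigenvalue of $\mathcal{R}$. The only place I would pause to think is whether any further regularity or index-preservation argument is needed to sharpen the dichotomy (e.g.\ to rule out return to a different eigenvalue via a Sturm-type nodal-counting argument), but for the statement as given, the abstract Rabinowitz theorem suffices without any such refinement.
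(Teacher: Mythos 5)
Your proposal is correct and follows essentially the same route as the paper's own proof: compactness of $\mathcal{R}$ and $\mathcal{W}=\mathcal{R}\mathcal{G}$ from finite-dimensionality, simplicity of the eigenvalues $1/E_j$ of $\mathcal{R}$, verification of the order condition (\ref{Order}) via the norm-equivalence inequality (\ref{eq07}) applied to the cubic term, and then a direct application of Theorem \ref{th1}. The only (immaterial) differences are that you make the operator-norm bound $\|\mathcal{R}\|\leq 1/E_1$ explicit and invoke (\ref{eq07}) with $q=6$, $p=2$ where the paper cites $q=3$, $p=2$; the substance of the estimate is identical.
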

\begin{proof} Since Eq. (\ref{eq071}) is considered on the finite-dimensional space $\ell^2_{K+2}$, the operators $\mathcal{R}$, and $\mathcal{W}=\mathcal{RG}$ are compact. Besides, the eigenvalues of $\mathcal{R}$ are $1/E_j$, $j=1,...,K+1$, and are all simple. Hence, it only remains to check the growth condition (\ref{Order}): 
we observe that
\begin{eqnarray*}
	\frac{\|\cw(\omega_d^2,\Phi)\|_{\ell^2}}{\|\Phi\|_{\ell^2}} 
	&=& \frac{\|\mathcal{R}\mathcal{G}(\omega_d^2, \Phi)\|_{\ell^2}}{\|\Phi\|_{\ell^2}}\\[2ex]
	&\leq& \frac{\|\mathcal{R}\|_{\ell^2}\ \|\mathcal{G}(\omega_d^2,\Phi)\|_{\ell^2}}{\|\Phi\|_{\ell^2}}
	\\[2ex]
	&\leq& \frac{\omega_d^2\beta\|\mathcal{R}\|_{\ell^2}\  \|\Phi\|_{\ell^2}^3}{\|\Phi\|_{\ell^2}}
	\\[2ex] 
	&=&\omega_d^2\beta  \|\mathcal{R}\|_{\ell^2}\ \|\Phi\|_{\ell^2}^2.
\end{eqnarray*}
For the above estimate, we used inequality (\ref{eq07}), with $q=3$ and $p=2$.
%For the above estimate,  we used inequality (\ref{eq07}), with $q=3$ and $p=2$.
%{\bf PGK: I don't completely understand since the same index norm
%  is mentioned all across the proof and I thus don't see how the
%  6th instead of the 3rd power comes about. If appropriate, please
%  change the relevant norms.}
Letting $||\Phi||_{\ell^2}\rightarrow 0$, we see that
$ \lim_{\|\Phi\|_{\ell^2}\to0} \frac{\|\cw(\omega_d^2,\Phi)\|_{\ell^2}}{\|\Phi\|_{\ell^2}} 
%\leq \lim_{\|U\|_{\ell^2}\to0}\|\ct^{-1}\|_{\ell^2}\ \omega_d^2\beta \|U\|_{\ell^2_{K+2}}^2
=0$, concluding the proof of the proposition. 
%For this result we use that for $p\geq q \Rightarrow \|U\|_p\leq\|U\|_q.$
%So our theorem is true.
\end{proof}
We proceed by discussing some geometric characteristics of the branches $\bfC_{E_j}$. For this purpose, it will be necessary to recall some further properties of the linear discrete eigenvalue problem (\ref{eq063L})-(\ref{eq063AL}). First, the Krein--Rutman theorem implies that the principal eigenfunction $\phi_n^1$ associated to the principal 
eigenvalue $E_1$ is positive, in the sense that $\phi_n^1>0$ for all $n=1,\dots K$,  except for the nodes $n=0$ and $n=K+1$, where it satisfies the boundary conditions (\ref{eq063AL}). On the other hand, the eigenvalue problem (\ref{eq063L})-(\ref{eq063AL}) is the discrete analogue of the 
Sturm--Liouville  eigenvalue problem:
\begin{eqnarray}\label{eq068}
-\psi''(x) &=& \lambda \psi(x), \ \ \ \ -L/2<x<L/2,\\[2ex]
\label{eq068A}
\psi(-L/2)&=&\psi(L/2)=0,
\end{eqnarray}
with the countable sequence of eigenvalues $\lambda_j=\frac{j^2\pi^2}{L^2}$, $j=1,2,\cdots$, and corresponding eigenfunctions 
$\psi_j(x)=\sin\left(\frac{j\pi x}{L}\right)$, having exactly $j-1$ nodal points.  It is also important to recall that the discrete eigenfunctions $\phi_n^j$, $j=1,2,\cdots,K+1$ of the discrete eigenvalue problem (\ref{eq063L})-(\ref{eq063AL}), trace a discretized analogue of their continuous counterparts, and consequently, they also have $j-1$ sign-changes. 

Motivated by the above properties of the discrete eigenfunctions of the problem (\ref{eq063L})-(\ref{eq063AL}),  we define the sets in $\ell^2_{K+2},$
\begin{eqnarray}\label{eq069}
S_j:= \left\{X\in\ell^2_{K+2} : X_0 = X_{K+1}=0 \ \ \ \mbox{with exactly}\ \ \ 
j-1\ \mbox{sign-changes}.
\right\}
\end{eqnarray}
This set is open in $\ell^2_{K+2}$, since for every $X\in S_j$, the open ball 
$$B(X,\varrho) = \left\{ X\in S_j : \|X-Y\|_{\ell^2_{K+2}}<\varrho,\;Y\in \ell^2_{K+2}\right\}$$
lying in $S_j$ by considering $\varrho$ sufficiently small. For instance, for such a 
$\varrho$, we get sufficiently small perturbations of the coordinates of $X$ in $S_j$. Thus, all the vectors of $\ell^2_{K+2}$ being in $B(X,\varrho)$ have the same number of sign-changes. 

Returning to the solutions $\Phi_n$ of the  nonlinear stationary problem (\ref{eq063})-(\ref{eq063A}), the local 
bifurcation theory and the implicit function theorem \cite{RB71,Smo94,ZeiV1}, guarantee that the branch $\bfC_{E_j}$ can be 
locally represented by the $C^1$- curve 

$$(\mu,\Phi):(-\gamma,\gamma)\to\bbR\times\ell^2_{K+2},$$
for some $\gamma$ sufficiently small. For instance, this representation has the following properties: 
\begin{eqnarray}
\label{eq072A}
\mu(0) &=& E_j , \ \ \ \ \chi(0)=0,\\[2ex]
\label{eq072}
(\mu(s),\Phi(s)) &=& (\mu(s),s(\varphi^j +\chi(s))), \ \ \ |s|<\gamma. 
\end{eqnarray}

Here, $\mu(s):=\omega_d^2(s)$ and $\|\chi(s)\|_{\ell^2_{K+2}}=O(|s|),$ in the neighborhood of the bifurcation point $(E_j,0)$. Furthermore, there is a neighborhood of $(E_j,0),$ such that any solution of (\ref{eq063})-(\ref{eq063A}) (or equivalently, of the nonlinear operator equation (\ref{eq071})), lies on this curve, or is exactly $(E_j,0)$. The next proposition, refers to a local concavity property of the branch $\bfC_{E_j}$.
\begin{proposition}
\label{RT2}	
 Consider the local representation (\ref{eq072A})-(\ref{eq072}) of branch $\bfC_{E_j}.$ Then, $\mu'(0)=0,$ $\mu''(0)>0,$ i.e., the local representation is concave-up. 
\end{proposition}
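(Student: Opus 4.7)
My plan is to substitute the local representation $(\mu(s), s(\varphi^j + \chi(s)))$ into the stationary equation $\ct(\Phi) = \mu \Phi - \mu \beta \Phi^3$, divide by $s \ne 0$, and then extract $\mu'(0)$ and $\mu''(0)$ by differentiating the resulting identity once and twice at $s=0$. More precisely, I would work with the scalar curve
\begin{eqnarray*}
F(s) := \ct(\varphi^j + \chi(s)) - \mu(s)(\varphi^j + \chi(s)) + \mu(s)\beta s^{2}(\varphi^j + \chi(s))^3 = 0,
\end{eqnarray*}
using $\chi(0)=0$ and the fact that $\chi(s)$ can be chosen orthogonal to $\varphi^j$ in $\ell^2_{K+2}$, which is the standard Lyapunov--Schmidt normalization available because $E_j$ is a \emph{simple} eigenvalue of $\ct$.

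First, I would differentiate $F(s)=0$ once at $s=0$. Since $\chi(0)=0$ and the cubic term carries a factor $s^2$, the only surviving pieces are
\begin{eqnarray*}
(\ct - E_j)\chi'(0) = \mu'(0)\, \varphi^j.
\end{eqnarray*}
Taking the $\ell^2$-inner product with $\varphi^j$ and using the self-adjointness of $\ct$ (so that $\langle (\ct - E_j)\chi'(0), \varphi^j\rangle = \langle \chi'(0), (\ct - E_j)\varphi^j\rangle = 0$) yields $\mu'(0)\|\varphi^j\|_{\ell^2}^2 = 0$, hence $\mu'(0)=0$. Substituting this back gives $(\ct - E_j)\chi'(0) = 0$, so $\chi'(0) \in \mathrm{span}(\varphi^j)$; combined with $\chi'(0) \perp \varphi^j$, we conclude $\chi'(0)=0$.

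Next I would differentiate $F(s)=0$ twice at $s=0$. With $\mu'(0)=\chi'(0)=0$ and $\chi(0)=0$, the only term from the cubic that survives at $s=0$ in the second derivative is $2\mu(0)\beta(\varphi^j)^3 = 2E_j \beta (\varphi^j)^3$ (componentwise cube), while the remaining linear part gives $(\ct - E_j)\chi''(0) - \mu''(0)\varphi^j$. Thus
\begin{eqnarray*}
(\ct - E_j)\chi''(0) = \mu''(0)\,\varphi^j - 2 E_j\,\beta\,(\varphi^j)^3.
\end{eqnarray*}
Taking the $\ell^2$-inner product with $\varphi^j$, the left-hand side vanishes by self-adjointness, and solving for $\mu''(0)$ yields
\begin{eqnarray*}
\mu''(0) \;=\; \frac{2\, E_j\, \beta\, \sum_{n=0}^{K+1}(\varphi_n^j)^4}{\sum_{n=0}^{K+1}(\varphi_n^j)^2}\;>\;0,
\end{eqnarray*}
since $E_j>0$, $\beta>0$, and $\varphi^j\not\equiv 0$. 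This simultaneously establishes $\mu'(0)=0$ and the local concave-up behaviour $\mu''(0)>0$.

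The computations are essentially bookkeeping, so there is no deep obstacle; the only point that requires care is justifying the normalization $\chi(s)\perp \varphi^j$ (equivalently, that $\chi'(0)=0$) via Lyapunov--Schmidt, which is legitimate here precisely because $E_j$ is a simple eigenvalue of the self-adjoint, positive operator $\ct$ on the finite-dimensional space $\ell^2_{K+2}$. All other steps are elementary differentiation combined with the self-adjointness identity $\langle (\ct - E_j)\xi,\varphi^j\rangle = 0$ that underlies the Fredholm solvability condition in the transverse direction.
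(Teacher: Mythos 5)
Your proof is correct and follows essentially the same route as the paper: substitute the local representation, divide by $s$, differentiate once and twice at $s=0$, and project onto $\varphi^j$ using the self-adjointness of $-k\Delta_d$ (the paper does this via summation by parts) to obtain $\mu'(0)=0$ and $\mu''(0)\sum_n(\varphi_n^j)^2=2\beta E_j\sum_n(\varphi_n^j)^4>0$. The only difference is your appeal to the Lyapunov--Schmidt normalization $\chi(s)\perp\varphi^j$ to get $\chi'(0)=0$, which is legitimate but in fact unnecessary: every $\chi'$-contribution to the second derivative of the cubic term carries an explicit factor of $s$ or $s^2$ and so vanishes at $s=0$ regardless, which is why the paper never invokes it.
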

\begin{proof}
We insert the local representation of the branch $(\mu(s),\Phi(s)) = (\mu(s),s(\varphi^j +\chi(s)))$ in (\ref{eq063}), and we divide by $s>0$. Thus, we have that
%$$ -k\Delta_d \varphi_n + \mu(s)\beta \varphi_n^3 %= \mu(s) \varphi_n \Rightarrow$$
%$$-k\Delta_d s(\varphi_n^j + \chi_n(s)) + %\mu(s)\beta s^3(\varphi_n^j+\chi_n(s))^3 = %\mu(s)s(\varphi_n^j+\chi_n(s))\Rightarrow$$
$$-k\Delta_d (\varphi_n^j + \chi_n(s)) + \mu(s)\beta s^2(\varphi_n^j+\chi_n(s))^3 = \mu(s)(\varphi_n^j+\chi_n(s)).$$
Differentiating the above equation with respect to $s$, we get
\begin{eqnarray}
\label{eq073}
-k\Delta_d\chi_n'(s) &+& \mu(s)\left[2s\beta(\varphi_n^j+\chi_n(s))^3 +3\beta s^2(\varphi_n^j+\chi_n(s))^2 \chi_n'(s)\right] 
+ \mu'(s)[\beta s^2(\varphi_n^j+\chi_n(s))^3]\nonumber\\[2ex]  &=&\mu'(s)(\varphi_n^j+\chi_n(s)) + \mu(s)\chi_n'(s).
\end{eqnarray}
Setting $s=0$ to (\ref{eq073}), and using the relations $\mu(0)=E_j$, $\chi_n(0)=0$, we derive that
\begin{eqnarray}
\label{eq074}
-k\Delta_d\chi_n'(0)-\mu'(0)\varphi_n^j = E_j \chi_n'(0).
\end{eqnarray}
Multiplication of (\ref{eq074}) by $\varphi^j$, and summation, yields 
$$ -k\sum_{n=0}^{K+1}\Delta_d\chi_n'(0)\varphi_n^j - \sum_{n=0}^{K+1}\mu'(0) (\varphi_n^j)^2 = 
\sum_{n=0}^{K+1} E_j \chi_n'(0)\varphi_n^j,$$ 
while summation by parts in the first term  of the above equation, implies
%%%%%%%%%%%%%%%%%%%%%%%%%%%%%
\begin{eqnarray}
\label{eq075}
-k\sum_{n=0}^{K+1}\chi_n'(0)\Delta_d \varphi_n^j- \sum_{n=0}^{K+1}\mu'(0) (\varphi_n^j)^2 = 
\sum_{n=0}^{K+1} E_j \chi_n'(0)\varphi_n^j.
\end{eqnarray}
Since $E_j$ and $\varphi_n^j$ solve the linear discrete eigenvalue problem (\ref{eq063L})-(\ref{eq063AL}),
%{\bf PGK: I presume you mean the latter linear problem and
%  NOT the former nonlinear problem !}
we have that
$$ -k\sum_{n=0}^{K+1}\chi_n'(0)\Delta_d \varphi_n^j = \sum_{n=0}^{K+1} E_j \chi_n'(0)\varphi_n^j.$$
Therefore, (\ref{eq075}) results in
$$\sum_{n=0}^{K+1}\mu'(0)|\varphi_n^j|^2 = 0.$$ 
The latter implies that $\mu'(0)=0.$ To evaluate $\mu''(s)$, we differentiate (\ref{eq073}) with respect to $s$: 
\begin{eqnarray}
\label{eq076}
-k\Delta_d\chi_n''(s) &+&  \mu(s)[2\beta(\varphi_n^j+\chi_n(s))^3 + 6\beta s(\varphi_n^j+\chi_n(s))^2 \chi_n'(s)]\nonumber\\[2ex]
&+&2\mu'(s)\beta s(\varphi_n^j+\chi_n(s))^3\nonumber\\[2ex] 
&+&\mu(s)[6\beta s(\varphi_n^j+\chi_n(s))^2\chi_n'(s)+6\beta s^2(\varphi_n^j+\chi_n(s))\chi_n'(s)^2+3\beta s^2(\varphi_n^j+\chi_n(s))^2\chi_n''(s)]\nonumber\\[2ex]
&+& 3\beta s^2\mu'(s)(\varphi_n^j+\chi_n(s))^2\chi_n'(s)\nonumber\\[2ex]
&+&\mu'(s)[2\beta s(\varphi_n^j+\chi_n(s))^3 + 3\beta s^2(\varphi_n^j+\chi_n(s))^2\chi_n'(s)] \nonumber\\[2ex]
&+& \mu''(s)(\beta s^2(\varphi_n^j+\chi_n(s))^3) \nonumber\\[2ex]
&=& \mu''(s)(\varphi_n^j+\chi_n(s))+\mu'(s)\chi_n'(s) + 
\mu'(s)\chi_n'(s)+\mu(s)\chi_n''(s).
\end{eqnarray}
We now set $s=0$ in (\ref{eq076}), use $\mu'(0)=0$ (as proved above) and the relations 
$\mu(0)=E_j$, $\chi_n(0)=0$, and derive the equation:
\begin{eqnarray}
\label{eq077}
-k\Delta_d\chi_n''(0) + 2\beta E_j (\varphi_n^j)^3 = \mu''(0)\varphi_n^j + E_j \chi_n''(0).
\end{eqnarray}
Handling of (\ref{eq077}) similarly to Eq. (\ref{eq075}), yields:
%and again we multiplication by $\Phi^n$ and %summation by parts yields 
%$$-k\sum_{j=0}^{K+1}\Delta_d\chi_n''(0)\varphi_n^j %+ 2\beta \sum_{j=0}^{K+1}E_n |\varphi_n^j|^4 = 
%\sum_{j=0}^{K+1}\mu''(0)|\varphi_n^j|^2 + %\sum_{j=0}^{K+1}E_n %\chi_n''(0)\varphi_n^j\Rightarrow$$
$$ -k\sum_{n=0}^{K+1}\chi_n''(0)\Delta_d \varphi_n^j + 2\beta \sum_{n=0}^{K+1}E_j (\varphi_n^j)^4 = 
\sum_{n=0}^{K+1}\mu''(0)(\varphi_n^j)^2 + \sum_{n=0}^{K+1}E_j \chi_n''(0)\varphi_n^j.$$ 

Then, using again that $(E_j, \phi_n^j)$ are eigensolutions of (\ref{eq063L})-(\ref{eq063AL}), we eventually derive the equation:
\begin{equation} \mu''(0)\sum_{n=0}^{K+1}(\varphi_n^j)^2 = 2\beta E_j\sum_{n=0}^{K+1} (\varphi_n^j)^4.\label{m2}\end{equation} 
The above equation clearly implies that $\mu''(0)>0.$
\end{proof}

Using Eq. (\ref{m2}) we can acquire an estimation about the value of $\mu''(0)$. 
Equation~\ref{m2} can be written as
\begin{equation}\mu''(0) = 2\beta E_j\frac{\sum_{n=0}^{K+1} (\varphi_n^j)^4}{\sum_{n=0}^{K+1}(\varphi_n^j)^2}.\label{m22}\end{equation}
From the local representation (\ref{eq072}) is implied that $\varphi_n^j$ is of order one, and thus  
%{\bf I am not so sure about this statement}, so, 
we can approximate the fraction in the above formula as
\begin{equation}\displaystyle\frac{\sum_{n=0}^{K+1} (\varphi_n^j)^4}{\sum_{n=0}^{K+1}(\varphi_n^j)^2}=\frac{\int_0^\pi \alpha^4\sin^4(jx)dx}{\int_0^\pi \alpha^2\sin^2(jx)dx}=\frac{3\alpha^2}{4},\label{sums}\end{equation}
where $\alpha$ is the normalization amplitude for the discrete eigenfunctions $\varphi_n^j$, i.e., 
\begin{eqnarray}
\label{defa}
\alpha=\frac{1}{||\phi^j||_{\ell^2}}.
\end{eqnarray}	
 
Thus, Eq.~(\ref{m22}) becomes:
%\sqrt{\frac{2}{\pi}}
\begin{equation}
\mu''(0) = \frac{3}{2}\alpha^2\beta E_j.
\label{conc}
\end{equation}
It is observed that the concavity of the local representation of each $\bfC_{E_j}$ is analogous of 
the value of the corresponding $E_j$ \footnote{Note that the numerical calculation of the left part 
of (\ref{sums}) with the summations provides the same result as the middle part with the integrals.}. 
%This fact can also be verified by 
%%Fig.~\ref{fig2} where we 
%observing that the concavity of the various branches increases as the value of $\omega_d^2$ increases (cf. Fig.~\ref{fig2} below). 
%{\bf VK$\rightarrow$NK Maybe we should consider here the other version of Fig.3 that Kostas produced which has more branches.}
%
%{\bf PGK: an important observation: since the $\phi_n$ and
%  $E_n$ are known EXPLICITLY, this means that the $\mu''$
%  can be calculated analytically (or semi-analytically).
%  This, in turn, could give an analytical prediction to match
%  the bifurcation curves in Fig. 4 ! This would be a really nice
%  and not too difficult piece of comparison to have...}
%

%%%%%%%%%%%%%%%%%%%%%%%%%%%%%
We conclude, with the proof that the nonlinear equilibrium branches $\bfC_{E_j}$ satisfy the scenario (i) of Theorem \ref{th1}.
%%%%%%%%%%%%%%%%%%%%%%%%%%%%%
\begin{proposition}
\label{RT3}
For any $k>0$, the maximal continuum of solutions $\bfC_{E_j}$ of Eq.~(\ref{eq063}) 
bifurcating from $(E_j,0)$ meets infinity in $\bbR$.  
\end{proposition}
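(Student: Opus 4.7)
The plan is to rule out scenario (ii) of Theorem~\ref{th1}, forcing $\bfC_{E_j}$ to satisfy scenario (i). Suppose for contradiction that $\bfC_{E_j}$ meets $\Phi=0$ at $(\hat{\omega}_d^2,0)$ with $\hat{\omega}_d^2=E_i$ for some $i\neq j$. First I would establish a uniform a priori $\ell^\infty$-bound on nontrivial solutions via the discrete maximum principle: at a site $n^\ast$ where $|\Phi_{n^\ast}|=\|\Phi\|_{\ell^\infty}$, one has $\Phi_{n^\ast}\Delta_d\Phi_{n^\ast}\le 0$, so testing (\ref{eq063}) at $n^\ast$ forces $\omega_d^2\Phi_{n^\ast}^2(1-\beta\Phi_{n^\ast}^2)\ge 0$ and hence $\|\Phi\|_{\ell^\infty}\le 1/\sqrt{\beta}$. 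By the norm equivalence (\ref{eq07}) this gives a uniform $\ell^2$-bound on every point of $\bfC_{E_j}$, so that any meeting of infinity must occur through $\omega_d^2\to\infty$.

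The central argument is nodal, carried out via the sets $S_\ell$ of (\ref{eq069}). By the discrete Sturm--Liouville structure discussed after (\ref{eq068})-(\ref{eq068A}), $\varphi^j\in S_j$, and the local representation (\ref{eq072A})-(\ref{eq072}) combined with the openness of $S_j$ places every sufficiently small nontrivial element of $\bfC_{E_j}$ in $\bbR\times S_j$. Symmetrically, if the branch returned to $(E_i,0)$, local bifurcation from the simple eigenvalue $E_i$ would place nontrivial elements of $\bfC_{E_j}$ near $(E_i,0)$ in $\bbR\times S_i$. Since the sets $\bbR\times S_\ell$ are pairwise disjoint open subsets of $\bbR\times\ell^2_{K+2}$ and $\bfC_{E_j}$ is a continuum, connectedness forces $\bfC_{E_j}$ to contain either additional zero solutions (impossible, since Rabinowitz's scenario (ii) admits only the two zeros $(E_j,0)$ and $(E_i,0)$) or a nontrivial configuration $\Phi^\ast$ with some interior $\Phi^\ast_{n_0}=0$. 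Evaluating (\ref{eq063}) at $n_0$ yields $\Phi^\ast_{n_0+1}=-\Phi^\ast_{n_0-1}$, and two consecutive vanishings would propagate via the three-term recurrence to force $\Phi^\ast\equiv 0$; hence $\Phi^\ast_{n_0-1}$ and $\Phi^\ast_{n_0+1}$ are strictly of opposite signs. Interpreting the sign-change count in the Sturmian (generalized) sense---each isolated interior zero flanked by opposite signs counting as one nodal change---$\Phi^\ast$ retains the same count as nearby points on the branch, so the crossing between classes $S_j$ and $S_i$ cannot actually occur, contradicting the hypothesis.

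The main obstacle is formalizing the nodal bookkeeping: the naive strict-sign-change count $\Phi_n\Phi_{n+1}<0$ is not continuous at configurations with an interior zero, so one must adopt the generalized (piecewise-linear, or discrete nodal-domain) count consistent with the Sturm--Liouville spectral structure, for which the $\ell$-th class $S_\ell$ coincides with the one defined in (\ref{eq069}) on generic solutions and the closures $\bar S_j$, $\bar S_i$ remain disjoint within the nontrivial-solution manifold of (\ref{eq063})-(\ref{eq063A}). Granted this, $\bfC_{E_j}$ cannot reach a second eigenvalue and must therefore extend to infinity in $\bbR\times\ell^2_{K+2}$; combined with the $\ell^2$-bound from the first step, the extension is necessarily through $\omega_d^2\to\infty$.
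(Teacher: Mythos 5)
Your proof is correct and its core is the same as the paper's: rule out alternative (ii) of Theorem~\ref{th1} by showing that the nodal class is preserved along the connected continuum, so that $\bfC_{E_j}$, which lives in $\bbR\times S_j$ near $(E_j,0)$, can never reach a second bifurcation point $(E_i,0)$ whose nearby nontrivial solutions lie in the disjoint set $\bbR\times S_i$. The paper packages this as an integer-valued, locally constant indicator function on the continuum; your disjoint-open-sets separation is the same argument in different clothing. That said, you add two things the paper does not have, both of value. First, the a priori bound $\|\Phi\|_{\ell^\infty}\le 1/\sqrt{\beta}$ from the discrete maximum principle (and hence a uniform $\ell^2$ bound via (\ref{eq07})) is what actually justifies the phrase ``meets infinity in $\bbR$'': Theorem~\ref{th1} only yields unboundedness in $\bbR\times\ell^2_{K+2}$, and the paper's proof stops there, whereas your bound forces the blow-up to occur in the parameter $\omega_d^2$. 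Second, your observation that a nontrivial solution of (\ref{eq063}) satisfies $\Phi_{n_0+1}=-\Phi_{n_0-1}$ at any interior zero, and cannot have two consecutive zeros by the three-term recurrence, is precisely what is needed to make the sign-change count locally constant \emph{on solutions}; the paper's claim that $S_j$ as defined in (\ref{eq069}) is open in all of $\ell^2_{K+2}$ is not literally true at configurations with an interior zero, and the correct repair is exactly the generalized Sturmian count you describe, restricted to the nontrivial-solution set. The one loose spot in your write-up is the assertion that scenario (ii) ``admits only the two zeros'' $(E_j,0)$ and $(E_i,0)$ --- Rabinowitz's alternative does not say this --- but it is inessential: near any trivial point $(E_m,0)$ of the continuum the local bifurcation structure places all nearby nontrivial solutions in $S_m$, so the separation argument applies verbatim to whichever eigenvalue the branch is alleged to return to.
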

\begin{proof}
(a)  Recall that any solution $(\mu,\Phi)$ close to $(E_j,0)$ has the same number of sign-changes as the eigenstate $\Phi^j$ corresponding to the eigenvalue $E_j$. This is due to the local $C^1$ - representation of the solution $\Phi$, as $\Phi_n(s) = s\Phi^j_n + s\chi_n(s)$. For instance, each linear state $\Phi^j$ belongs to the set $S_n$ defined in eq~(\ref{eq069}) and $\|\chi\|_{\ell^2_{K+2}}=O(|s|).$ It follows then, that the solution $\Phi$ satisfies the estimate $\|\Phi(s)\|_{\ell^2_{K+2}}\leq |s|\; \|\Phi^j\|_{\ell^2_{K+2}}+O(s^2)$ in the neighborhood of the bifurcation point $(E_j,0)$. Therefore, since the set $S_j$ is open, we get from the above estimate, that $\Phi\in S_j$ for $|s|<\gamma$. 
(b) Now, for all $(\mu,\Phi)\in\bfC_{E_j}$ and each $j=1,2,....K$, we consider the indicator function: 
$$f(\mu,\Phi) =\left\{ 
\begin{array}{lllllllllllllllllll} &1,& \mbox{if} \; \; \;\Phi\in S_j \\[2ex]
&0,& \mbox{if}\; \; \; \Phi=0, \; \; \mu=E_j, \; \; m\neq j,  
\end{array}\right. $$
that is $f(\mu,\Phi)=0$ if the branch $\bfC_{E_j}$, meets the axis $(\mu,0)$ in another eigenvalue $E_m\neq E_j$. Note that $f$ is well defined due to the two possibilities described by the previous proposition. From (a),  we have that if $(\mu,\Phi)$ is in a small neighborhood of $(E_j,0)$, then $f(\mu,\Phi)=1$. Thus, the function $f$ is constant in the small neighborhood of $(E_j,0)$ and cannot change value in this small neighborhood, i.e $f$ is locally constant. The set $S_j$ is open and the function $f$ is locally constant on the connected set $\bfC_{E_j}$. Both facts clearly imply that $f$ is continuous. Therefore, $f(\bfC_{E_j})$ should be also connected, since the image of a connected set through a continuous function should be connected. However, $f$ is integer valued, and the fact that $f(\bfC_{E_j})$ is connected, implies that $f$ should be constant, e.g.,  $f=1$, for all $(\mu,\Phi)\in\bfC_{E_j}$. Therefore, $\bfC_{E_j}$ cannot contain a point $(E_m,0)$ with $E_m\neq E_j$, and $\bfC_{E_j}$ should be unbounded.
\end{proof}
%%%%%%%%%%%%%%%%5%%%%%%%%%%%%%%%%%%%%%%%%%%%%%%5

%%%%%%%%%%%%%%%%%%%%%%%%%%%%%
\begin{figure}[tbh!]
	\centering
	\includegraphics[scale=0.06]{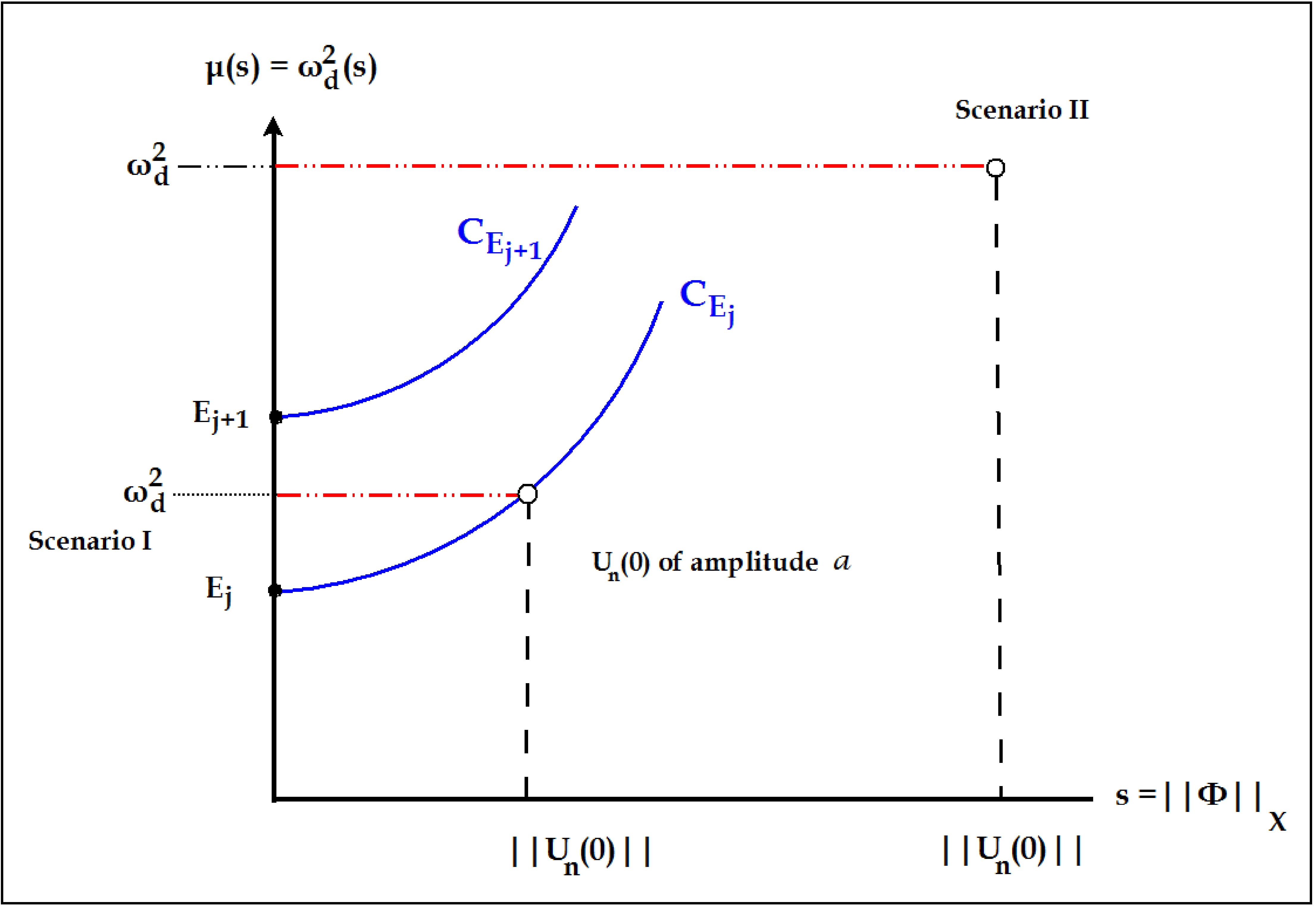} \vspace{0cm}
	\caption{(Color Online) Visualization of the scenarios for the selection of spatially extended initial conditions (\ref{ic1}), based on the local stationary bifurcation diagram (cartoon). Locally concave-up (in the sense of Proposition \ref{RT2}) branches $\mathbf{C}_{E_{j}}$ of nonlinear equilibria, are bifurcating from the
          (linear problem) eigenvalues $E_j$ [continuous (blue) curves]. Scenario I: The initial condition (\ref{ic1}) of amplitude $a$ and norm $||U_n(0)||$, has the same number of sign-changes, as the nonlinear steady state of the branch $\mathbf{C}_{E_{j}}$. The point $(||U_n(0)||,\omega_d^2 )$ [dot ``on'' the branch $\mathbf{C}_{E_{j}}$], corresponds to an extended initial condition, being close in norm, to an equilibrium of the branch $\mathbf{C}_{E_{j}}$. Scenario II: the initial condition (\ref{ic1}) is as in Scenario I having a large norm, but $\omega_d^2$ is far from the bifurcation value $E_j$. The point $(||U_n(0)||,\omega_d^2 )$ is far from the local bifurcation diagram. 
%         
%         {\bf PGK: do you mean
%            these as dynamics scenarios or as initialization scenarios ?
%            In my mind they seem the latter as it is not explained
%            how the dynamics will transpire. If you agree, please change
%          accordingly.}
}
	\label{fig2cart}
\end{figure}

%%%%%%%%%%%%%%%%%%%%%%%%%%%%%

The local representation of the equilibrium branches $\bfC_{E_j}$ is schematically visualized in the cartoon of Fig.~\ref{fig2cart}. A  numerical justification of this schematic representation will be one of our aims in the numerical study that follows.

\section{Numerical Study}
%%%%%%%%%%%%%%%%%%%%%%%%%%%%%%%%%%%%%%%%%%%%%%%%%%%%
\setcounter{equation}{0}
In this section, we present numerical results concerning the dynamics of the DKG chain (\ref{eq0})-(\ref{eq02})-(\ref{eq03}), with two principal aims. The first, 
concerns the numerical study of a potential ``dynamical stability'' property of the  equilibrium branches, identified by the bifurcation results of Propositions \ref{RT1}-\ref{RT2}. The second,  concerns the study of the structure of the equilibrium branches with respect to discreteness, and the strength of nonlinearity. 
%For our numerical study, we  consider two examples of initial data. The first example deals with spatially extended initial data of the form:
%In this section, we perform a numerical investigation of the dynamics of the DKG chain (\ref{eq0})-(\ref{eq02})-(\ref{eq03}). Our principal aim concerns the study of the ``dynamical stability'' of the system, which directly depends on the existence and linear stability of the  equilibrium branches, identified by the bifurcation results of Propositions \ref{RT1}-\ref{RT2}. 

\subsection{Existence and linear stability of bifurcating branches}
We start the presentation of our numerical results, by depicting a numerical justification of the global bifurcation analysis discussed in the previous section.  Figure~\ref{fig2} shows the numerically computed equilibrium branches  $\bfC_{E_j}$ ($j=1,\ldots,6$), for $K=59$, $L=60$ and $\beta=1$. Let us define the branch $\bfC_{E_0}$ [$\omega_d^2$ --vertical (red)  axis] corresponding to the trivial equilibrium ${\Phi}_0=\mathbf{0}$.  The numerical results are in full agreement with the analytical predictions, on the local geometric structure of the branches (which were schematically summarized in Fig. \ref{fig2cart}): the numerically computed branches bifurcate from the corresponding eigenvalues $\omega_d^2=E_j$ of the linear discrete eigenvalue problem  (\ref{eq063L})-(\ref{eq063AL}), and are locally concave up, as it was analytically shown in Propositions \ref{RT1}-\ref{RT2}. Furthermore, it is observed that the concavity of the various branches increases as the value of $\omega_d^2$ increases, in accordance to Eq.~(\ref{conc}), and they are indeed  unbounded, as it was proved in Proposition \ref{RT3}. Finally, in the right panel of Fig.~\ref{fig2}, we compare the numerically computed branches (depicted with solid lines) against the corresponding ones to the analytical estimation derived from the Taylor expansion up to the second order, i.e.,
\begin{eqnarray*}
\mu(s)=\mu(0)+\mu'(0)s+\mu''(0)\frac{s^2}{2}+{\cal O}(s^3).	
\end{eqnarray*}	
The coefficients of the expansion are calculated by our previous results, as follows: $\mu(0)=E_j$, $\mu'(0)=0$, $\mu''(0)$ is given by using Eq. (\ref{conc}), and the parameter $\alpha$ defined in (\ref{defa}), has the value $\alpha=0.182$ for the above set of lattice parameters. We observe a very good agreement between the two lines at least for small values of the norm where the higher order corrections are negligible. 
%%%%%%%%%%%%%%%
\begin{figure}[tbp!]
	\centering
	\includegraphics[height=7cm]{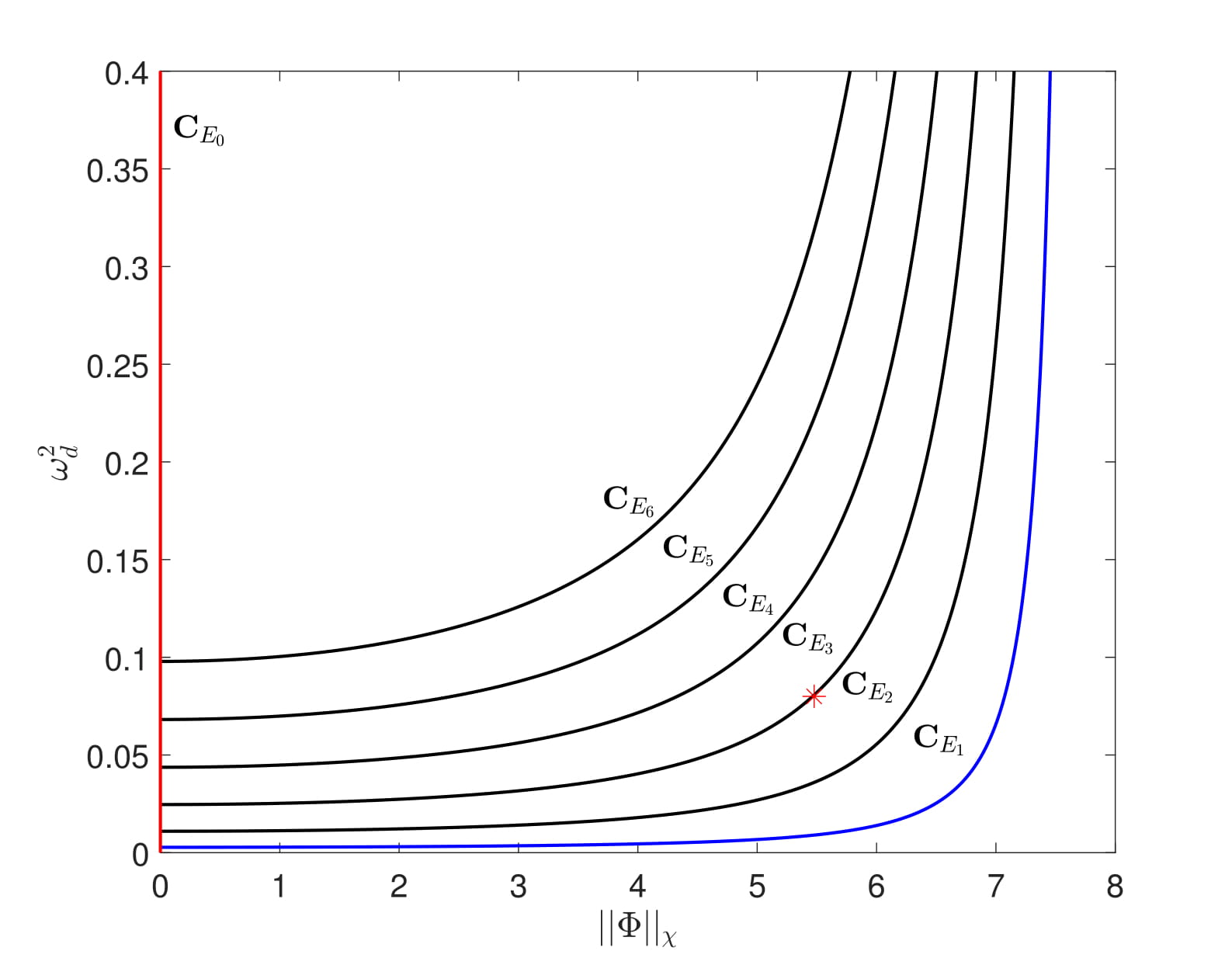} \hspace{0cm}\includegraphics[height=7cm]{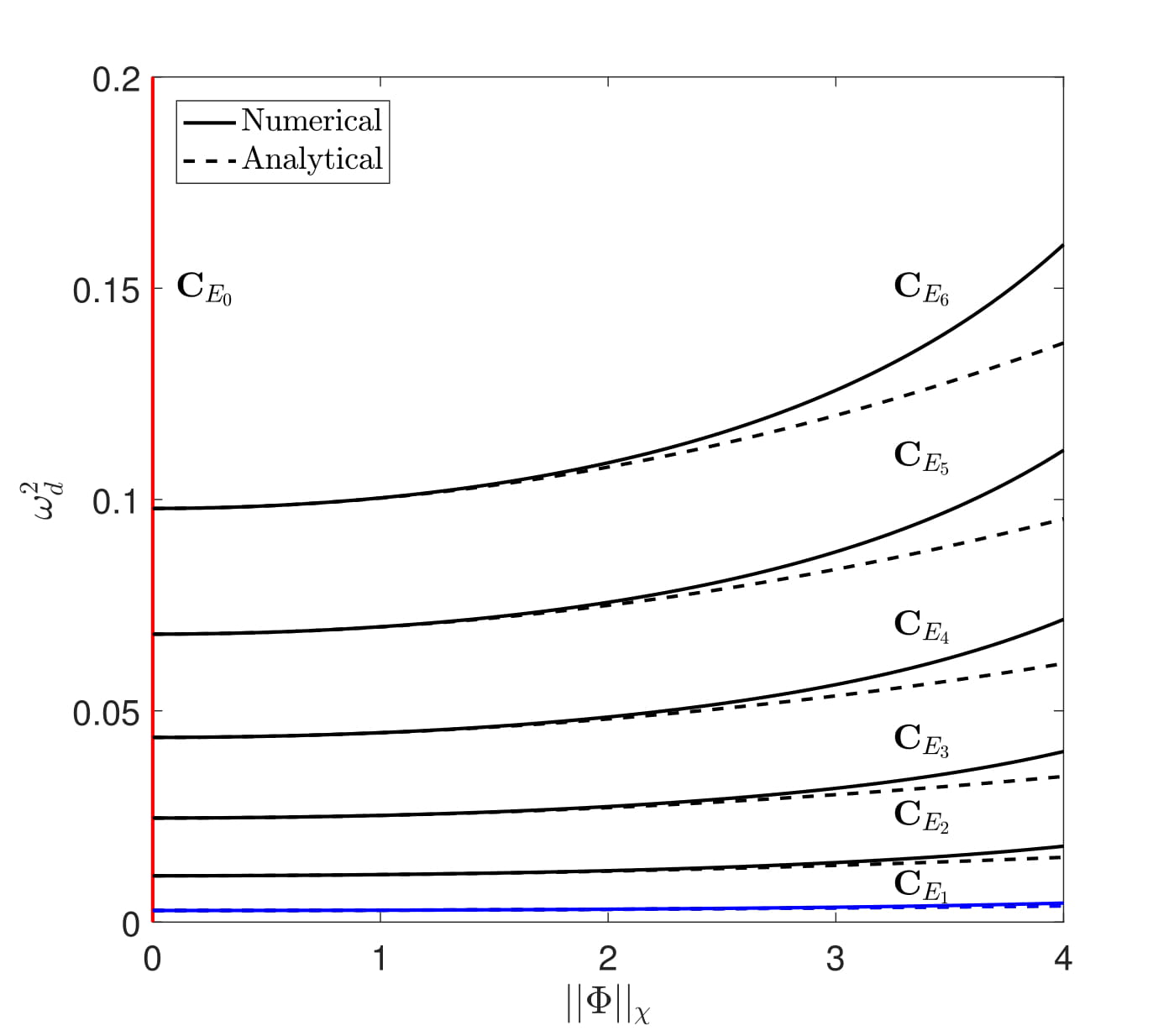}
	\caption{Numerically computed equilibrium branches  $\mathbf{C}_{E_{j}}$ which are bifurcating from the values of the parameter $\omega_d^2$, corresponding to the eigenvalues of the discrete eigenvalue problem (\ref{eq063L})-(\ref{eq063AL}). The figures show the equilibrium branches bifurcating from  $\omega_d^2=E_j$ for $j=1,\ldots,6$, where $E_1=0.0027, E_2=0.011, E_3=0.025$, $E_4=0.044, E_5=0.068$ and $E_6=0.098$. Parameters: $K=59$, $L=60$, $\beta=1$. In the left panel we show the branches for the full range of the values of the norm for equilibrium states. The  marked (by a star) point  $(||U_n(0)||, \omega_d^2)=(5.48, 0.08)$ stands for an initial condition  close in norm (with accuracy of order $10^{-2}$), to an equilibrium solution $\Phi_3$ of $\mathbf{C}_{E_{3}}$). In the right panel we compare the numerically calculated branches (solid line) with the analytically predicted by Eq.~(\ref{conc}) ones (dashed line) for smaller values of the norm.}
	\label{fig2}
\end{figure}
%%%%%%%%%%%%%%%%%%%%%%%%%%%%%%%%%%%%%%%%%%%%%%%%%%%%%%%%%%%%%%%%%%%%%%%%%

Next, we consider the linear stability of the equilibrium branches $\bfC_{E_j}$.
%It was interesting to verify and detect with high numerical accuracy, their transition instability, exactly when the bifurcation parameter $\omega_d^2$ is crossing the linear modes $E_n$, as expected by the previous analytical arguments.
To each equilibrium branch $\bfC_{E_j}$ bifurcating when $\omega_d^2$ is crossing the linear mode  $E_j$, a number of $j-1$ unstable eigenvalues emerge, and consequently, the corresponding equilibria $\Phi_j$ possess a $j-1$-dimensional unstable manifold.  Such an instability structure is similar to that described by \cite[Theorem 24.14, pg. 538]{Smo94} for continuous gradient systems, and insight can be given by discussing the linearization spectrum of the branch  $\bfC_{E_0}$: Figure~\ref{fig2b}, depicts the birth of unstable eigenvalues of $\bfC_{E_0}$ as the bifurcation parameter $\omega_d^2$ is increasing, as well as the growth of their positive real part. In the inset of Fig.~\ref{fig2b}, we observe that for $0<\omega_d^2<E_1 \approx 0.003$, there exist no unstable eigenvalues of $\bfC_{E_0}$. 
%%%%%%%%%%%%%%%%%%%%%%%%%%%%%%
\begin{figure}[tbp!]
	\centering
	\includegraphics[scale=0.22]{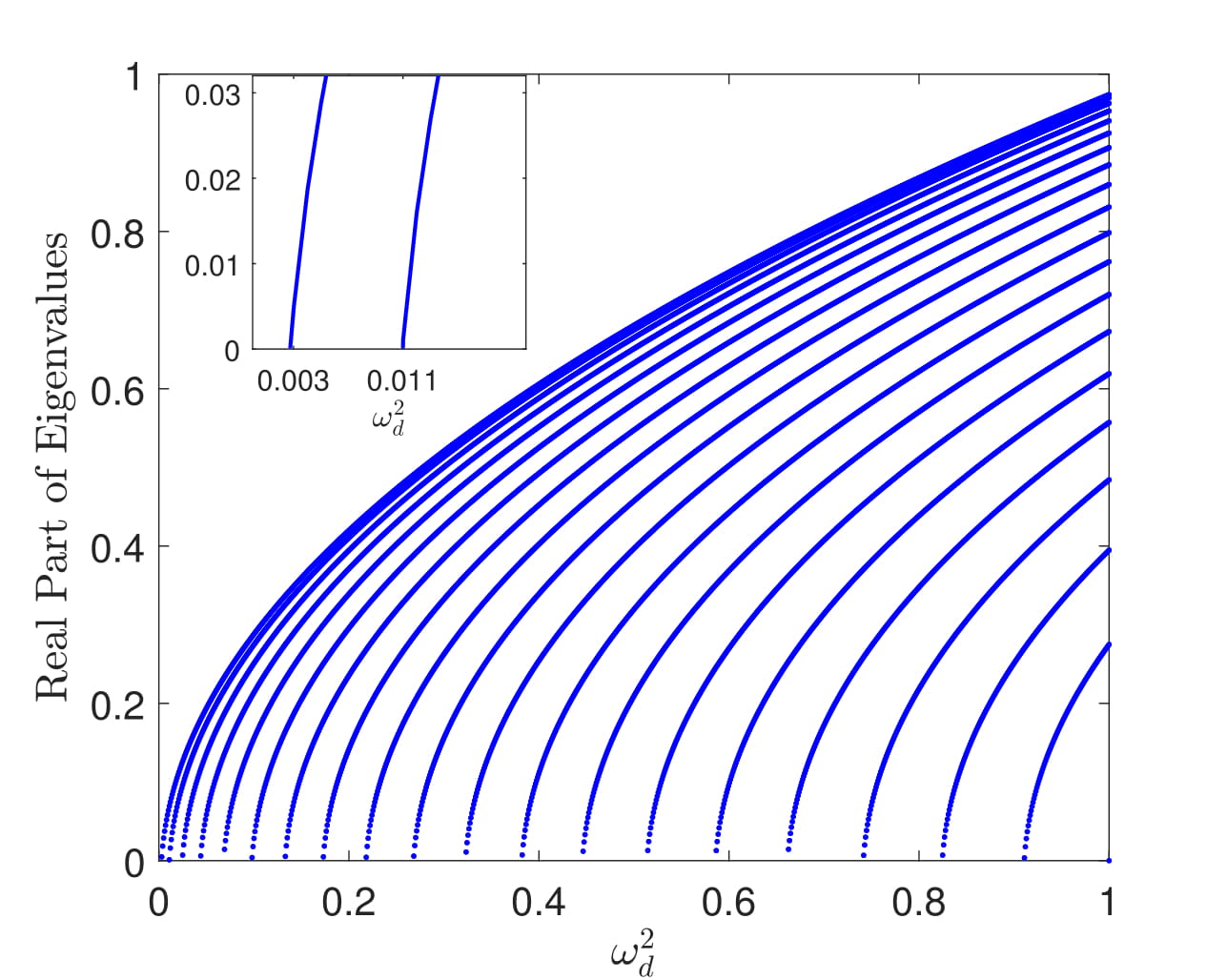} \hspace{0cm}
	\caption{(Color Online) Numerical justification for the emergence of unstable eigenvalues of the branch $\bfC_{E_0}$, as the bifurcation parameter is crossing the linear modes $E_j$. The growth of their positive real part is also portrayed. Same parameters as in the previous figure.}
          %Parameters: $K=59$.}
	\label{fig2b}
\end{figure}
%%%%%%%%%%%%%%%%%%%%%%%%
The first unstable eigenvalue of $\bfC_{E_0}$ is numerically detected when $\omega_d^2=E_1$, i.e., when the first branch $\bfC_{E_1}$  bifurcates. The branch $\mathbf{C}_{E_1}$ is stable,  while $\bfC_{E_0}$ acquires one unstable eigenvalue as a result of the bifurcation. The second  unstable eigenvalue of $\bfC_{E_0}$ is detected when $\omega_d^2=E_2 \approx 0.01$, which is the value 
of $\omega_d^2$ when $\mathbf{C}_{E_2}$ bifurcates as well.
This new branch $\mathbf{C}_{E_2}$ inherits one positive eigenvalue (from $\bfC_{E_0}$), giving rise to its one-dimensional unstable manifold, while $\bfC_{E_0}$  has now acquired two unstable eigenvalues. Figure~\ref{fig2b}, verifies numerically that, inductively, as soon as $\omega_d^2=E_j$, the $\bfC_{E_0}$ branch acquires its $j$-th unstable eigenvalue. Consequently, the associated bifurcating branch $\bfC_{E_j}$ inherits $j-1$ unstable eigenvalues, giving rise to its $j-1$-dimensional unstable manifold.
We note in passing that
the results of Sturm-Liouville theory
(for discrete, self-adjoint operators) via the Sturm comparison theorem
can be used to lead to the same conclusions for the number of unstable
eigenvalues of each $\bfC_{E_j}$ branch.
\subsection{Evolution of spatially extended initial conditions}

In light of the instability properties of the equilibrium branches discussed above, it is interesting to numerically investigate the 
local attractivity properties of the relevant equilibria. We   consider a set of extended initial conditions of the form
\begin{eqnarray}
\label{ic1}
U_n(0)=U_{n,0}=a\,\sin\left(\frac{j\pi hn}{L}\right),\;\;j=1,...,K
\end{eqnarray}
where $h=\frac{L}{K+1}$ is the lattice spacing and $a>0$ is the amplitude of (\ref{ic1}); we also assume zero initial velocity,  
\begin{eqnarray}
\label{ic4}
\dot{U}_n(0)=U_{n,1}=\mathbf{0}.
\end{eqnarray}
For this type of initial data we introduce the notation 
	\begin{eqnarray}
	\label{eqnota}
	U_n(0)\simeq\mathbf{C}_{E_{j}},
	\end{eqnarray} 
to describe the following property of (\ref{ic1}): 
$U_n(0)$ has the same sign-changes as a nonlinear steady state $\Phi_{j}\in\mathbf{C}_{E_{j}}$ of the DKG chain (\ref{eq0})-(\ref{eq02})-(\ref{eq03}) (solution of the nonlinear stationary problem (\ref{eq063})-(\ref{eq063A})). Let us recall that a nonlinear steady-state $\Phi_{j}\in\mathbf{C}_{E_{j}}$ has $j-1$ sign-changes. We call such an initial condition, \emph{similar} to a branch $\mathbf{C}_{E_{j}}$. For such initial conditions we consider two different scenarios described as follows. 
\begin{itemize}
	\item Scenario I: We study the dynamics when the initial condition $U_n(0)$ is such that $U_n(0)\simeq\mathbf{C}_{E_{j}}$ and the bifurcation parameter $\omega_d^2$ is selected so that the point $(||U_n(0)||, \omega_d^2)$ belongs to the bifurcation diagram (see Fig.~\ref{fig2cart}). In this situation, the norm of the initial condition is close to that of a solution of the branch $\mathbf{C}_{E_{j}}$ up to numerical accuracy of $10^{-2}$. 
\end{itemize}

%{\bf PGK: it is probably difficult to change it now, but $||U_n(0)||^2$
%  is not really a proper bifurcation parameter here. This is not NLS
%  where it would correspond to sth. conserved. Instead here $H$ should
%  be used.}

\begin{figure}[tbh!]
	\centering
\includegraphics[scale=0.11]{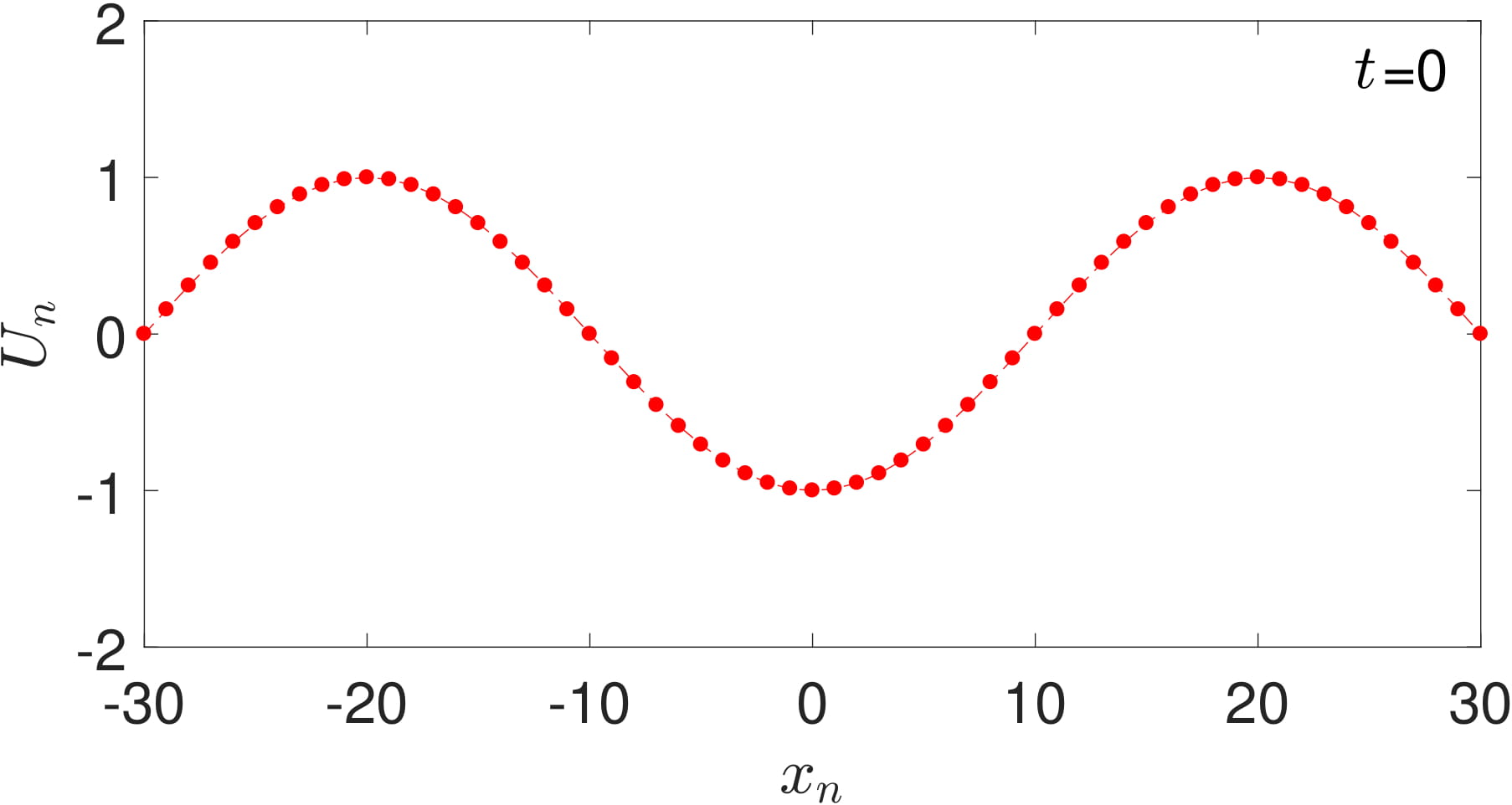}
\quad
\includegraphics[scale=0.11]{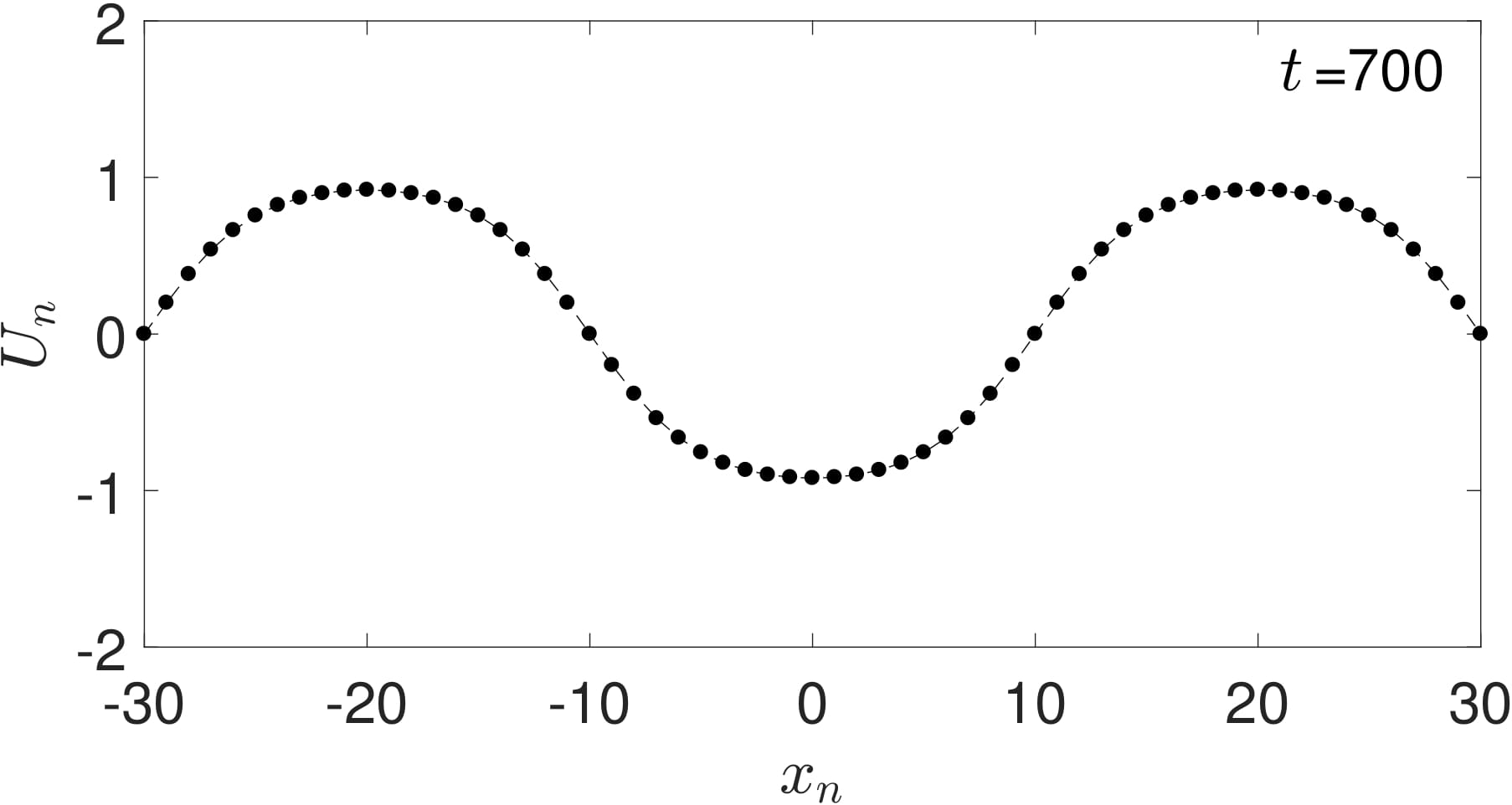}
\\[4ex]
\includegraphics[scale=0.15]{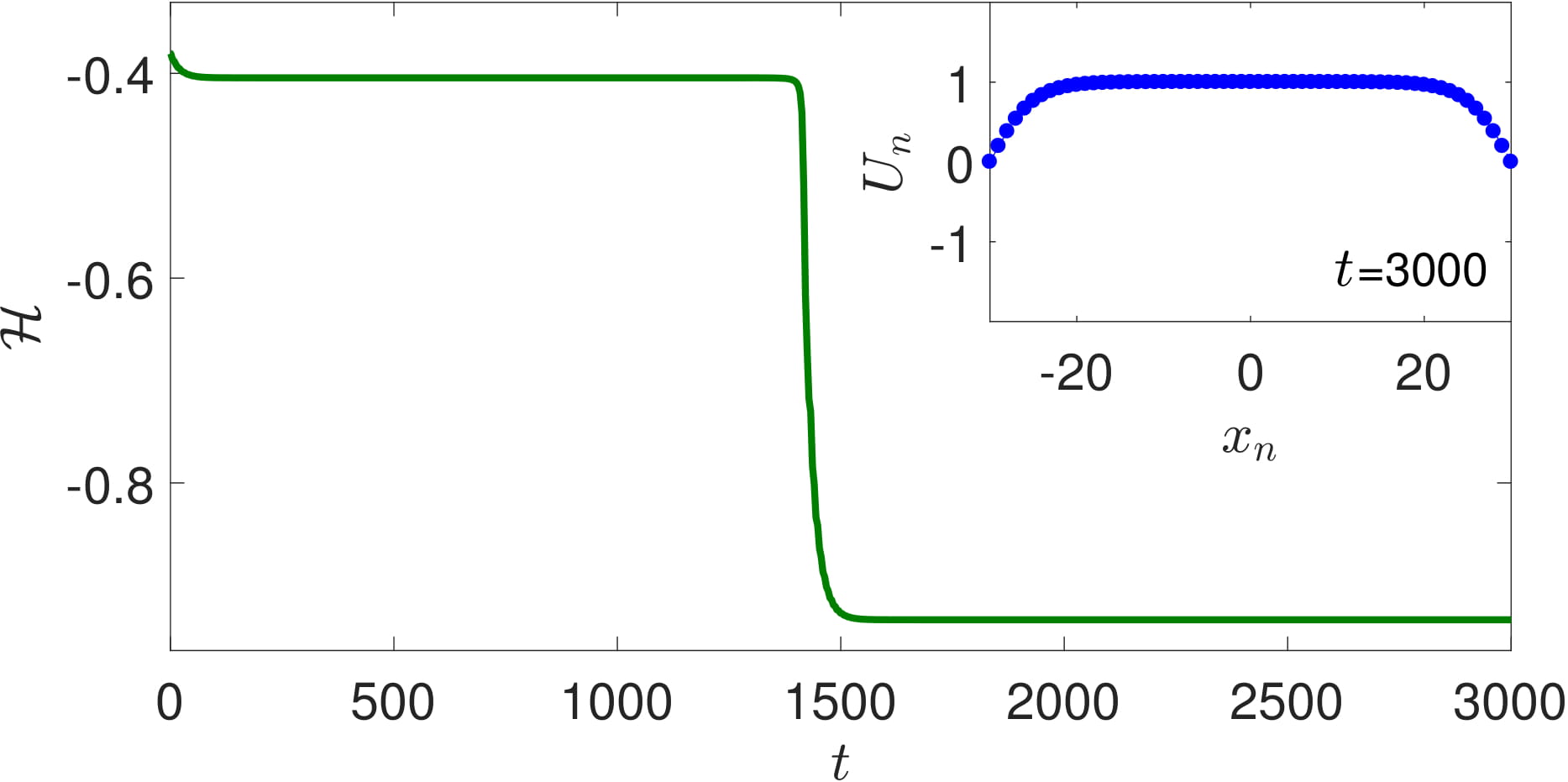} \vspace{0cm}
	\caption{(Color Online) Metastable dynamics and convergence to the  state of the branch $\bfC_{E_1}$ (2nd row), for an initial condition (\ref{ic1}), $U_n(0)\simeq\mathbf{C}_{E_{3}}$, and amplitude $a=1$ (1st row, left panel). Other parameters:  $K=59$, $L=60$, $\omega_d^2=0.08$, $\beta=1$, $\delta=0.05$. Metastability is observed in the evolution of the Hamiltonian energy [continuous (green) curve].}
	\label{Fig4}
\end{figure}
%In this scenario, we wish to investigate the role of  the similarity of the initial conditions to a branch $\mathbf{C}_{E_{j}}$, its amplitude, and the parameter $\omega_d^2$,  in  defining orbits, traced sufficiently close to a potential stable manifold of the unstable equilibrium $\Phi_j$.  
%
%More precisely, 
To be more specific, such a study wishes to take advantage of the local structure of the branches (as shown in Fig.~\ref{fig2}), and of the associated linear stability analysis, in predicting the state of convergence for points $(||U_n(0)||, \omega_d^2)$ sufficiently close to an equilibrium $\Phi_j$ and within the local bifurcation diagram (the latter is relevant for moderate values of the initial norm $||U_n(0)||$). 

For this scenario, we start with the initial condition  $U_n(0)\simeq\mathbf{C}_{E_{3}}$ of amplitude $a=1$ and bifurcation parameter $\omega_d^2=0.08$. This choice defines a point $(||U_n(0)||,\omega_d^2)=(5.48,0.08)$ (visualized by a star) in the bifurcation diagram of Fig.~\ref{fig2}, which is close 
%in norm (with an accuracy of order $10^{-2}$), 
to an equilibrium solution of the branch $\mathbf{C}_{E_{3}}$. The rest of the parameter values are 
%fixed as 
$K=59$, $L=60$, $\beta=1$, and $\delta=0.05$. 
The dynamics of this state is shown in Fig.~\ref{Fig4}.
The top left panel of this figure shows the profile of the above initial  condition at $t=0$. The system was integrated up to $t=3000$, and the resulting dynamics is summarized in the bottom panel, portraying 
%the graph of 
the Hamiltonian energy as a function of $t$ [continuous (green curve)], and the ultimate equilibrium of convergence (inset).  The Hamiltonian energy used herein as a diagnostic (attaining a constant value when convergence to an equilibrium is reached), captures the metastable dynamics, as verified numerically by the two different plateaus attained in its graph. In this case, metastability involves an equilibrium $\Phi_3\in \bfC_{E_3}$ (located at energy level $\mathcal{H}
\approx -0.4$), and the ground state $\Phi_1\in  \bfC_{E_1}$ (located at the energy level $\mathcal{H}=-0.93$). Such metastable dynamics indicate that the point $(||U_n(0)||,\omega_d^2)=(5.48,0.08)$ defines an orbit traced close to the stable manifold of the equilibrium $\Phi_3\in \mathbf{C}_{E_{3}}$ for $0<t\lesssim 1500$, but for sufficiently long times
$t\gtrsim 1500$,  the unstable manifold of the state eventually
takes over leading eventually to convergence to the ground state $\Phi_1\in  \bfC_{E_1}$. The dynamics is in accordance with the linear stability analysis of the branches: recall that solutions of $\mathbf{C}_{E_{3}}$ are unstable (possesing a 2D-unstable manifold), while solutions of $\mathbf{C}_{E_{1}}$ are linearly stable.
Note that although (both in the present and in the following investigations) we use a final integration time of $t=3000$, once the system reaches a stable configuration it is not expected to depart from it. 
%%%
%\begin{remark}
%Since we begin with an odd initial condition $U_n(0)\simeq\mathbf{C}_{E_2}$ and we end up in an even final state $\mathbf{C}_{E_{1}}$, the parity of the solution is not conserved in this case, as it would have been expected. This phenomenon consists an artifact of the numerical integration because the round-off error, which inevitably exists when the orbit lies in the $\mathbf{C}_{E_{2}}$, grows exponentially along its even unstable direction which guides the system converge to the $\mathbf{C}_{E_{1}}$ family.
%\end{remark}
%
%%

		%%%%%%%%%%%%%%%%%%%%%%%%%%%%%%%%%%%%%%%%%%%%%%%%%%%55
		%%%%%%%%%%%%%%%%%%%%%%%%%%%%%%%%%%%%%%%%%% A4 %%%%%%%%%%%%%%%%%%%%%%%%%%%%%%
%		\begin{figure}[t!]
%	\centering
%	\includegraphics[scale=0.30]{fig4.eps} \vspace{0cm}
%	\caption{(Color Online) Dynamics and convergence to the steady state of the branch $\bfC_{E_8}$, for an initial condition (\ref{ic1}), 	$U_n(0)\simeq\mathbf{C}_{E_{16}}$, and amplitude $a=1$ (left panel). Other parameters:  $K=59$, $\omega_d^2=1$, $\beta=1$, $\delta=0.05$.}
%	\label{Figu4}
%\end{figure}
		\begin{figure}[tbh!]
			\centering

\includegraphics[scale=0.125]{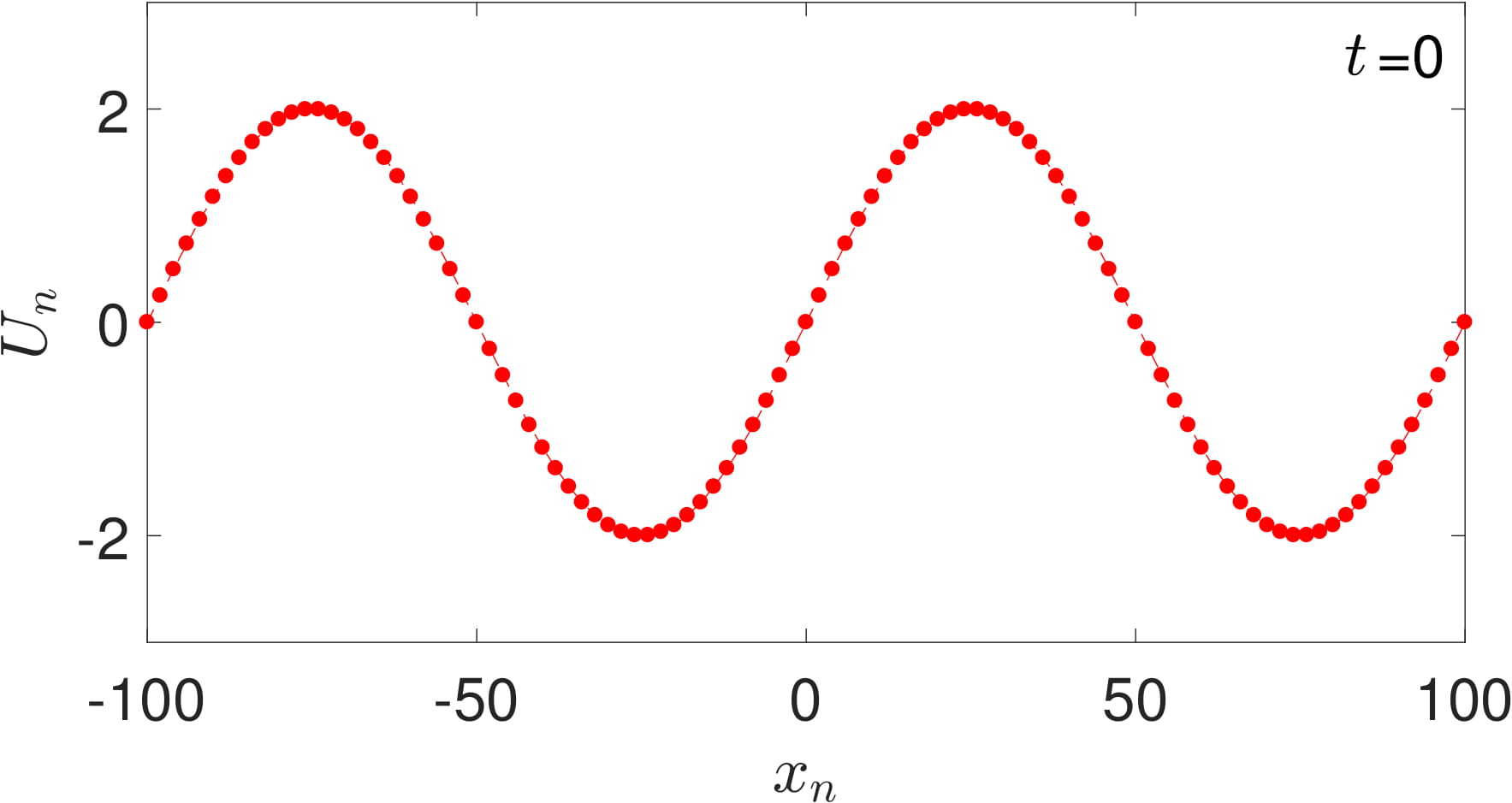}
\quad
\includegraphics[scale=0.125]{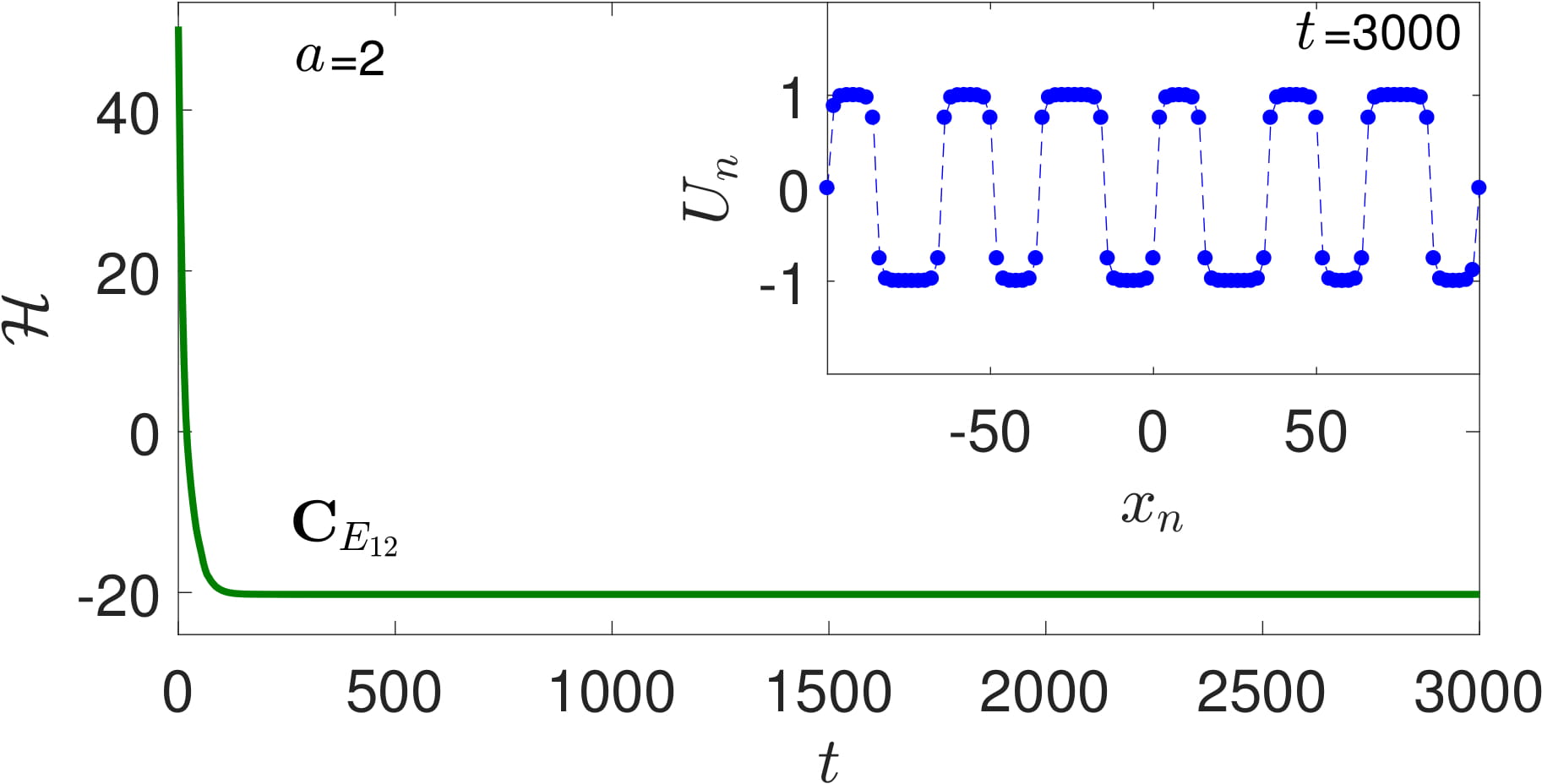}
\\[4ex]
\includegraphics[scale=0.125]{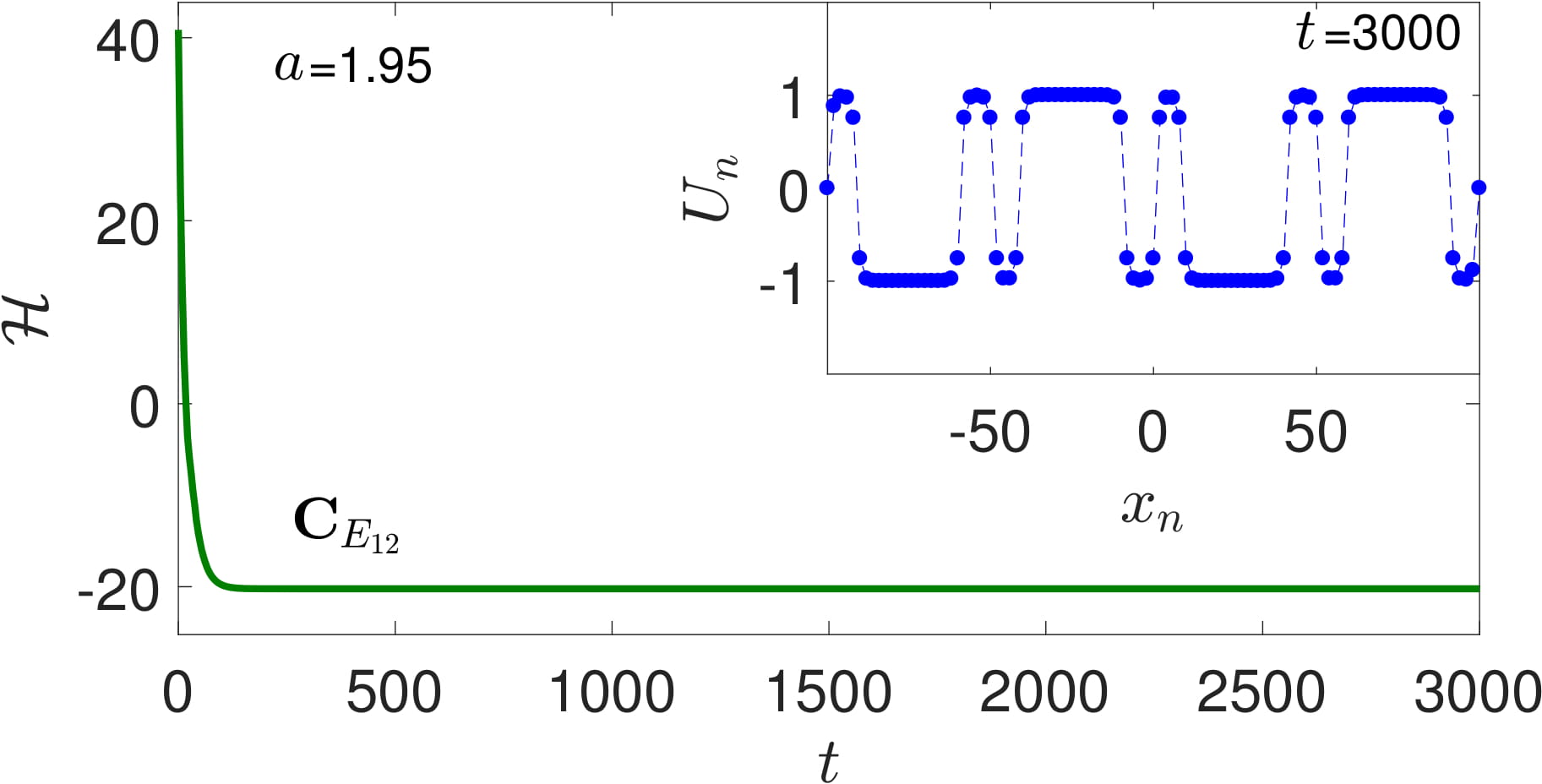}
\quad	
\includegraphics[scale=0.125]{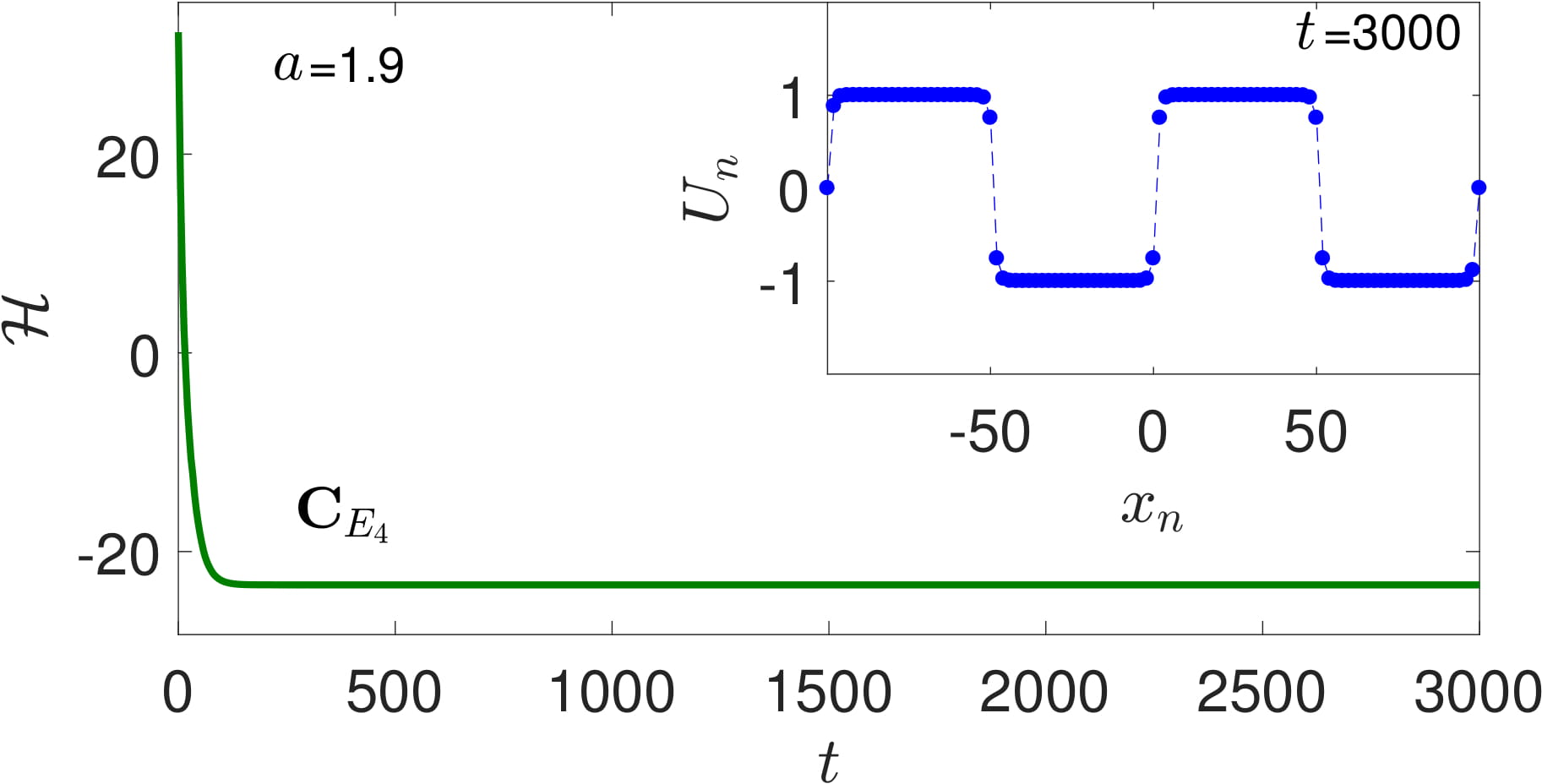}
\\[4ex]
\includegraphics[scale=0.125]{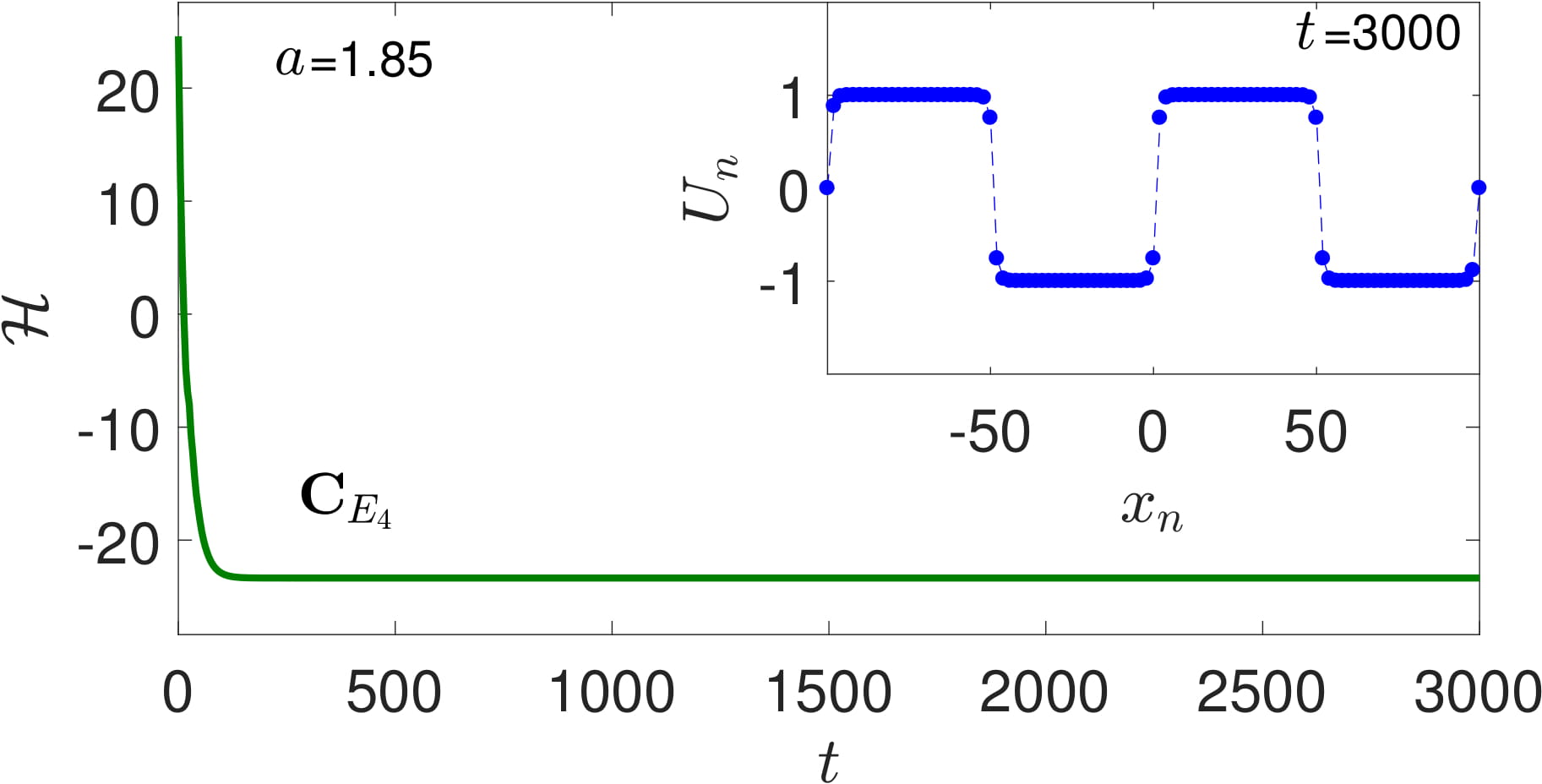}
\quad	
\includegraphics[scale=0.125]{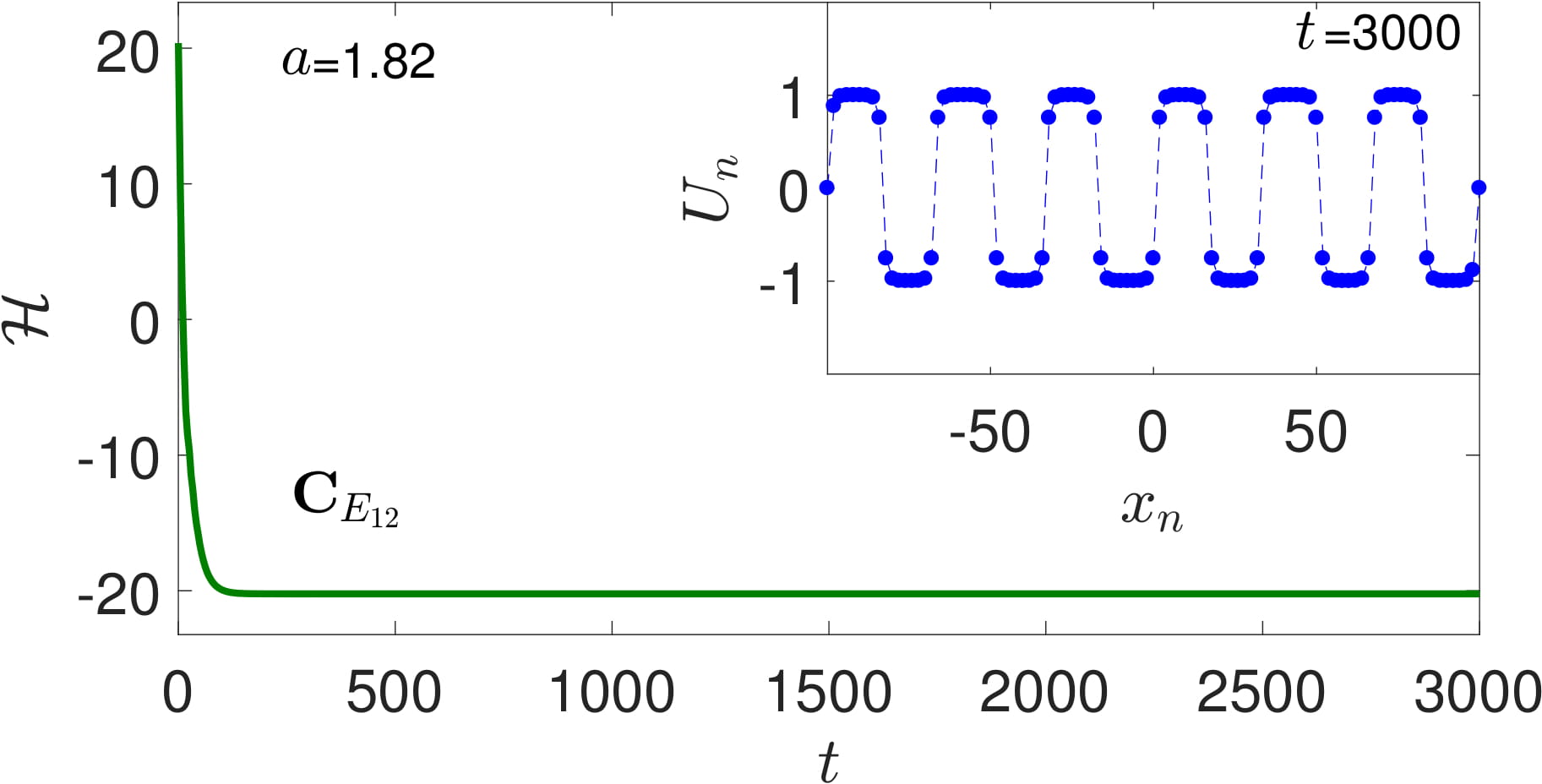}
			\vspace{0cm}
			\caption{(Color Online) Dynamics for an initial condition (\ref{ic1}), $U_n(0)\simeq\mathbf{C}_{E_{4}}$, for $\omega_d^2=1$, of varying amplitudes. Descending amplitudes, we observe convergence to equilibria of different branches than $\bfC_{E_4}$, and the appearance of amplitude values $a$ for which the solution converges to a nonlinear equilibrium $\overline{\Phi}_{4}\in \bfC_{E_{4}}$. Other parameters:  $K=99$, $L=200$, $\beta=1$, $\delta=0.05$.}
			\label{Figu5}
		\end{figure}
		%%%%%%%%%%%%%%%%%%%%%%%%%%%%%%%%%%%%%%% A9b %%%%%%%%%%%%%%%%%%%%%%%%%%%%%%%%%%%

We proceed now to the second scenario of initial conditions, namely:
\begin{itemize}		
\item Scenario II. We study the dynamics corresponding to initial conditions (\ref{ic1}) of varying amplitudes, such that $U_n(0)\simeq\mathbf{C}_{E_{j}}$, in the case where the point $(||U_n(0)||,\omega_d^2)$ lies outside the local bifurcation diagram of Fig.~\ref{fig2}. 
% is far from the bifurcating value $E_{j}$  (see Fig. \ref{fig2cart}). 
\end{itemize}
%%%%%%%%%%%%%%%%%%%%%%%%%%%%%%%%%%%%%%%%%%%%%%%%%%%%%%%%%%%
Scenario II modifies Scenario I, as follows. 
%in examining the role of the strength of the nonlinearity 
%%{(\bf VK: we don't examine the role of the discreteness any more)} %/discreteness, 
%in the convergence dynamics to an equilibrium $\Phi_j$, when initial conditions of varying amplitude are still
%similar to a particular branch $\mathbf{C}_{E_{j}}$.  
%%{\bf VK: maybe similar instead of belonging?}. 
Expecting that the global behavior of the branches may be more complicated for larger values of the parameter $\omega_d^2$ and norm $||U_n(0)||$ than those shown in Fig.~\ref{fig2}, we wish to explore the convergence dynamics for  points $(||U_n(0)||, \omega_d^2)$ far from the 
%families of the 
local bifurcation diagram. Our aim is to reveal the 
%In this primary stage, we emphasize on the potential 
role of the amplitude of the extended initial condition, as well as of its  similarity to a branch, 
%of the spatially extended initial condition, 
in the selection of the ultimate state of convergence.
%ω=1 or 2 as we have is within the eigenvalues. But we can have larger 
%

For the numerical investigation on this scenario, we consider two cases of the initial conditions (\ref{ic1}), namely: $U_n(0)\simeq\bfC_{E_4}$ and $U_n(0)\simeq\bfC_{E_{9}}$, 
for varying amplitude 
%examining their dynamics varying their amplitude 
$a$ and fixed $\omega_d^2=1$.  
%For the first example, we consider $\omega_d^2=1$, and for the second example $\omega_d^2=2$,  respectively, while 
The rest of the parameter values of the system are fixed as: $K=99$, $L=200$, $\beta=1$, and \textcolor{black}{$\delta=0.05$}. The system was again integrated up to $t=3000$.

The numerical results of the dynamics of the system for the initial conditions $U_n(0)\simeq\mathbf{C}_{E_{4}}$  are summarized in 
Figs.~\ref{Figu5} and \ref{Figu6}. 
We start with an amplitude $a=2$, corresponding to a norm $\|U_n(0)\|=14.14$. Clearly, the point $(\|U_n(0)\|,\omega_d^2)=(14.14,1)$, lies outside the local bifurcation diagram which is still similar (for $K=99$ and $L=200$) to that  of Fig.~\ref{fig2}: in this case, 
the maximum value of $E_j$ is $E_{99}=0.99$, while the maximum value of the norm of a solution family is $\|\Phi_n\|$=10.
%{\bf PGK: I am not sure what this sentence means/wants to say, also
%  is it $E_{99}$ or $E_{99}$ this does not seem in line with Fig. 4 (?)}
The associated orbit  converges to an equilibrium of a $\bfC_{E_{12}}$ branch, of reduced norm $\|\Phi_{12}\|=9.37$, due to the dissipative nature of the system. The convergence occurs without metastable transition, as detected by the evolution of the Hamiltonian energy, shown in the  upper right panel of 
Fig.~\ref{Figu5}. The   symmetric nonlinear equilibrium  $\Phi_{12}\in \bfC_{E_{12}}$ attained, corresponds to a plateau in the energy at $\mathcal{H}=-20.22$ and it is displayed in the inset of the panel. The dynamical behavior of the system is in agreement with its linear stability analysis, since $\Phi_{12}$  is linearly stable with all the corresponding eigenvalues having negative real part. This is also true for all the converging equilibria of the numerical investigation that follows.
%%%%%%%%%%%%%%%%%%%%%%%%%%%%%%%%%%%%%%%%% A4 %%%%%%%%%%%%%%%%%%%%%%%%%%%%%%
\begin{figure}[tbh!]
	\centering
\includegraphics[scale=0.125]{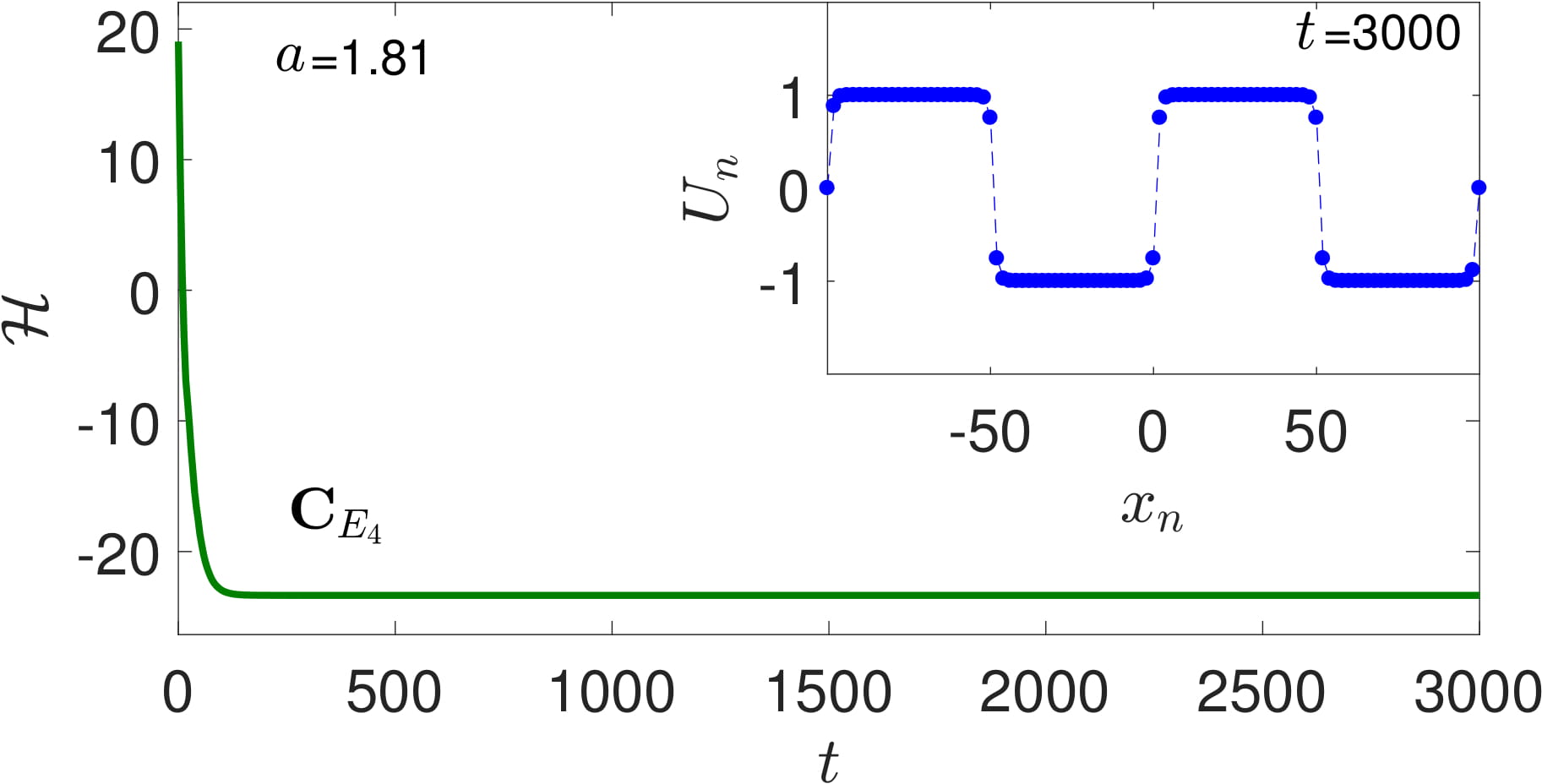}
\quad
\includegraphics[scale=0.125]{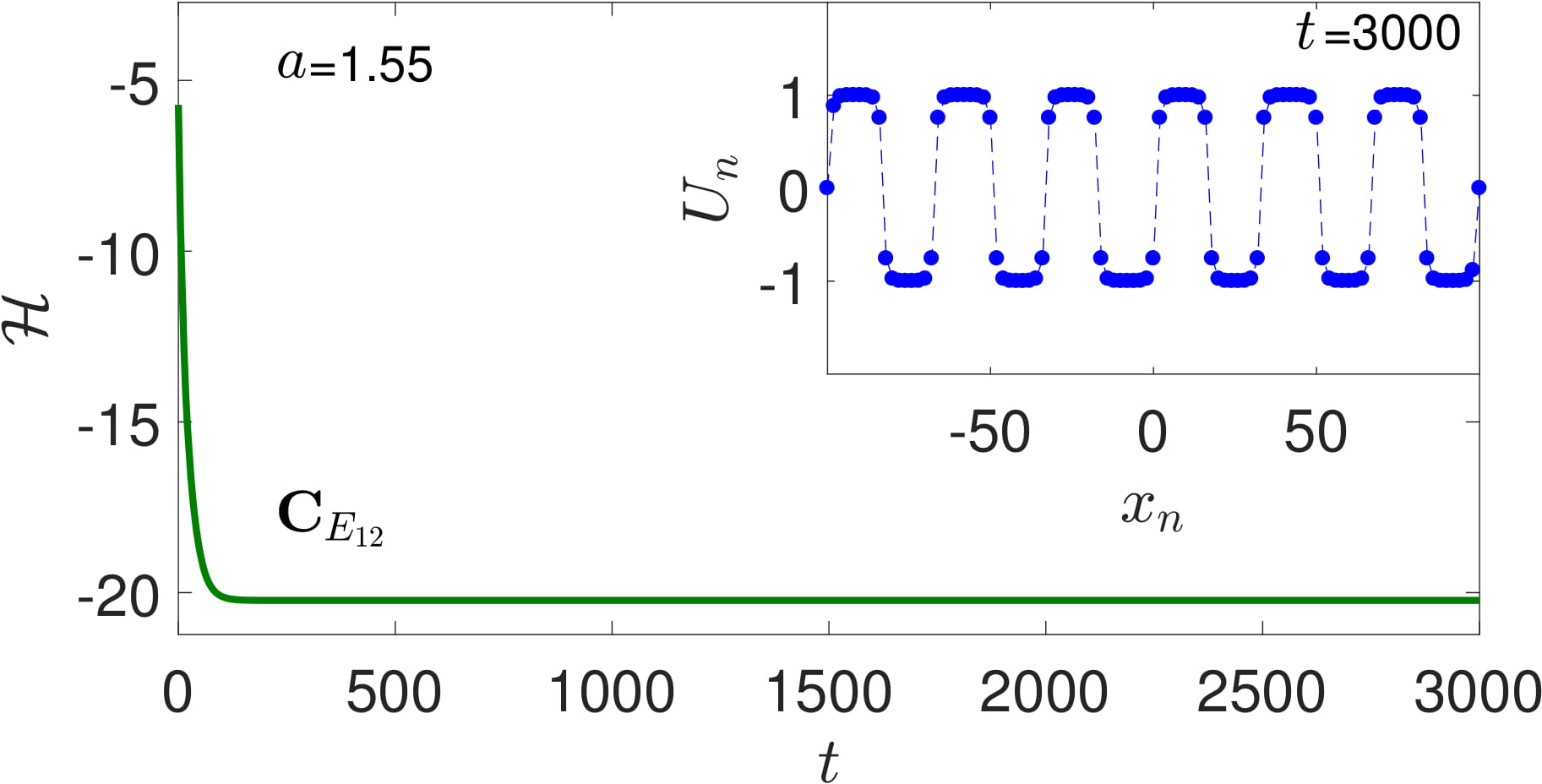}
\\[4ex]
\includegraphics[scale=0.125]{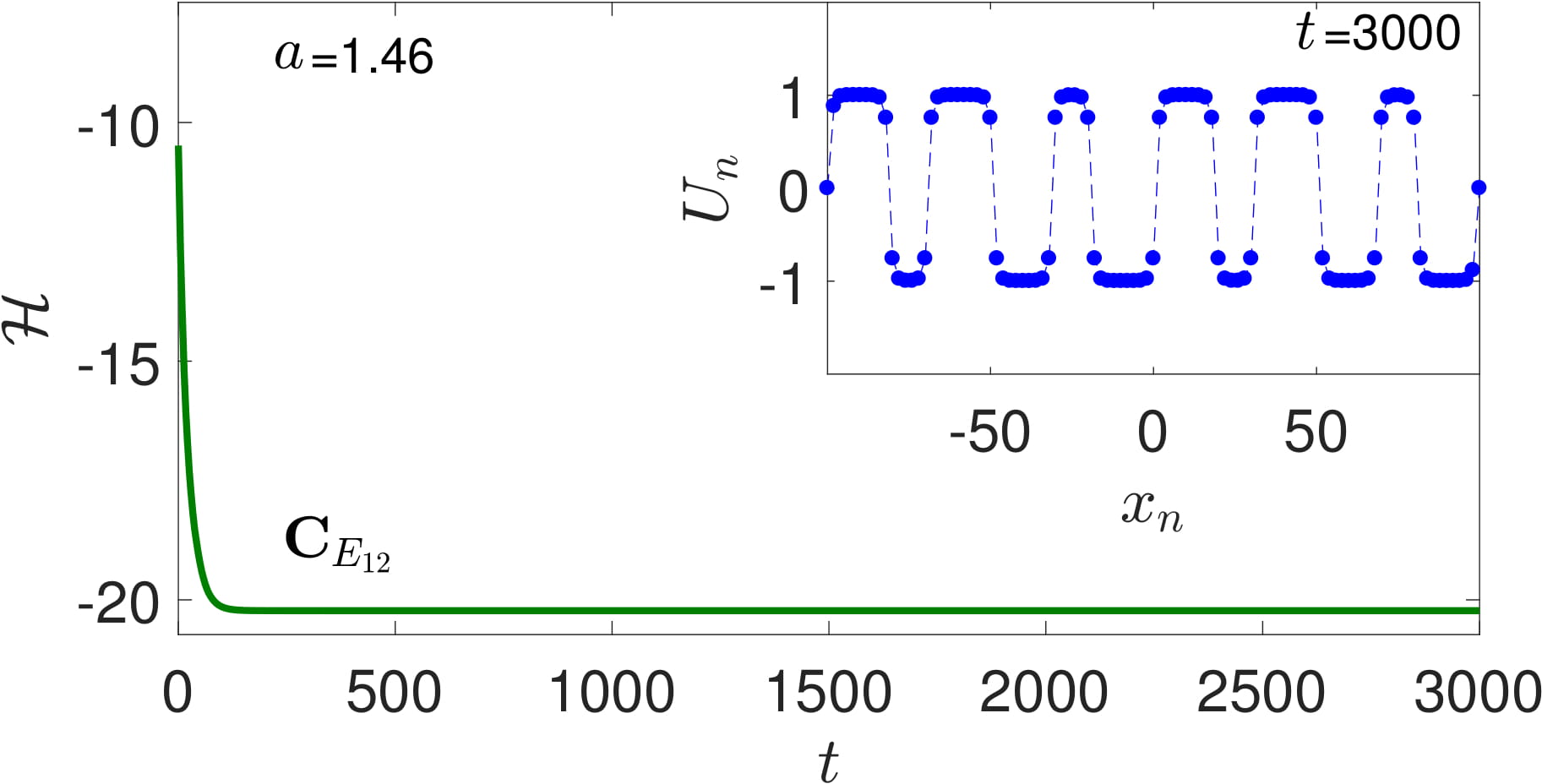}
	\quad	
\includegraphics[scale=0.125]{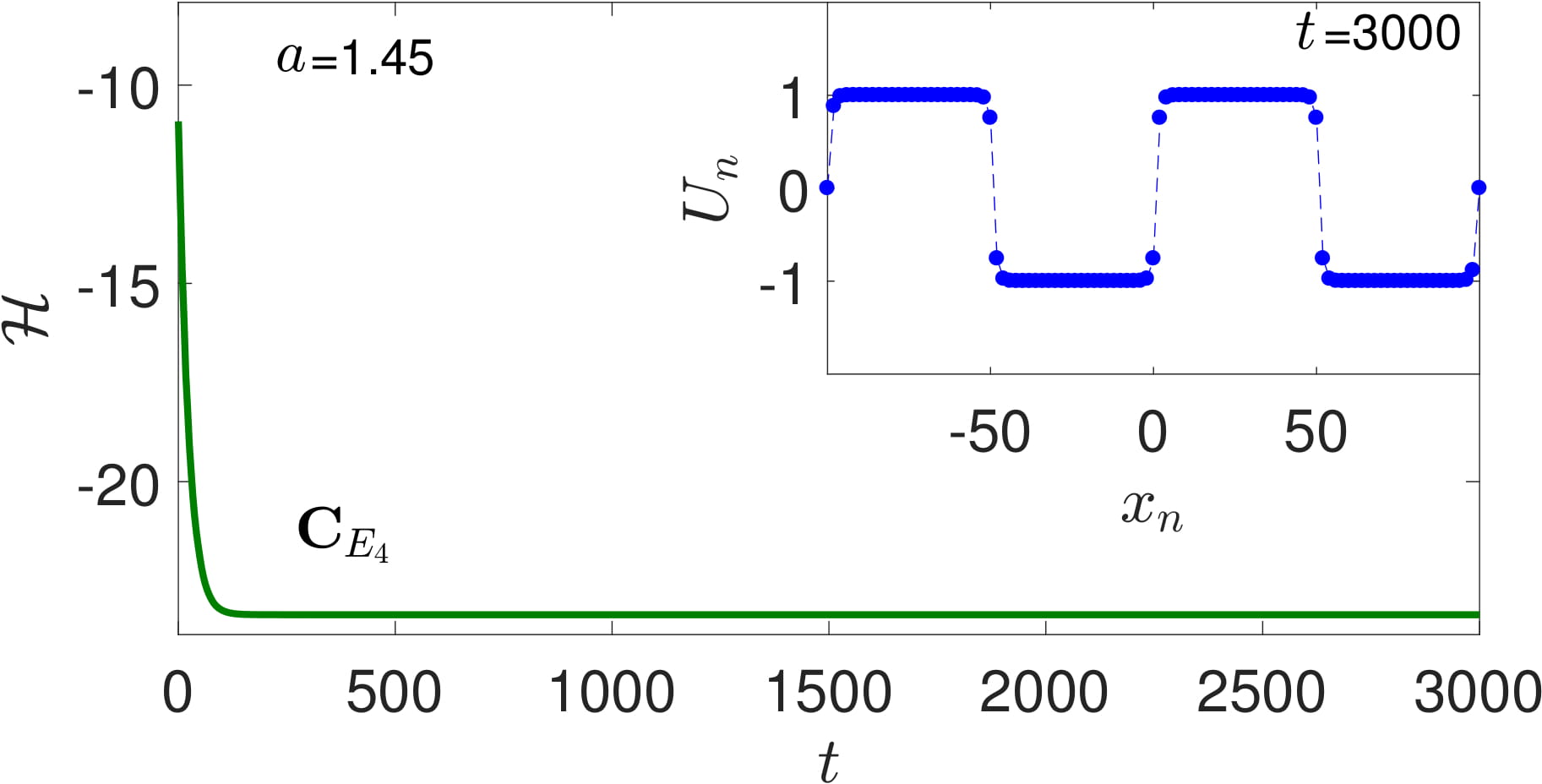}
	\vspace{0cm}
	\caption{(Color Online) Fig. \ref{Figu5} continued: Detection of the stability threshold $a^*=1.45$: For $a\leq a^*$, any initial condition (\ref{ic1}), $U_n(0)\simeq\mathbf{C}_{E_{4}}$, converges to a nonlinear equilibrium $\overline{\Phi}_{4}\in \bfC_{E_{4}}$.}
	\label{Figu6}
\end{figure}
%%%%%%%%%%%%%%%%%%%%%%%%%%%%%%%%%%%%%%% A9b %%%%%%%%%%%%%%%%%%%%%%%%%%%%%%%%%%%

By decreasing the amplitudes (using a $10^{-2}$ decrease step), remarkable features may be observed. For an initial condition of $a=1.95$ corresponding to norm $\|U_n(0)\|=13.8$, the dynamics are depicted in the  middle left panel of  Fig. \ref{Figu5}. We still observe convergence to a nonlinear equilibrium of $\bfC_{E_{12}}$, sharing the same norm $\|\Phi_{12}\|=9.37$ and the same energy $\mathcal{H}=-20.23$ with the previous case, but with a totally different, asymmetric profile. This feature
showcases the existence of different $\Phi_{E_{12}}$-type linearly stable configurations
%showcases the existence of different $\bfC_{E_{12}}$ branches
which emerge from those described in Fig.~\ref{fig2} through symmetry breaking bifurcations or, independently, through saddle-node bifurcations. These branches, as it is evident, may lie  very close (in our case $10^{-2}$-close) to each other, both in energy and  norm.

%Decreasing further the amplitudes, we identified {\em isolated amplitude  values $a^{**}=1.9,\,1.85,\,1.81$}, for which the initial conditions $U_n(0)\simeq\bfC_{E_4}$ of norms $\|U_n(0)\|^2=180,\,171,\,164$ respectively, converge to a nonlinear equilibrium of $\bfC_{E_4}$, with $\|\Phi_{4}\|=95.6$, as shown in the middle right panel, and in the  bottom left panel of Fig. \ref{Figu5}, as well as, in the upper left panel of Fig. \ref{Figu6}.   For all amplitudes between the above isolated values $a^{**}$, the initial conditions $U_n(0)\simeq\bfC_{E_4}$ converge to significantly (with respect their symmetries), geometrically distinct nonlinear states of $\bfC_{E_{12}}$, as portrayed in the bottom right panel of Fig. \ref{Figu5}, in the  upper right panel  of Fig. \ref{Figu6}, and in the bottom left panel  of Fig. \ref{Figu6}, showing the dynamics for $a=1.82$, $a=1.55$ and $a=1.46$, respectively.  Such dynamics may indicate that the successive bifurcating branches on  $\bfC_{E_{12}}$ may asymptote to, or even crossing the branch  $\bfC_{E_4}$. 

Decreasing further the amplitude of the initial condition, we identify an interval of values of $a$, namely $1.83\leqslant a\leqslant 1.94$, for which the initial conditions $U_n(0)\simeq\bfC_{E_4}$ 
%of norms $\|U_n(0)\|=13.4 \ \text{and}\ \|U_n(0)\|=13.07$ respectively
, converge to the same nonlinear equilibrium of $\bfC_{E_4}$, with $\|\Phi_{4}\|=95.6$ and $\mathcal{H}=-23.36$, i.e., to a final state with the same sign changes as the initial conditions. In the middle right  and bottom left panels of Fig.~\ref{Figu5} the results of two indicative values of $a$ i.e.~$a=1.9$ and $a=1.85$ are shown. 
%The system converges to the same state also for $a=1.81$ ($\|U_n(0)\|=12.8$), as shown in the top left panel of Fig.~\ref{Figu6}. 
On the other hand, in the $a$-values interval $1.46\leqslant a\leqslant1.82$ the initial conditions $U_n(0)\simeq\bfC_{E_4}$ converges interchangeably either in the previously mentioned $\Phi_4\in\bfC_{E_{4}}$ state or in different $\bfC_{E_{12}}$ branches. The results for the values $a=1.82, 1.81, 1.55~ \text{and}~ 1.46$ of this interval are shown in Figs.~\ref{Figu5} and \ref{Figu6}.
%For all amplitudes between the above mentioned values of $a$ (using a $10^{-2}$ decrease step),  converge to different $\bfC_{E_{12}}$ branches, as it is evident from their geometrically distinct corresponding configurations.

%as portrayed in the bottom right panel of Fig. \ref{Figu5}, in the  upper right panel  of Fig. \ref{Figu6}, and in the bottom left panel  of Fig. \ref{Figu6}, showing the dynamics for $a=1.82$, $a=1.55$ and $a=1.46$, respectively.  Such dynamics may indicate that the successive bifurcating branches on  $\bfC_{E_{12}}$ may asymptote to, or even crossing the branch  $\bfC_{E_4}$. 
%%%%%%%%%%%%%%%%%%%%%%%%5

\begin{figure}[tbh!]
	\centering

\includegraphics[scale=0.125]{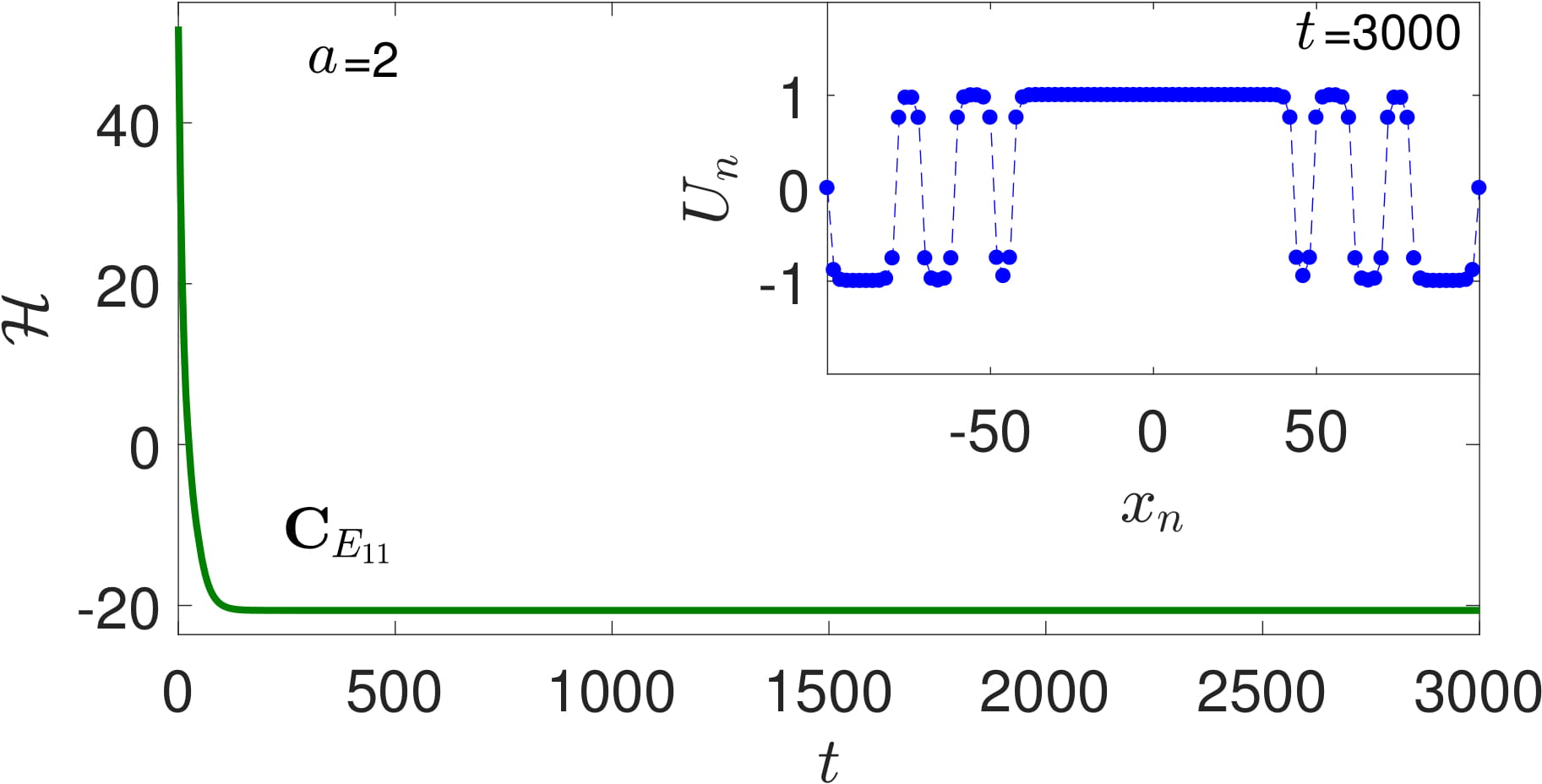}
\quad
\includegraphics[scale=0.125]{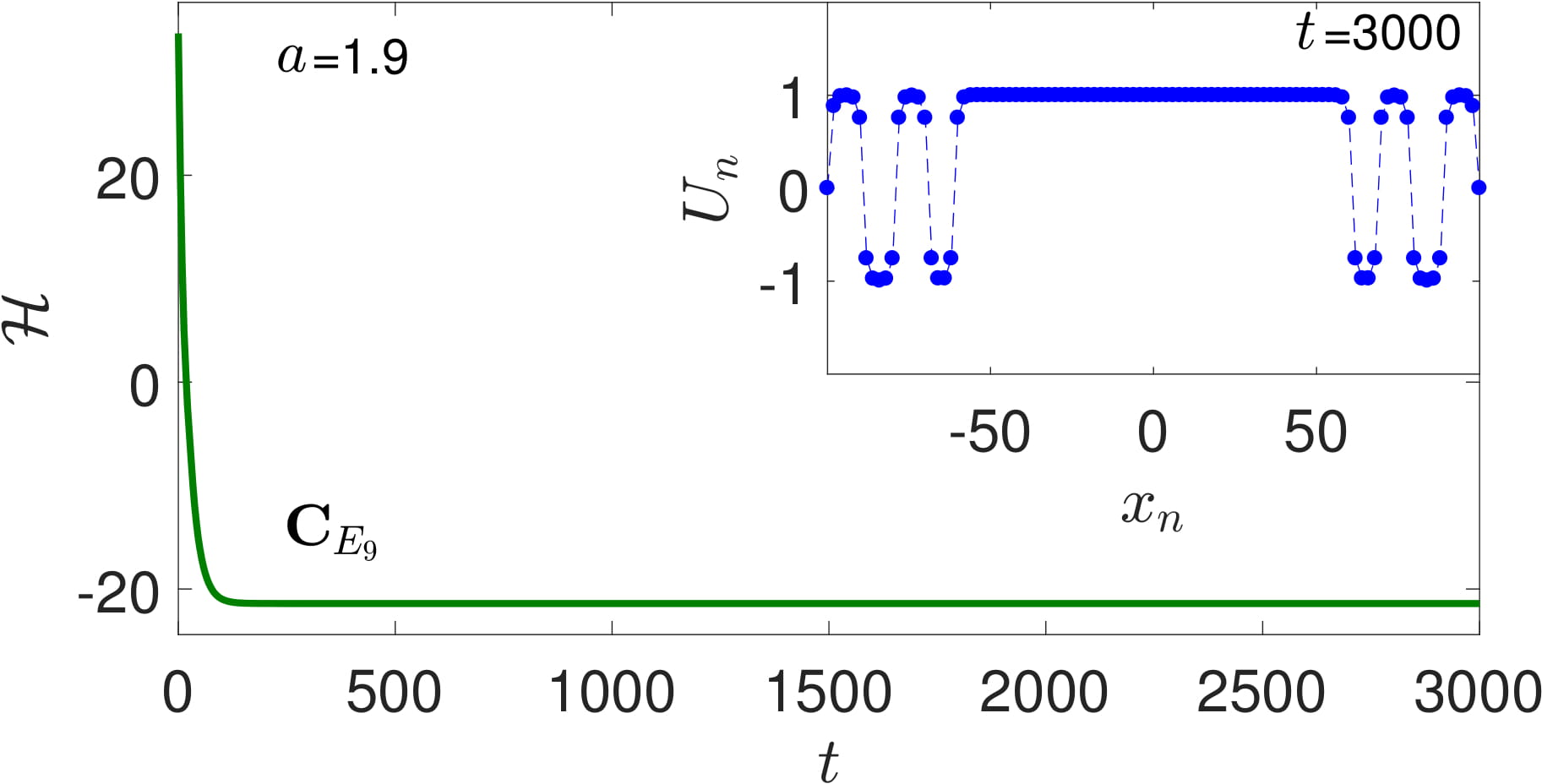}
\\[4ex]
\includegraphics[scale=0.125]{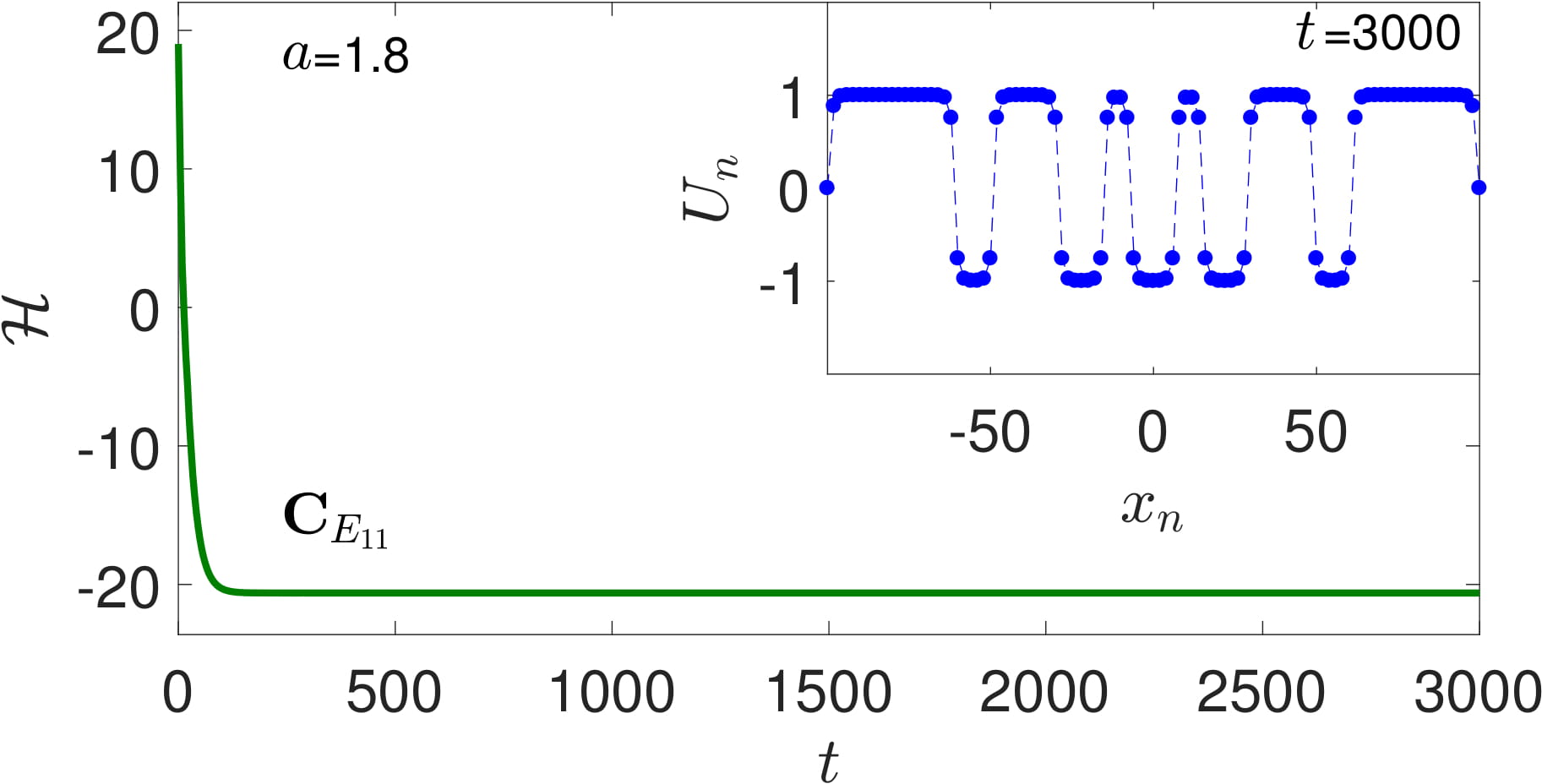}
	\quad	
\includegraphics[scale=0.125]{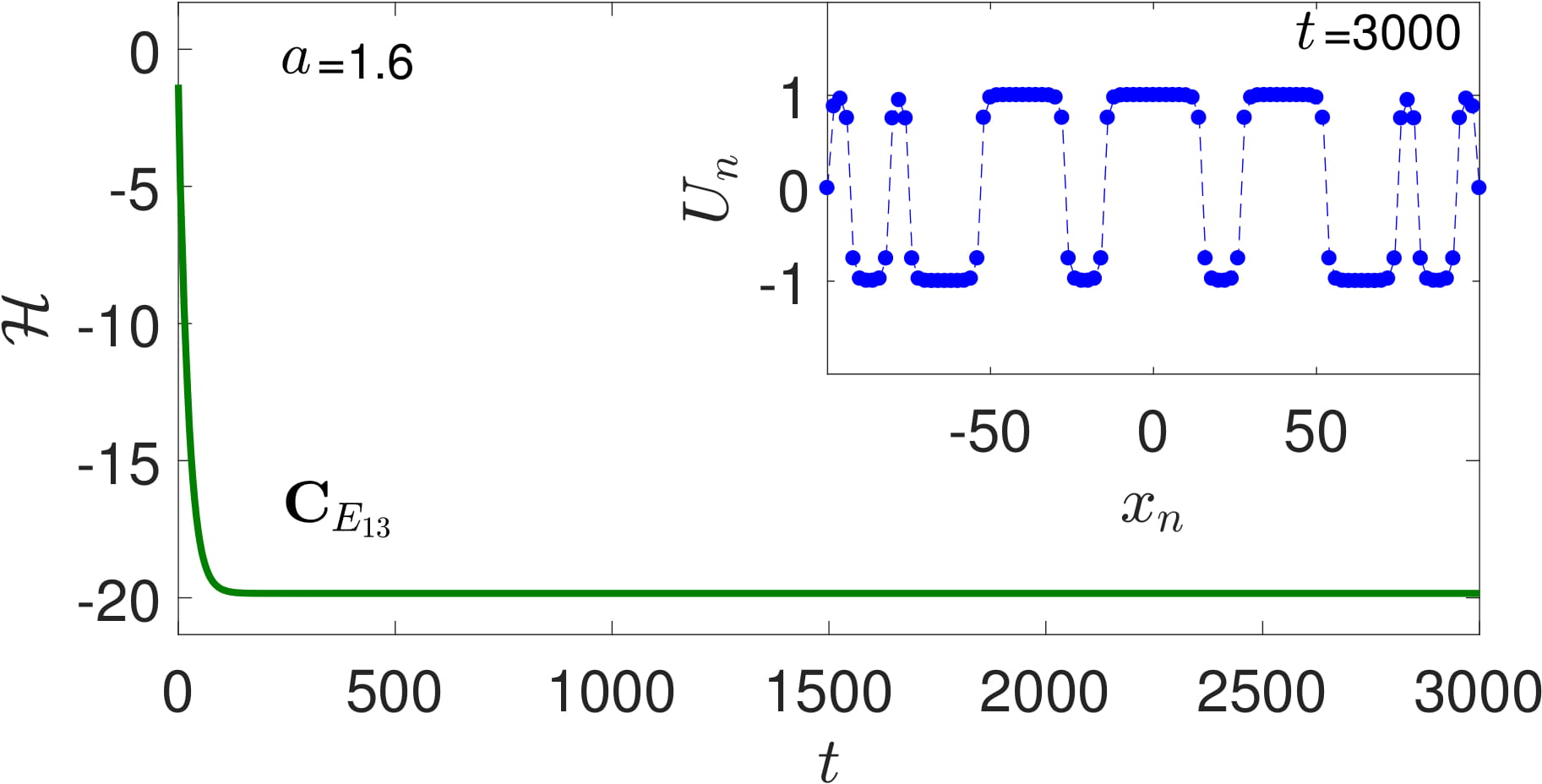}
\\[4ex]
\includegraphics[scale=0.125]{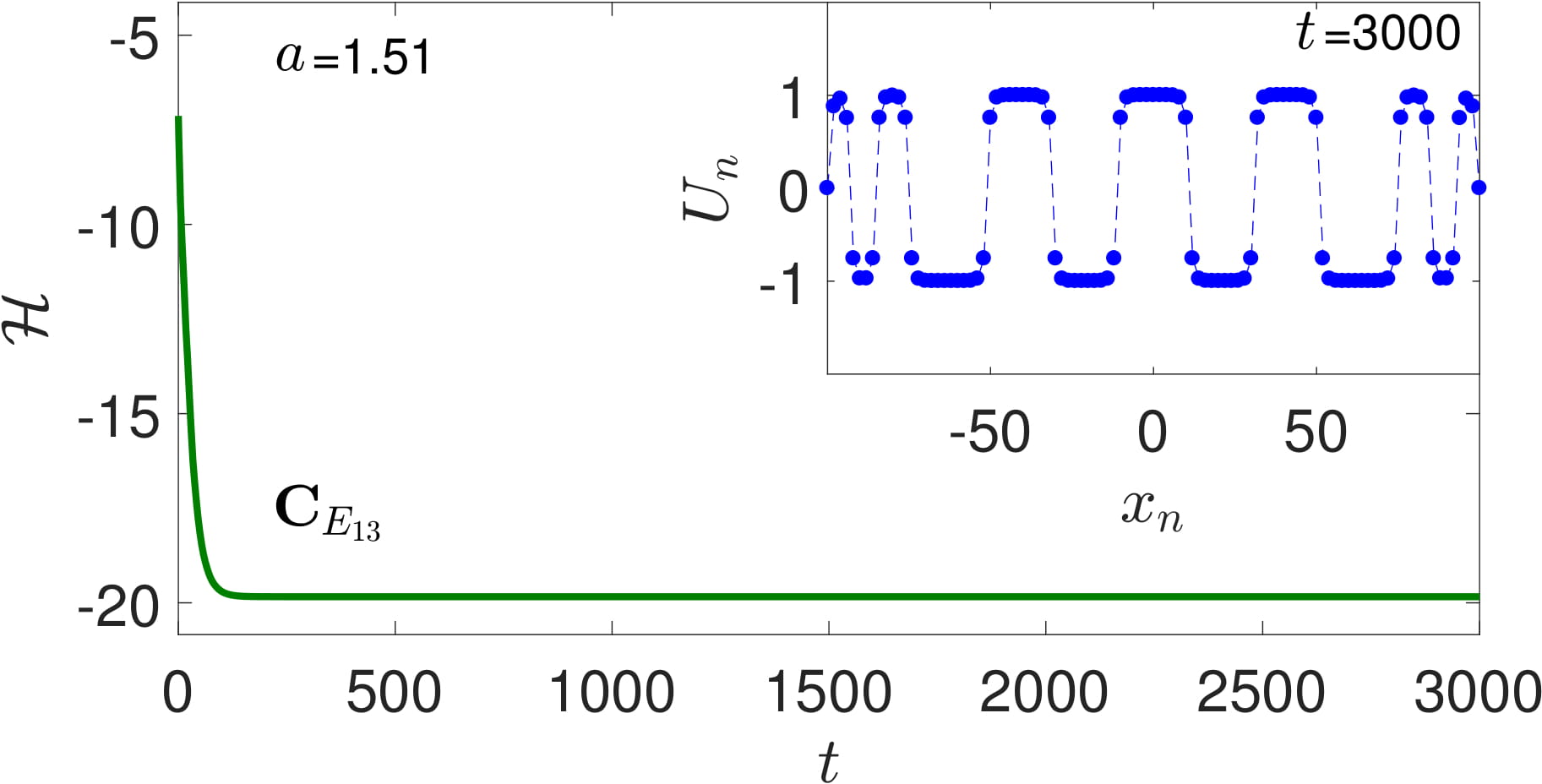}
\quad	
\includegraphics[scale=0.125]{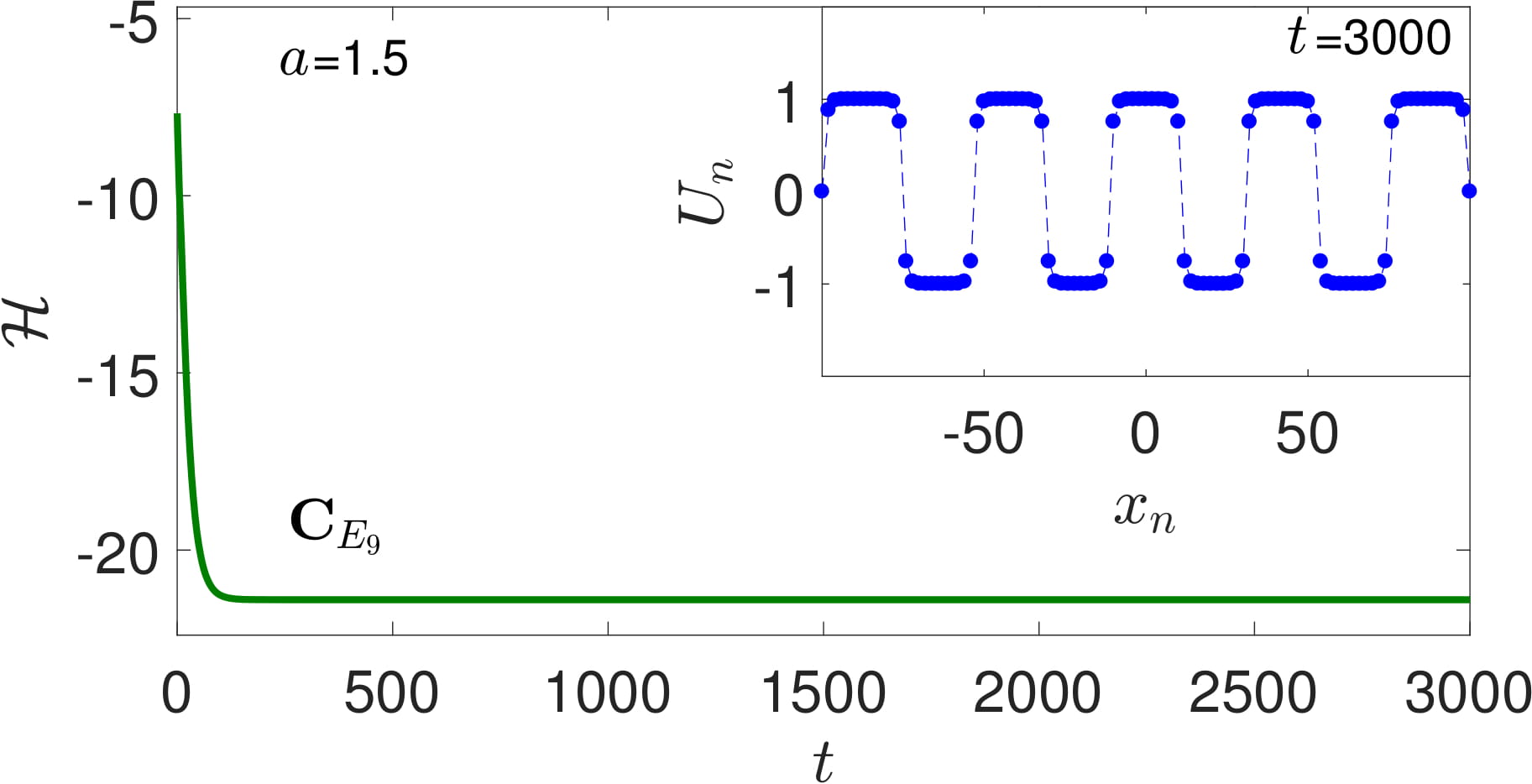} 
	\vspace{0cm}
	\caption{(Color Online)  Dynamics for an initial condition (\ref{ic1}), $U_n(0)\simeq\mathbf{C}_{E_{9}}$, for $\omega_d^2=1$, of varying amplitudes. Descending amplitudes, we observe convergence to equilibria of different branches than $\bfC_{E_{9}}$, and the appearance of amplitude values $a$ for which the solution converges to a nonlinear equilibrium $\overline{\Phi}_{9}\in \bfC_{E_{9}}$. The stability threshold is detected at $a^*=1.5$. Other parameters:  $K=99$, $L=200$, $\beta=1$, $\delta=0.05$.}
	\label{Figu8}
\end{figure}
%%%%%%%%%%%%%%%%%%%%%555
The amplitude value $a^*=1.45$ defines a {\em threshold amplitude} below which  
%for the stability of the equilibrium branch $\bfC_{E_4}$, or at least of a potential successive sub-branch of $\bfC_{E_4}$.  For all $a\leq a^*=1.45$, 
the initial condition $U_n(0)\simeq\bfC_{E_4}$ converges to the equilibrium $\Phi_{4}\in \bfC_{E_4}$, shown in the inset of the  bottom right panel of Fig.~\ref{Figu6}. Observe that the portrayed $\Phi_{4}$-state, was also the attracting state detected in the cases of $a=1.9, 1.85, 1.81$. The showcased dynamics indicate that by selecting a similar initial condition to the branch $\bfC_{E_4}$, and  by reducing its amplitude, we may approach the stable manifold of  $\bfC_{E_4}$. Here we note that, as explained in the previous subsection, for small values of $\omega_d^2$ the $\bfC_{E_4}$ family possesses three unstable eigenvalues; however, for 
%the examined value 
$\omega_d^2=1$ it is linearly stable.
%, a fact indicating the emergence of bifurcations (the systematic study of the succession of emerging bifurcations is beyond the scope of the present work). 
% which means that it underwent three successive until this value of the bifurcating parameter. 
%{\bf PGK: I would suggest to remove the part about successive
%  bifurcations. This only exacerbates the fact that it is relevant
%  to do the bifurcation analysis of the branches, yet we are not doing it.
%  Let us stick to what we know.}
Thus, the similarity of the initial condition to a branch
with a given number of sign-changes, as well as the choice of its amplitude below a certain threshold value, may define a criterion for generating orbits within the stable manifolds of the equilibria associated with that branch.

Another example supporting the above conjecture 
%seems to be also verified by 
concerns the dynamics of initial condition $U_n(0)\simeq\mathbf{C}_{E_{9}}$, for the same value of $\omega_d^2=1$. The relevant dynamics are summarized in Fig. \ref{Figu8}. For decreasing values of amplitudes, we observe convergence to nonlinear equilibria of different branches, possessing a remarkably rich spatial structure:
%, indicative of the complicated successive bifurcations which may interfere in the ultimate convergence to an equilibrium:
for $a=2$ corresponding to norm $\|U_n(0)\|=14.14$, we observe convergence to a steady-state $\Phi_{11}\in \bfC_{E_{11}}$, with $\|\Phi_{11}\|=9.42$ and energy $\mathcal{H}=-20.62$. For $a=1.9$, where $\|U_n(0)\|=13.4$, the solution converges to a steady state $\Phi_{9}\in \bfC_{E_{9}}$ with $\|\Phi_{9}\|=9.53$, energy $\mathcal{H}=-21.4$ and an interesting profile characterized by a big central plateau. For $a=1.8$ corresponding to $\|U_n(0)\|=12.7$, convergence occurs again to a steady-state $\Phi_{11}$, which possesses  a  profile completely different from that of the steady state attained in the case of 
%the initial amplitude 
$a=2$. This again suggests the emergence of a bifurcation. 
%clearly implies that they belong to a different $\bfC_{E_{11}}$ branch.
%{\bf PGK: once again should be thinking of these as different solutions,
%  or as different instantiations of the same solution~?}
For $a=1.6$ ($\|U_n(0)\|=11.3$), the system converges to a $\bfC_{E_{13}}$ steady-state with $\|\Phi_{13}\|=9.3209$ and energy $\mathcal{H}=-19.84$. Decreasing further the amplitude to the value $a=1.51$ (where $\|U_n(0)\|=10.7$), the corresponding solution converges again to a steady-state  $\Phi_{13}\in \bfC_{E_{13}}$ with $\|\Phi_{13}\|=9.3214$. Eventually, the threshold value for the dynamical stability of $\bfC_{E_9}$ was found at amplitude $a^*=1.5$, for which $\|U_n(0)\|=10.6$. 

The convergence to different states in both of the examined examples, as well as the richness of the dynamically emerging states, implies that a complete bifurcation analysis would be useful in order to acquire a more
systematic understanding of the structure, stability and of the basins of attraction of the various steady-states, especially for high values of the nonlinearity. This exceeds the scope of this work and it is left for future investigation.
\section{Conclusions}

In the present work, we have studied the dynamical transitions between the equilibria of the dissipative Klein-Gordon chain supplemented with Dirichlet boundary conditions, and have attempted to shed some light on
the complex energy landscape dictating the corresponding dynamics.

First, we have discussed the convergence to a single, non-trivial equilibrium, as the system falls within the class of a second-order gradient system, for which a discrete version of the \L{}ojasiewicz inequality is applicable. Then, we have applied in the discrete set-up, global bifurcation theory, to prove that global branches of nonlinear equilibria, are bifurcating from the linear eigenstates of the Dirichlet discrete Laplacian. Consequently, we have characterized the bifurcating branches of equilibrium states (and their stability), by their number of sign-changes. As such, the relevant equilibria may define nontrivial topological interpolations between the adjacent minima of the on-site $\phi^4$ potential.

Then, examining direct numerical simulations,
in conjunction with the linear stability analysis of the branches, we managed to reveal important features about the complicated structure of the convergence dynamics, depending on discreteness, nonlinearity
and dissipation. In the numerical experiments, we first 
considered spatially extended initial conditions, sharing the number of sign-changes with 
the equilibria of a specific branch. Varying the strength of nonlinearity and discreteness, 
we revealed the richness 
of the dynamical transition of the convergence dynamics to equilibrium configurations 
of distinct branches, and the variety of the potential
spatial structure of equilibria
even within elements of the same branch. We also gained insight on the role of the amplitude 
of the initial condition towards the
convergence to different final states on the branches of the emerging
complex bifurcation diagram.
%, for the potential dynamical stability of specific branches. Our results if summarized, revealed strong indications for important characteristics of a significantly complicated global bifurcation diagram.

%As a second example of initial data, 
%We also briefly studied the dynamics corresponding to localized initial data, 
%and identified threshold values on the damping strength for the transition of 
%convergence dynamics to equilibria of distinct branches. Again, discreteness 
%proved to have a prominent role on the definition and quantification 
%of the relevant damping thresholds. 

Summarizing, in a simple dissipative lattice describing transition state phenomenology, 
we have shown that 
%A summary of our findings is the following: we managed to show, 
%that in a simple dissipative lattice describing 
%the simplest damped discrete model for the study of transition state phenomenology, 
while the global attractor is trivial with respect 
to the phase-space topology, it may give rise to a highly non-trivial energy landscape,  
%the latter 
defined by the rich structure of the equilibrium set, and the variety of possibilities 
for convergence dynamics.  
 
The results presented in this work may pave the way for future work in many interesting directions. 
A first important one is to investigate further --both analytically and numerically-- 
the structure of the global bifurcation diagram, motivated by the strong indications 
for some of its fundamental features showcased in the present study. 
Another natural possibility is to extend the present 
considerations to higher-dimensional settings, where the role of the
	coupling will be more significant, due to the geometry enforcing additional
	neighbors. 
It is also relevant to consider, via the methods and diagnostics
	presented herein, the phenomenology of different types of potentials
	including, e.g., the Morse potential, which is relevant to DNA denaturation
	as modeled by the Peyrard-Bishop model~\cite{PB}. We speculate that
	the techniques used herein may turn out to be more broadly relevant to 
	such problems. 
        %hydrogen-bonds and, more generally, aspects
	%of molecular dislocation. 
Finally, it is certainly relevant to investigate the convergence dynamics 
to other second-order, or higher-order damped lattices, e.g., relevant to the dynamics of nonlinear metamaterials \cite{P4,P7}. Such studies are currently in progress and will be presented
	in future publications.
%%%%%%%%%%%%%%%%%%%55
\appendix
\section{Discrete version of the \L{}ojasiewicz inequality}
\label{appen1}
\setcounter{section}{5}
In this appendix, we discuss briefly the discrete version of the \L{}ojasiewicz inequality, stated in Lemma \ref{LINEQ}. It is actually a direct corollary of the following abstract result \cite[Corollary 5.5, pg. 2839]{Jen2011}, for generalized analytic functions on Hilbert spaces: We consider $\mathcal{V}$ and $\cx$ two real Hilbert spaces with the inclusion $\mathcal{V}\hookrightarrow \cx$ being dense and continuous. The topological dual of $\mathcal{V}$ is denoted by $\mathcal{V}^{*}$. Thus, we may consider the evolution triple $\mathcal{V}\hookrightarrow \cx\equiv \cx^*\hookrightarrow \mathcal{V}^*$, still with continuous and dense inclusions. 
\begin{theorem}
\label{LIG}
Let $F:\mathcal{U}\rightarrow\mathbb{R}$ be an analytic
functional where $\mathcal{U}\subset \mathcal{V}$ is an open neighborhood of $\phi$, and assume that the (Fr\'{e}chet) derivative of $F$ at $\phi$, $DF(\phi) = 0$. We denote by $\mathcal{L}:\mathcal{V}\rightarrow \mathcal{V}^{*}$, the linear operator, defined by the linearization of $DF:\mathcal{V}\rightarrow \mathcal{V}^{*}$, at $\phi$. We assume the following
two conditions: (i) $\mathrm{ker}\mathcal{L}$ is finite-dimensional and (ii) for some linear compact operator $\mathcal{K}:\mathcal{V}\rightarrow \mathcal{V}^{*}$, the operator $\mathcal{L} +\mathcal{K}$ is invertible.

Then, there exist $\theta\in (0, 1/2)$, a neighborhood $\mathcal{Q}$ of $\phi$ and $\nu_0 >0$, for which 
\begin{eqnarray}
\label{abi}
\forall u\in \mathcal{Q},\;\;||DF(u)||_{\mathcal{V}^{*}}\geq \nu_0|F(u)-F(\phi)|^{1-\theta}.
\end{eqnarray}	
\end{theorem}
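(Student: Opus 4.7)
The plan is to prove the abstract Łojasiewicz--Simon gradient inequality by the standard Lyapunov--Schmidt reduction to a finite-dimensional analytic problem, where the classical Łojasiewicz inequality applies, and then to lift the inequality back to $\mathcal{V}$. The structural input from hypotheses (i)--(ii) is precisely that $\mathcal{L}=\mathcal{L}+\mathcal{K}-\mathcal{K}$ is a Fredholm operator of index zero from $\mathcal{V}$ to $\mathcal{V}^{*}$, since $\mathcal{L}+\mathcal{K}$ is an isomorphism and $\mathcal{K}$ is compact. Consequently $\mathrm{ker}\,\mathcal{L}$ and a topological complement of $\mathrm{ran}\,\mathcal{L}$ in $\mathcal{V}^{*}$ have the same finite dimension $m$.

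First, I would set up the splitting. Let $\Pi:\mathcal{V}\to\mathrm{ker}\,\mathcal{L}$ be a continuous projection and let $\Pi^{*}:\mathcal{V}^{*}\to(\mathrm{ker}\,\mathcal{L})^{*}\simeq\mathrm{ker}\,\mathcal{L}$ denote the dual/transpose projection onto a complement of $\mathrm{ran}\,\mathcal{L}$. Write $u=\phi+v+w$, with $v=\Pi(u-\phi)\in\mathrm{ker}\,\mathcal{L}$ and $w=(I-\Pi)(u-\phi)\in\mathrm{ker}\,\Pi$. Decompose the equation $DF(\phi+v+w)=0$ into its ``range'' and ``kernel'' parts:
\begin{eqnarray*}
(I-\Pi^{*})\,DF(\phi+v+w) &=& 0,\\
\Pi^{*}\,DF(\phi+v+w) &=& 0.
\end{eqnarray*}
The derivative of the left-hand side of the first equation with respect to $w$ at $(v,w)=(0,0)$ is $(I-\Pi^{*})\mathcal{L}$ restricted to $\mathrm{ker}\,\Pi$, which is by construction an isomorphism onto $\mathrm{ran}\,\mathcal{L}$. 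Since $F$ is analytic, $DF$ is real-analytic $\mathcal{V}\to\mathcal{V}^{*}$, and the analytic Implicit Function Theorem yields a real-analytic map $w=w(v)$ defined on a neighborhood $\mathcal{O}\subset\mathrm{ker}\,\mathcal{L}$ of $0$, with $w(0)=0$ and $(I-\Pi^{*})\,DF(\phi+v+w(v))\equiv 0$.

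Next, I would define the \emph{reduced functional}
\begin{eqnarray*}
 N(v) \;:=\; F\bigl(\phi+v+w(v)\bigr),\qquad v\in\mathcal{O}\subset\mathrm{ker}\,\mathcal{L}.
\end{eqnarray*}
Because $F$ and $w(\cdot)$ are analytic, so is $N$, and it is a function on the finite-dimensional Euclidean space $\mathrm{ker}\,\mathcal{L}\simeq\mathbb{R}^{m}$. Using $(I-\Pi^{*})DF(\phi+v+w(v))=0$ together with $w'(v):\mathrm{ker}\,\mathcal{L}\to\mathrm{ker}\,\Pi$ mapping into the kernel of $\Pi$, the chain rule gives
\begin{eqnarray*}
 \nabla N(v)\,[h] \;=\; \langle DF(\phi+v+w(v)),\, h+w'(v)h\rangle \;=\;\langle\Pi^{*}DF(\phi+v+w(v)),h\rangle,
\end{eqnarray*}
so that, up to the fixed isomorphism between $(\mathrm{ker}\,\mathcal{L})^{*}$ and $\Pi^{*}\mathcal{V}^{*}$,
\begin{eqnarray*}
 \|\nabla N(v)\| \;\leq\; C_{1}\,\|DF(\phi+v+w(v))\|_{\mathcal{V}^{*}}.
\end{eqnarray*}
This is the step where the Fredholm structure and the analytic implicit function theorem do all the real work, and it is where I expect the main technical difficulty: verifying that $w(v)$ is genuinely real-analytic with uniform estimates on a fixed neighborhood, so that the chain rule identity above is valid pointwise.

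Finally, I would apply the classical finite-dimensional Łojasiewicz inequality of Łojasiewicz (1959) to the real-analytic function $N:\mathcal{O}\to\mathbb{R}$ at the critical point $v=0$ (note $\nabla N(0)=\Pi^{*}DF(\phi)=0$): there exist $\theta\in(0,1/2)$, a neighborhood $\mathcal{O}'\subset\mathcal{O}$ of $0$ and $c>0$ with
\begin{eqnarray*}
 \|\nabla N(v)\| \;\geq\; c\,|N(v)-N(0)|^{1-\theta},\qquad v\in\mathcal{O}'.
\end{eqnarray*}
Combining this with the previous estimate and with the identity $N(v)-N(0)=F(u)-F(\phi)$ (since $u=\phi+v+w(v)$) yields, for $u$ in a sufficiently small neighborhood $\mathcal{Q}$ of $\phi$ where $u\mapsto v=\Pi(u-\phi)$ stays in $\mathcal{O}'$ and $(I-\Pi)(u-\phi)=w(v)$ is forced by the construction (which we may arrange by restricting $\mathcal{Q}$ and, if necessary, using the gradient flow trick, or by noting that along any sequence used for (\ref{LSin}) the orbit lies in this graph),
\begin{eqnarray*}
 \|DF(u)\|_{\mathcal{V}^{*}} \;\geq\; C_{1}^{-1}\,\|\nabla N(v)\| \;\geq\; \nu_{0}\,|F(u)-F(\phi)|^{1-\theta},\qquad \nu_{0}:=c/C_{1},
\end{eqnarray*}
which is exactly (\ref{abi}). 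For the application in Lemma~\ref{LINEQ}, one takes $\mathcal{V}=\mathcal{X}=\ell^{2}_{K+2}$ (finite-dimensional, so both hypotheses (i)--(ii) are automatic), $F$ the analytic functional in (\ref{eq010}), and $DF=J$ from (\ref{eq026}); the dual norm reduces to the $\ell^{2}$ norm, giving (\ref{LSin}).
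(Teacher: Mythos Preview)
The paper does not prove Theorem~\ref{LIG}; it quotes the result from Haraux--Jendoubi \cite{Jen2011} (their Corollary~5.5) and then only verifies, in the paragraph following the statement, that hypotheses (i)--(ii) hold in the concrete finite-dimensional setting needed for Lemma~\ref{LINEQ}. So there is no proof in the paper to compare against; you are supplying one where the authors simply cite.

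Your outline is the standard Lyapunov--Schmidt route and is correct up to the last paragraph, but there is a genuine gap at the very end. The identity $N(v)-N(0)=F(u)-F(\phi)$ and the bound $\|\nabla N(v)\|\le C_{1}\|DF(u)\|_{\mathcal V^{*}}$ are valid only for points $u$ lying on the graph $\{\phi+v+w(v):v\in\mathcal O'\}$, which is an $m$-dimensional submanifold. The theorem, however, asserts (\ref{abi}) for \emph{every} $u$ in an open neighborhood $\mathcal Q$ of $\phi$. Your parenthetical remedies do not close this: no open $\mathcal Q$ is contained in the graph, and the dynamical orbit used in Theorem~\ref{L04} has no reason to lie on the graph either. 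What is missing is the off-graph comparison step: for a general $u=\phi+v+w$ near $\phi$ one shows
\[
\|w-w(v)\|_{\mathcal V}\;\le\; C\,\|(I-\Pi^{*})DF(u)\|_{\mathcal V^{*}}\;\le\; C\,\|DF(u)\|_{\mathcal V^{*}}
\]
by inverting $(I-\Pi^{*})\mathcal L|_{\ker\Pi}$, and then uses Taylor expansion of $F$ and $DF$ between $u$ and $\phi+v+w(v)$ to obtain $|F(u)-N(v)|\le C\|DF(u)\|_{\mathcal V^{*}}\,\|w-w(v)\|_{\mathcal V}$ and $\|DF(\phi+v+w(v))\|_{\mathcal V^{*}}\le C\|DF(u)\|_{\mathcal V^{*}}$. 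These two estimates, combined with the finite-dimensional \L{}ojasiewicz inequality for $N$ and the elementary inequality $(a+b)^{1-\theta}\le a^{1-\theta}+b^{1-\theta}$, give (\ref{abi}) for all $u\in\mathcal Q$. This is exactly the argument in \cite{Jen2011}; without it your proof establishes the inequality only on a thin set.

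That said, for the only use the paper makes of the theorem --- Lemma~\ref{LINEQ} on the finite-dimensional space $\ell^{2}_{K+2}$ --- none of this machinery is needed: $F$ in (\ref{eq010}) is a polynomial on $\mathbb R^{K+2}$, so the classical \L{}ojasiewicz inequality \cite{LL1,LL2} applies directly, as you correctly note in your last sentence.
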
 
Due to the finite-dimensionality of the phase space $\mathcal{V}=\ell^2_{K+2}\equiv \mathbb{R}^{K+2}\equiv \cx\equiv \mathcal{V}^*$, Lemma \ref{LINEQ}, follows as a straight-forward application of Theorem \ref{LIG}, if implemented to the functionals $F$ defined in (\ref{eq010}), and  $J$ defined in (\ref{eq026}). First, it follows by repeating the lines of \cite{JNLS2013,K1}, that $J(U)=DF(U)$, that is, the derivative of $F$. Then clearly, any solution $\Phi$ of the stationary problem (\ref{eq063})-(\ref{eq063A}) satisfies $J(\Phi)=\mathbf{0}$. 

Next, the linearization of $J$ on the equilibrium $\Phi$, is the operator $\mathcal{L}:\ell^2_{K+2}\rightarrow \ell^2_{K+2}$,
\begin{eqnarray*}
\label{linA}
\mathcal{L}(U_n)=-k\Delta_d U_n-f'(\Phi_n)U_n,\;\;\forall\;\; U\in\ell^2_{K+2},
\end{eqnarray*}
where $f$ is given in (\ref{eq017}). It can be easily checked, that it is well defined due to the equivalence of norms (\ref{eq07})  [or by applying (\ref{eq06})] for every equilibrium $\Phi$. Evidently, its kernel $\mathrm{ker}\mathcal{L}$ is finite dimensional, thus condition (i) is satisfied.
To check condition (ii), we consider the operator $\mathcal {K}=\lambda I$, for some suitable $\lambda>0$, and $I:\ell^2_{K+2}\rightarrow \ell^2_{K+2}$-the identity mapping. Again, since $\ell^2_{K+2}$ is finite-dimensional, $\mathcal {K}$ is compact.  Furthermore, there exists $\lambda>0$, such that the operator $\Lambda=\mathcal{L}+\mathcal {K}$ is coercive, and hence, invertible. Indeed, we observe that 
\begin{eqnarray*}
\left(\Lambda(U),U\right)_{\ell^2}&=& k(-\Delta_dU,U)_{\ell^2}-\omega_d^2\sum_{n=0}^{K+1}|U_n|^2+3\omega_d^2\beta\sum_{n=0}^{K+1}|\Phi_n|^2|U_n|^2+\lambda\sum_{n=0}^{K+1}|U_n|^2\\
&&\geq E_1\sum_{n=0}^{K+1}|U_n|^2-\omega_d^2\sum_{n=0}^{K+1}|U_n|^2+\lambda\sum_{n=0}^{K+1}|U_n|^2.
\end{eqnarray*}
Then, by choosing $\lambda>\omega_d^2$, the coercivity condition $\left(\Lambda(U),U\right)_{\ell^2}>\Omega ||U||_{\ell^2}^2$, is satisfied for $\Omega=E_1+(\lambda-\omega_d^2)>0$. Hence, both conditions (i) and (ii) of Theorem \ref{LIG}, being verified, the discrete inequality (\ref{LSin}) of Lemma  \ref{LINEQ}, readily follows, by applying the inequality (\ref{abi}), to the functionals discussed above.

Let us recall that the prototypical \L{}ojasiewicz inequality \cite{LL1,LL2} links the norm of the analytic function $F:\mathbb{R}^N\rightarrow\mathbb{R}$ (potential) in the neighborhood $\mathcal{Q}$ of a critical point $\phi\in\mathbb{R}^N$, with the norm of its gradient $\nabla F(x)$, according to the  estimate 
\begin{eqnarray}
\label{abi2}
||\nabla F(x)||\geq |F(x)-F(\phi)|^{1-\theta},\;\;\forall\;\;x\in \mathcal{Q},
\end{eqnarray}
and some $\theta\in (0, 1/2)$. 
Thus, its discrete version (\ref{LSin}), as derived by the application of Theorem \ref{LIG}, is an extension  of (\ref{abi2}) in finite nonlinear coupled lattices.
\section*{Acknowledgments}
%The author K. Vetas 
%K.V. gratefully acknowledges the support of the ``$\Upsilon\Pi\mathrm{ATIA}$ Doctoral Fellowship Program'' 
%of the Research Unit of the University of the Aegean.
%
%The authors  D. J. Frantzeskakis, N. I. Karachalios, P. G. Kevrekidis and V. Koukouloyannis,
The authors D.J.F., N.I.K., P.G.K., and V.K. 
acknowledge the support by NPRP grant {\#} [9-329-1-067] from Qatar National Research Fund (a member of Qatar Foundation). The findings achieved herein are solely the responsibility
of the authors.
The authors D.J.F. and P.G.K. gratefully acknowledge the support of the ``Greek Diaspora Fellowship Program'' 
of Stavros Niarchos Foundation.
The author K.V. gratefully acknowledges the support of the ``$\Upsilon\Pi\mathrm{ATIA}$ Doctoral Fellowship Program'' 
of the Research Unit of the University of the Aegean.

\bibliographystyle{amsplain}

\begin{thebibliography}{10}

\bibitem{speight} J. M. Speight, {\em A discrete system without a Peierls-Nabarro barrier}, Nonlinearity {\bf 10}, 1615--1625 (1997).

\bibitem{physdpgk} P.G. Kevrekidis, {\em On a class of discretizations of Hamiltonian nonlinear partial differential equations}, Phys. D {\bf 183}, 68--86 (2003).

\bibitem{pelinovsky} O. F. Oxtoby, D. E. Pelinovsky, and I.V. Barashenkov, {\em Travelling kinks in discrete $\phi^4$ models},
Nonlinearity {\bf 19}, 217--235 (2005).

\bibitem{barash} I. V. Barashenkov, O. F. Oxtoby, and D. E. Pelinovsky, {\em Translationally invariant discrete kinks from one--dimensional maps},
Phys. Rev. E {\bf 72}, 035602(R), (5pp) (2005).

\bibitem{dmitrievpre} I. Roy, S. V. Dmitriev, P. G. Kevrekidis, and A. Saxena, {\em Comparative study of different discretizations of the $\phi^4$ model},
Phys. Rev. E {\bf 76}, 026601, (15pp) (2007).

\bibitem{pgk_dmitriev} S. V. Dmitriev, P.G. Kevrekidis, A. Khare, and A. Saxena, {\em Exact static solutions to a translationally invariant discrete $\phi^4$ model},
J. Phys. A {\bf 40}, 6267--6286 (2007).

\bibitem{pgk_dmitriev2} S. V. Dmitriev, A. Khare, P. G. Kevrekidis,
 A. Saxena, and L. Had\v{z}ievski,  {\em High-speed kinks in a generalized discrete $\phi^4$ model}, Phys. Rev. E {\bf 77}, 056603, (10pp) (2008).
  
\bibitem{dkc0} D. K. Campbell, J. F. Schonfeld, C. A. Wingate, {\em Resonance structure in kink--antikink interactions in $\phi^4$ theory},
Phys. D {\bf 9}, 1--32 (1983). 

\bibitem{anninos}  P. Anninos, S. Oliveira, and R. A. Matzner, {\em Fractal structure in the scalar $\lambda(\phi^2-1)^2$ theory},
Phys. Rev. D {\bf 44}, 1147--1160 (1991).

\bibitem{DPbook} T. Dauxois and M. Peyrard, {\it Physics of solitons} 
(Cambridge University Press, Cambridge, 2010).
%
\bibitem{Panosbook} P. G. Kevrekidis, {\it The Discrete Nonlinear Schrödinger Equation: Mathematical Analysis, Numerical Computations and Physical Perspectives} (Springer-Verlag, Heidelberg, 2009).
%
\bibitem{Chrisbook} R. K. Dodd, J. C. Eilbeck, J. D. Gibbon, and H. C.  Morris,
{\em Solitons and nonlinear wave equations} (Academic Press, New York, 1982).
%
\bibitem{reviewsA} S. Flach and C. R. Willis, {\em Discrete Breathers}, Phys. Rep. {\bf 295}, 181--264 (1998).
%
\bibitem{reviewsB} D. Hennig and G. Tsironis, {\em Wave transmission in nonlinear lattices}, Phys. Rep. {\bf 307}, 333--432 (1999).
%
\bibitem{reviewsC} S. Flach and A. V. Gorbach, {\em Discrete breathers--advances in theory and applications},
Phys. Rep. {\bf 467}, 1--116 (2008).

\bibitem{reviewsD} P. G. Kevrekidis, {\em Non--linear waves in lattices: past, present, future}, IMA J. Appl. Math. \textbf{76},
%no. 3,
389--423 (2011).
%
\bibitem{Dirk1} D. Hennig, S. Fugmann, L. Schimansky-Geier, and P. H{\"a}nggi, {\em Self--organized escape of oscillator chains in nonlinear potentials},
Phys. Rev. E \textbf{76}, 041110, (12pp) (2007).

\bibitem{Dirk2} D. Hennig, L. Schimansky-Geier, and P. H{\"a}nggi, {\em Self--organized, noise--free escape of a coupled nonlinear oscillator chain},
Europhys. Lett. \textbf{78}, 20002, (7pp) (2007).
%
\bibitem{PhysD2013} V. Achilleos, A. Álvarez, J. Cuevas, D. J. Frantzeskakis, N. I. Karachalios, P. G. Kevrekidis, and B. S\'{a}nchez-Rey, {\em Escape dynamics in the discrete repulsive $\phi^4$ model}, Physica D \textbf{244}, 1--24 (2013).
%
\bibitem{combs} J. A. Combs and S. Yip, {\em Single--kink dynamics in a one-dimensional atomic chain: A nonlinear atomistic theory and numerical simulation}, Phys. Rev. B \textbf{28}, 6873--6885 (1983).
%
\bibitem{sega} P. Prelov\v{s}ek and I. Sega, {\em Domain--wall--like excitations in a discrete one-dimensional anharmonic lattice model}, J. Phys. C \textbf{14},  5609--5614 (1981). 
%
\bibitem{chris1} A.V. Savin, Y. Zolotaryuk  and J.C. Eilbeck, {\em Moving kinks and nanopterons in the nonlinear Klein--Gordon lattice}, Physica D \textbf{138}, 267--281 (2000).
%
%\bibitem{Panos2003} P. G. Kevrekidis, Physica D \textbf{183} (2003), 68.
%
%\bibitem{igor} I. V. Barashenkov  O. F. Oxtoby and D. E. Pelinovsky, Phys. Rev E  \textbf{72} (2005), 035602 (R).
%
\bibitem{panos0} S. V. Dmitriev, P. G. Kevrekidis and N. Yoshikawa, {\em Discrete Klein--Gordon models with static kinks free of the Peierls--Nabarro potential} J. Phys.
A.: Math. Gen. \textbf{38}, 7617--7627 (2005).
%
\bibitem{ioss1} G. Iooss and G. James, {\em Localized waves in nonlinear oscillator chains}, Chaos \textbf{15}, 015113, (16pp) (2005).
%
\bibitem{ioss2} G. Iooss and D. Pelinovsky, {\em Normal form for travelling kinks in discrete Klein--Gordon lattices}, Physica D \textbf{216}, 327--345 (2006).
%
\bibitem{Kramers} H. Kramers, {\em Brownian motion in a field of force and the diffusion model of chemical reactions}, Physica \textbf{7}, 284--304 (1940).
%
\bibitem{PF} P. Fife, {\em Dynamics of internal layers and diffusive interfaces}
(Society for Industrial and Applied Mathematics, 1988).
%
\bibitem{KMI} K. Mischaikow, {\em Global Asymptotic Dynamics of Gradient-Like Bistable Equations}, SIAM J. Math. Anal. \textbf{26},
%no. 5,
1199--1224 (1995).
%
\bibitem{YN} Y. Nishiura, {\em Far-from-equilibrium dynamics} (American Mathematical Society, 2002).
%
\bibitem{AS} A. C. Scott, {\em Nonlinear Universe} (Springer-Verlag, 2007).
%
\bibitem{comte} J. C. Comte, P. Marqui\'{e}, and M. Remoissenet, {\em Dissipative lattice model with exact traveling discrete kink--soliton solutions:
Discrete breather generation and reaction diffusion regime}, Phys. Rev. E \textbf{60}, 7484--7489 (1999).
%
\bibitem{Boris1} A. V. Ustinov, M. Cirillo and B. A. Malomed, {\em Fluxon dynamics in one--dimensional Josephson-junction arrays}, Phys. Rev. B \textbf{47}, 8357--8360 (R) (1993).
%
\bibitem{Boris2} G. Filatrella and B. A. Malomed, {\em The alternating--current--driven motion of dislocations in a
weakly damped Frenkel--Kontorova lattice}, J. Phys.: Condens. Matter \textbf{11},  7103--7114 (1999).
%
%\bibitem{jesus1} J. Cuevas, L. Q. English, P. G. Kevrekidis, and M. Anderson, Phys. Rev. Lett. \textbf{102}, (2009), 224101.
%%
\bibitem{temam} R. Temam, {\it Infinite Dimensonal Dynamical Systems in Mechanics and Physics}
(Springer-Verlag, Berlin, 1997).
%
\bibitem{Jen2011} A. Haraux and M. A. Jendoubi, {\em  The \L{}ojasiewicz gradient inequality in the infinite--dimensional Hilbert space framework}, J. Funct. Anal. \textbf{260}, 2826--2842 (2011).
%
\bibitem{Jen1998} A. Haraux and M. A. Jensoubi, {\em Convergence of Solutions of Second-Order Gradient--Like Systems with Analytic Nonlinearities}, J. Differential Equations \textbf{144}, 313--320 (1998).
% 
%
\bibitem{JNLS2013} N. I. Karachalios, B. S\'{a}nchez-Rey,
P. G. Kevrekidis, and J. Cuevas, {\em Breathers for the Discrete Nonlinear Schr\"{o}dinger Equation with Nonlinear Hopping},  J. Nonlinear Sci. \textbf{23}, 205--239 (2013).
%
\bibitem{RB71} P. H. Rabinowitz,
{\em Some global results for nonlinear eigenvalue problems},
J. Funct. Anal. {\bf 7},  487--513 (1971).
%--513 (1971).
\bibitem{Smo94} J. Smoller, {\it Shock Waves and Reaction-Diffusion Equations}
%Grundlehren der mathematischen Wissenschaften 258,
(Springer-Verlag, New-York, 1994).
%
\bibitem{ZeiV1}
E. Zeidler, {\it Nonlinear Functional Analysis and its Applications I: Fixed Point Theorems} 
(Springer-Verlag, New-York, 1986).
%
\bibitem{PB} M. Peyrard and A. R. Bishop, {\em Statistical mechanics of a nonlinear model for DNA denaturation},
Phys. Rev. Lett. {\bf 62}, 2755--2758 (1989).
%
\bibitem{P4}  V Koukouloyannis,  P G Kevrekidis, G P Veldes, D J Frantzeskakis, D. DiMarzio, X. Lan and V. Radisic, {\em Bright breathers in nonlinear left-handed metamaterial lattices},
Phys. Scr. {\bf 93}, 025202, (10pp) (2018).

%{\em From Solitons to Rogue Waves in Nonlinear Left-Handed Metamaterials}. \url{https://arxiv.org/abs/1612.00031}. 
%
\bibitem{P7} G. P. Veldes, J. Cuevas, P. G. Kevrekidis, and D. J. Frantzeskakis, {\em Coupled backward--and forward--propagating solitons in a composite right--and left--handed transmission line}, Phys. Rev. E \textbf{88}, 013203, (15pp) (2013).
%
\bibitem{K1} N. I. Karachalios, {\em A remark on the existence of breather solutions for the discrete nonlinear Schr\"{o}dinger equation in infinite lattices: the case of site--dependent anharmonic parameters}, Proc. Edinb. Math. Soc. \textbf{49},
%no. 1,
115--129 (2006).

\bibitem{LL1} S. \L{}ojasiewicz, {\em Une propri\'{e}t\'{e} topologique des sous-ensembles analytiques r\'{e}els, in: Colloques internationaux du
CNRS: Les \'{e}quations aux d\'{e}rivées partielles}, Paris, 1962, Editions du CNRS, Paris, 1963, pp. 87--89.
%
\bibitem{LL2} S. \L{}ojasiewicz, {\em Ensembles Semi-Analytiques, Publ. Inst. Hautes Etudes Sci., Bures-sur-Yvette, 1965}, preprint.



\end{thebibliography}

\end{document}